\RequirePackage{fix-cm}

\documentclass[smallcondensed]{svjour3}
\smartqed  
\usepackage[T1]{fontenc}
\usepackage{times,graphicx,amsmath,amssymb,amsfonts,enumerate,cite,float,marvosym,geometry,needspace}
\usepackage{placeins}
\usepackage[colorlinks,bookmarksopen,bookmarksnumbered,citecolor=blue,linkcolor=blue,urlcolor=blue]{hyperref}
\allowdisplaybreaks[2]
\newcommand{\ii}{\mathrm i}
\newcommand{\dd}{\mathrm d}
\newcommand{\ww}{\mathbf w}
\newcommand{\qq}{\mathbf q}
\newcommand{\Id}{\mathrm I}
\newcommand{\Sch}{\mathcal S}
\newcommand{\Cauchy}{\mathcal C}
\newcommand{\diag}{\operatorname{diag}}
\newcommand{\adj}{\operatorname{adj}}

\newcommand{\re}{\operatorname{Re}}
\newtheorem{assumption}{Assumption}

\begin{document}
\begin{sloppypar}

\title{Global well-posedness and long-time asymptotics for the focusing coupled Kundu--Eckhaus equation}

\titlerunning{Global well-posedness and long-time asymptotics for the focusing CKE equation}        

\author{Tianwei Qiu \and Zhen Wang}

\authorrunning{T. Qiu and Z. Wang} 

\institute{T. Qiu \at
              School of Mathematical Sciences, Beihang University, Beijing 100191, China \\
           \and
           Z. Wang \Letter \at
              School of Mathematical Sciences, Beihang University, Beijing 100191, China \\
              \email{wangzmath@163.com}        
}

\date{}

\setlength{\headerboxheight}{0pt}
\renewcommand{\makeheadbox}{}
\maketitle

\begin{abstract}
We prove global well-posedness in the Schwartz space and establish long-time asymptotics for the focusing coupled Kundu--Eckhaus (CKE) equation. For data with no discrete spectrum or real-axis spectral singularities, we analyze the associated matrix Riemann--Hilbert problem by the Deift--Zhou nonlinear steepest descent method. A quantitative local expansion of the full matrix factor near the stationary point identifies the unitary factor in the leading vector term. We combine an estimate for the full matrix coefficient at infinity with exact mass identities. This gives joint error estimates for the gauge-equivalent coupled nonlinear Schr\"odinger solution, the cumulative mass, and the gauge phase. These joint estimates yield an explicit asymptotic formula for CKE radiation with an $O(t^{-1}\log t)$ remainder, uniformly for $x/t$ in compact subsets of $\mathbb R$. We also compute the reflection coefficient and the unitary endpoint factor to visualize the asymptotic solution.

\keywords{global well-posedness \and Riemann--Hilbert problem \and coupled nonlinear Schr\"odinger equation \and coupled Kundu--Eckhaus equation \and long-time asymptotics}
\end{abstract}

\section{Introduction}\label{intro}

Representing a nonlinear evolution equation as the compatibility condition of a Lax pair \cite{Lax1968} forms the foundation of the inverse scattering transform (IST) \cite{GGKM1967,ZS1972}. Within the Riemann--Hilbert (RH) formulation of the IST, the long-time behavior of solutions is governed by an oscillatory matrix factorization problem. The nonlinear steepest descent method, introduced by Deift and Zhou \cite{DeiftZhou}, provides a rigorous framework for extracting sharp long-time asymptotics from such oscillatory RH problems through jump factorizations and contour deformations; see, for instance, its applications to the nonlinear Schr\"odinger (NLS) equation \cite{DeiftZhouNLS}, the Korteweg--de Vries equation \cite{DeiftZhouKdV}, the good Boussinesq equation \cite{DeiftZhouBsq}, and the Fokas--Lenells equation \cite{DeiftZhouFL}.

While the scalar NLS equation models slowly varying wave envelopes in weakly nonlinear dispersive media such as surface water waves and optical fibers \cite{NLS-water1,NLS-optic1,NLS-optic2}, coupled nonlinear Schr\"odinger (CNLS) systems are required to capture the interaction of distinct wave packets or polarization modes \cite{CNLS-optic,CNLS-water}. The canonical integrable model of this type is the Manakov system \cite{Manakov1974}. Writing $\ww=(u,v)$ as a row vector, we consider the Cauchy problem for the focusing CNLS equation
\begin{equation}\label{CNLS}
\begin{aligned}
 &\ii\ww_t+\ww_{xx}+2(\ww\ww^\dagger)\ww=0,\\
 &\ww(x,0)=\ww_0(x)\in\Sch.
\end{aligned}
\end{equation}
Here, $*$ denotes complex conjugation of scalars, and $\dagger$ denotes Hermitian conjugation. Throughout the paper, we write $\Sch=\Sch(\mathbb R;\mathbb C^{1\times2})$ for the Schwartz space of row-vector functions, equipped with its usual Schwartz topology. We use $\Sch(\mathbb R)$ for the scalar Schwartz space. The CNLS equation \eqref{CNLS} admits the following Lax pair \cite{CNLS-DeiftZhou}:
\begin{equation}\label{lax1}
\begin{aligned}
 &\psi_x=(\ii k\sigma+U)\psi,\qquad \psi_t=(2\ii k^2\sigma+\widetilde U)\psi,\\
 &\sigma=\diag(-1,1,1),\qquad U=\ii\begin{pmatrix}0&\ww\\\ww^\dagger&0_{2\times2}\end{pmatrix}, \qquad \widetilde U=2kU+\ii U_x\sigma+\ii U^2\sigma.
\end{aligned}
\end{equation}

Coupled NLS systems admit vector solitons \cite{CNLS-soliton}, breathers \cite{CNLS-breather}, and rogue waves \cite{CNLS-RW1,CNLS-RW2}. In the focusing regime, Ling, Pelinovsky, and Su \cite{CNLS-Stable} established the orbital stability of a family of breather solutions. Geng and Liu \cite{CNLS-DeiftZhou} applied the Deift--Zhou nonlinear steepest descent method to derive long-time asymptotic formulas. Huang, Ling, and Zhang \cite{CNLS-dbar} proved soliton resolution in weighted Sobolev spaces via an improved nonlinear steepest descent method.

Higher-order nonlinear effects in optical pulse propagation motivate integrable extensions of the NLS and Manakov equations. A well-known scalar example is the Kundu--Eckhaus (KE) equation \cite{Kundu,Eckhaus}, which incorporates quintic and derivative nonlinearities while remaining gauge-equivalent to the cubic NLS equation. Its two-component generalization, the coupled Kundu--Eckhaus (CKE) equation, was introduced by Radhakrishnan, Kundu, and Lakshmanan \cite{CKE-ori} to model ultrashort optical pulses in non-Kerr media and also arises in nonlinear electrical transmission networks \cite{CKE-elec}.

We consider the following focusing CKE Cauchy problem:
\begin{equation}\label{CKE}
\begin{aligned}
 \ii q_{j,t}+q_{j,xx}+2S_q q_j+\lambda^2 S_q^2q_j-2\ii\lambda(S_q q_j)_x+2\ii\lambda\Bigl(\sum_{m=1}^2q_m^*q_{m,x}\Bigr)q_j&=0, \quad j=1,2,\\
 S_q=|q_1|^2+|q_2|^2,\qquad \qq(x,0)=\qq_0(x)\in\Sch,&
\end{aligned}
\end{equation}
where $\lambda\in\mathbb R$ is fixed and $\qq=(q_1,q_2)$ is a row vector. Throughout the paper, $\ww_0$ is obtained from $\qq_0$ by the inverse gauge transformation in Proposition~\ref{prop:gauge_equivalence}. The scattering coefficients $a(k)$ and $b(k)$ associated with $\ww_0$ have sizes $2\times2$ and $1\times2$, respectively; their definitions are given in Section~\ref{sec:direct_scattering}. Exact solutions of the focusing CKE equation include solitons \cite{CKE-TianBo,CKE-WDS,CKE-Song}, breathers \cite{CKE-bre}, and rogue waves \cite{CKE-DZ}.

We first establish global well-posedness of the focusing CKE Cauchy problem \eqref{CKE} in the Schwartz space $\Sch$ (Theorem~\ref{thm:global_schwartz}). Combining the gauge equivalence in Proposition~\ref{prop:gauge_equivalence} with the global $H^1$ well-posedness for the one-dimensional cubic CNLS equation \cite[Theorem~1 and Remark~4(i)]{FanelliMontefusco2007}, we obtain a uniform-in-time $H^1$ bound for the CKE solution, propagate higher-order Sobolev regularity and polynomial spatial weights, and prove continuous dependence in the Schwartz topology.

For initial data without discrete spectrum or real-axis spectral singularities, we then establish an explicit long-time asymptotic formula for the focusing CKE equation in Theorem~\ref{thm1} by applying the Deift--Zhou nonlinear steepest descent method \cite{DeiftZhou} to the $3\times3$ RH problem associated with the CNLS scattering data \cite{CNLS-DeiftZhou,CNLS-dbar}. After proving in Lemma~\ref{lem:Schwartz} that the reflection coefficient $\gamma(k)$ belongs to $\Sch$, we derive in Lemma~\ref{lem:endpoint} a quantitative local expansion of the $2\times2$ matrix factor $\delta(k;k_0)$ near the stationary point $k_0=-x/(4t)$ with an $O(|k-k_0|(1+|\log|k-k_0||))$ remainder. This expansion identifies the unitary endpoint factor $E_0(k_0)$ in the leading vector amplitude $\gamma(k_0)E_0(k_0)$, with all bounds holding uniformly for $k_0$ in compact subsets of $\mathbb R$, including points where $\gamma(k_0)=0$.

In Proposition~\ref{eq:main_reduction}, we establish an asymptotic formula for the full coefficient $M_1(x,t)$ of $k^{-1}$ in the RH solution, with an $O_C(t^{-1}\log t)$ remainder. While the $(1,2)$-block of $M_1$ reconstructs the CNLS solution $\ww(x,t)$, its $(1,1)$-entry determines the tail mass. Combining the tail-mass identity with the total-mass trace formula in Proposition~\ref{prop:mass} gives joint estimates for the CNLS solution, the cumulative mass, and the gauge phase. These estimates yield the CKE asymptotic formula with an $O_C(t^{-1}\log t)$ remainder, uniformly for $|x/t|\le C$ with any fixed $C>0$. Finally, in Section~\ref{sec:6}, we compute the reflection coefficient by mapped Chebyshev collocation and evaluate $E_0(k_0)$ by discretizing a regularized singular integral equation \cite{TrogdonOlver2016} to visualize the asymptotic solution.

We write $\mathbb C_\pm=\{k\in\mathbb C:\ \pm\operatorname{Im}k>0\}$.
\begin{assumption}\label{ass:data}
The function $a(k)$ satisfies
\[
 \det a(k)\ne0\quad(k\in\mathbb C_-),\qquad \det a(k)\ne0\quad(k\in\mathbb R).
\]
For $k\in\mathbb R$, $a(k)$ denotes the continuous boundary value of the analytic matrix function from $\mathbb C_-$. It is normalized by $a(k)\to\Id_2$ as $k\to\infty$ in $\overline{\mathbb C}_-$. The two nonvanishing conditions exclude the discrete spectrum and spectral singularities on the real axis, respectively.
\end{assumption}

For $k\in\mathbb R$, define the row-vector reflection coefficient $\gamma(k)=b(k)a(k)^{-1}\in\mathbb C^{1\times2}$, so that $|\gamma(k)|^2=\gamma(k)\gamma(k)^\dagger$. For a matrix $A=(A_{ij})$, we use the Frobenius (Hilbert--Schmidt) norm
\[
 |A|=\sqrt{\sum_{i,j}|A_{ij}|^2}.
\]
For row and column vectors, this coincides with the Euclidean norm. For scalar-, vector-, or matrix-valued functions on $\mathbb R$ or a contour, we use the norms below, with arc-length measure on contours:
\[
 \|f\|_{L^p}=\left(\int |f(\xi)|^p\,|\dd\xi|\right)^{1/p} \quad(1\le p<\infty),\qquad \|f\|_{L^\infty}=\operatorname*{ess\,sup}_{\xi}|f(\xi)|.
\]
For analytic functions on an open domain, $\|f\|_{L^\infty}$
denotes the supremum of $|f|$ over that domain. For a bounded linear operator $\mathcal T$ on $L^2$, we write
\[
 \|\mathcal T\|=\sup_{\|f\|_{L^2}=1}\|\mathcal T f\|_{L^2}.
\]
To state the asymptotic formula, let $\delta(k;k_0)$ denote the normalized $2\times2$ matrix factor defined by \eqref{eq:delta_rh}. Lemma~\ref{lem:delta} shows that $\delta$ exists uniquely and is bounded together with its inverse. Set
\begin{equation}\label{eq:endpoint_data}
\begin{aligned}
 \nu(k_0)&=-\frac{\log(1+|\gamma(k_0)|^2)}{2\pi},\qquad A_0(k_0)=-\frac{\log(1+|\gamma(k_0)|^2)}{2\pi|\gamma(k_0)|^2}\gamma(k_0)^\dagger\gamma(k_0)\quad(\gamma(k_0)\ne0),\\
 E_0(k_0)&=\lim_{k\downarrow k_0}\exp[-\ii A_0(k_0)\log(k-k_0)]\delta(k;k_0).
\end{aligned}
\end{equation}
When $\gamma(k_0)=0$, we define $A_0(k_0)=0$ by continuous extension. We use the principal branch $\log z$ for $z\in\mathbb C\setminus(-\infty,0]$, with $\arg z\in(-\pi,\pi)$; on the positive real axis, this is the real logarithm. The limit in \eqref{eq:endpoint_data} is taken along the real axis with $k>k_0$. Lemma~\ref{lem:endpoint} proves the existence and unitarity of $E_0(k_0)$ and the identity $\det E_0=e^{\chi_0}$, where
\begin{equation}\label{eq:chi0}
 \chi_0(k_0)=\frac1{2\pi\ii}\biggl[\int_{-\infty}^{k_0-1}\frac{\log(1+|\gamma(\xi)|^2)}{\xi-k_0}\,\dd\xi+\int_{k_0-1}^{k_0}\frac1{\xi-k_0}\log\frac{1+|\gamma(\xi)|^2}{1+|\gamma(k_0)|^2}\,\dd\xi\biggr].
\end{equation}
The expression for $\chi_0$ shows that it is purely imaginary. Since $E_0(k_0)$ is unitary, $|\gamma(k_0)E_0(k_0)|=|\gamma(k_0)|$.

\begin{theorem}\label{thm1}
Let $\qq_0\in\Sch$, and let $\qq$ be the global solution of \eqref{CKE} given by Theorem~\ref{thm:global_schwartz}. Suppose that Assumption~\ref{ass:data} holds. For each fixed $C>0$, the following asymptotic formula holds as $t\to+\infty$, uniformly for $|x/t|\le C$:
\begin{equation}\label{eq:CKE_asymptotics}
 \qq(x,t)=\frac{-\ii\Gamma(1+\ii\nu)e^{\pi\nu/2+4\ii k_0^2t+\chi_0(k_0)-\ii\pi/4-\ii\nu\log(8t)}}{2\sqrt{\pi t}}\gamma(k_0)E_0(k_0)e^{\ii\lambda I_0(k_0)}+O_C(t^{-1}\log t),
\end{equation}
where $k_0=-x/(4t)$, $\nu=\nu(k_0)$, $\Gamma$ denotes the gamma function, and
\[
 I_0(k_0)=\frac1\pi\int_{k_0}^{\infty}\log(1+|\gamma(\xi)|^2)\,\dd\xi.
\]
Moreover, the CNLS solution $\ww=(u,v)$ obtained from $\qq$ by the inverse gauge transformation satisfies
\begin{equation}\label{eq:mass_asymptotics}
 \sup_{|x/t|\le C}\left|\int_{-\infty}^{x}(|u|^2+|v|^2)(y,t)\,\dd y-I_0(-x/(4t))\right|=O_C(t^{-1}\log t).
\end{equation}
The constants implicit in $O_C$ may depend on $C$, $\lambda$, and the initial data, but are independent of $(x,t)$ in the stated region.
\end{theorem}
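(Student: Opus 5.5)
The plan is to reduce the CKE asymptotics to three joint estimates coming from the CNLS Riemann--Hilbert problem, and then push them through the gauge map of Proposition~\ref{prop:gauge_equivalence}. For Kundu--Eckhaus-type gauges I expect the map to be
\[
\qq(x,t)=\ww(x,t)\exp\Bigl(\ii\lambda\int_{-\infty}^{x}|\ww(y,t)|^2\,\dd y\Bigr),
\]
possibly with a normalization that depends only on the conserved total mass. If the paper's convention uses $\int_x^\infty$ instead of $\int_{-\infty}^x$, I would fix the constant using the total-mass identity. Once the gauge is in this form, \eqref{eq:CKE_asymptotics} follows from an asymptotic formula for $\ww$ with error $O_C(t^{-1}\log t)$ together with the mass estimate \eqref{eq:mass_asymptotics}. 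Indeed, $|e^{\ii\lambda a}-e^{\ii\lambda b}|\le|\lambda||a-b|$, and $\ww=O(t^{-1/2})$, so replacing the cumulative mass by $I_0(k_0)$ in the phase costs only $O(t^{-1/2}\cdot t^{-1}\log t)$. The error is therefore dominated by the error in $\ww$.

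The core step is Proposition~\ref{eq:main_reduction}, which gives $M_1(x,t)$ up to $O_C(t^{-1}\log t)$. I would proceed in five stages.
\begin{enumerate}
\item Conjugate by $\delta(k;k_0)$, which removes the $(1+|\gamma|^2)$ factor on $(-\infty,k_0)$. Lemma~\ref{lem:delta} provides the boundedness of $\delta$ and its inverse.
\item Use Lemma~\ref{lem:Schwartz}, which puts $\gamma$ in $\Sch$, to split $\gamma$ into an analytic approximant plus a small remainder, and open lenses along the rays through $k_0$.
\item Near $k_0$, use Lemma~\ref{lem:endpoint} to replace $\delta$ by $(k-k_0)^{\ii A_0}E_0$. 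The error is $O(|k-k_0|\log)$, and after scaling $k-k_0=\zeta/\sqrt{8t}$ it contributes $O(t^{-1}\log t)$.
\item Solve the parabolic-cylinder model problem. It reduces to a scalar problem once $\gamma(k_0)$ is rotated into the direction $e_1$ by a unitary, since $A_0$ has rank one with eigenvalue $\nu$ along $\gamma^\dagger$.
\item Bound the small-norm residual uniformly for $k_0$ in compact sets.
\end{enumerate}

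Reading off the $(1,2)$ block then gives
\[
\ww\approx\frac{-\ii\Gamma(1+\ii\nu)e^{\pi\nu/2+4\ii k_0^2t+\chi_0-\ii\pi/4-\ii\nu\log(8t)}}{2\sqrt{\pi t}}\,\gamma(k_0)E_0(k_0).
\]
The scalar factor $e^{\chi_0}$ arises from $\det E_0$ together with the normalization conventions. Reading off the $(1,1)$ entry gives
\[
(M_1)_{11}=-\frac{\ii}{2\pi}\int_{k_0}^\infty\log(1+|\gamma|^2)\,\dd\xi+O(t^{-1}\log t),
\]
which is the contribution of $\delta$'s expansion at infinity. The model-problem term is $O(t^{-1/2})$ in the off-diagonal entries, but its diagonal part is $O(t^{-1})$, so it can be absorbed.

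For the mass, I would use the standard tail-mass identity from the $x$-part of the Lax pair \eqref{lax1}: $\int_x^\infty|\ww|^2\,\dd y$ is an explicit multiple of $(M_1)_{11}$. I would combine this with the trace formula of Proposition~\ref{prop:mass}, which holds when there is no discrete spectrum:
\[
\int_{\mathbb R}|\ww|^2=\frac1\pi\int_{\mathbb R}\log(1+|\gamma|^2).
\]
Subtracting the two gives $\int_{-\infty}^x|\ww|^2=I_0(k_0)+O(t^{-1}\log t)$ if the correct normalization of $I_0$ in the statement refers to $\int_{k_0}^\infty$. Sign conventions have to be matched here. Because $k_0=-x/(4t)$, large $x$ corresponds to small $k_0$, so the left tail in $x$ corresponds to $\xi>k_0$. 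This is consistent with the statement and with $\delta$ jumping on one side.

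The main obstacle is uniformity near points where $\gamma(k_0)=0$ and the control of the matrix (noncommutative) $\delta$. The factor $\delta$ is not explicit, so both the local expansion and the $(1,1)$ coefficient at infinity must come from Lemma~\ref{lem:endpoint} and the RH problem for $\delta$, and the error has to be bounded uniformly in $k_0$. I would first try to treat $A_0$ as a continuous family, which is valid since $A_0\to0$ as $\gamma\to0$, and use the $L^2$ resolvent bounds of the small-norm problem, which depend only on $\|\gamma\|$ norms.
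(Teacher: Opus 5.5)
Your proposal follows essentially the same route as the paper. It uses the gauge $\qq=\ww\exp(\ii\lambda\int_{-\infty}^x|\ww|^2\,\dd y)$ and the full-matrix estimate $M_1=\Delta_1+M_1^0/\sqrt{8t}+O_C(t^{-1}\log t)$, obtained via conjugation by $\Delta=\diag(d,\delta^{-1})$, lens opening, Lemma~\ref{lem:endpoint} and the parabolic cylinder model with $(M_1^0)_{11}=0$. It then combines the tail-mass identity with the trace formula, and finishes with the Lipschitz bound $|e^{\ii\lambda I}-e^{\ii\lambda I_0}|\le|\lambda|\,|I-I_0|$ against $\ww_{\mathrm{lead}}=O_C(t^{-1/2})$.

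One slip needs fixing. Since $d=\det\delta$ jumps only on $(-\infty,k_0)$, the correct value is $(M_1)_{11}=d^{[1]}+O(t^{-1}\log t)$ with $d^{[1]}=\frac{\ii}{2\pi}\int_{-\infty}^{k_0}\log(1+|\gamma(\xi)|^2)\,\dd\xi$. Your expression $-\frac{\ii}{2\pi}\int_{k_0}^{\infty}\log(1+|\gamma(\xi)|^2)\,\dd\xi$ is off by the constant $\frac{\ii}{2}\mathcal M_1$; it corresponds to normalizing $(M_1)_{11}$ at $x=-\infty$ rather than $x=+\infty$. Inserted into the identity $\int_x^\infty|\ww|^2\,\dd y=-2\ii(M_1)_{11}$, which holds for this $M$, it would give a negative tail mass. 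With $d^{[1]}$, subtracting the tail from $\mathcal M_1$ yields exactly $I_0(k_0)$.
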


The outline of the paper is as follows. Section~\ref{sec:2} proves the gauge equivalence and global well-posedness in the Schwartz space. Section~\ref{sec:direct_scattering} presents the direct scattering problem and the RH formulation. Section~\ref{sec:4} derives the long-time asymptotic formula for $\ww$ by nonlinear steepest descent. Section~\ref{sec:5} estimates the gauge phase and completes the proof of Theorem~\ref{thm1}. Section~\ref{sec:6} describes the numerical evaluation of the asymptotic solution.

\section{Gauge equivalence and global well-posedness}\label{sec:2}

\subsection{Gauge equivalence}

\begin{proposition}\label{prop:gauge_equivalence}
Fix $T>0$ and consider solutions of \eqref{CNLS} and \eqref{CKE} in $C^1([0,T];\Sch)$. For $\ww=(u,v)$, set $S=\ww\ww^\dagger$. The gauge transformation
\begin{equation}\label{eq:gauge}
 \qq=\ww e^{\ii\Theta}, \qquad \Theta(x,t)=\lambda\int_{-\infty}^x S(y,t)\,\dd y,
\end{equation}
establishes a bijection between solutions of the Cauchy problems \eqref{CNLS} and \eqref{CKE}. Its inverse is given by
\[
 \ww(x,t)=\qq(x,t)\exp\!\left[-\ii\lambda\int_{-\infty}^x|\qq(y,t)|^2\,\dd y\right].
\] 
The gauge transformation and its inverse both preserve $\Sch$ initial data.
\end{proposition}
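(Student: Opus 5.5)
The plan is a direct computation combined with two structural facts: the gauge factor has modulus one, so $|\qq|^2=|\ww|^2=S$; and the local conservation law $S_t = -\partial_x J$ holds for \eqref{CNLS}. Since $S_q=S$, the inverse formula follows immediately from the forward one. The maps $\ww\mapsto\qq$ and $\qq\mapsto\ww$ are then mutually inverse pointwise in $t$, because $\int_{-\infty}^x|\qq|^2\,\dd y=\int_{-\infty}^x S\,\dd y$. It therefore suffices to show two things. First, if $\ww$ solves \eqref{CNLS}, then $\qq$ solves \eqref{CKE}. Second, if $\qq$ solves \eqref{CKE}, then $\ww$ solves \eqref{CNLS}. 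The second follows from the same computation run backwards, or equivalently from the identity below read with $\Theta$ expressed through $\qq$.

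\textbf{Step 1: the phase derivatives.} From \eqref{CNLS} one has
\[
S_t=\ii(\ww_{xx}\ww^\dagger-\ww\ww_{xx}^\dagger)=\partial_x\bigl[\ii(\ww_x\ww^\dagger-\ww\ww_x^\dagger)\bigr],
\]
where the cubic terms cancel because $2S^2$ is real. Since $\ww\in\Sch$, the flux vanishes at $-\infty$. This gives
\[
\Theta_x=\lambda S,\qquad \Theta_t=\ii\lambda(\ww_x\ww^\dagger-\ww\ww_x^\dagger).
\]
The differentiation under the integral is justified because $\ww\in C^1([0,T];\Sch)$.

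\textbf{Step 2: substitute.} Write $\ww=\qq e^{-\ii\Theta}$. Then
\[
\ww_x=(\qq_x-\ii\lambda S\qq)e^{-\ii\Theta},\qquad \ww_{xx}=(\qq_{xx}-2\ii\lambda S\qq_x-\ii\lambda S_x\qq-\lambda^2S^2\qq)e^{-\ii\Theta},\qquad \ww_t=(\qq_t-\ii\Theta_t\qq)e^{-\ii\Theta}.
\]
Insert these into \eqref{CNLS} and multiply by $e^{\ii\Theta}$. This gives
\[
\ii\qq_t+\Theta_t\qq+\qq_{xx}-2\ii\lambda S\qq_x-\ii\lambda S_x\qq-\lambda^2S^2\qq+2S\qq=0.
\]
Next express $\Theta_t$ in terms of $\qq$. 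We have $\ww_x\ww^\dagger=\qq_x\qq^\dagger-\ii\lambda S^2$, so
\[
\Theta_t=\ii\lambda(\qq_x\qq^\dagger-\qq\qq_x^\dagger)+2\lambda^2S^2.
\]
Note also that $S_x=\qq_x\qq^\dagger+\qq\qq_x^\dagger$, and that
\[
-2\ii\lambda(S\qq)_x=-2\ii\lambda S_x\qq-2\ii\lambda S\qq_x.
\]
Collecting terms, the coefficient of $\qq$ coming from $\Theta_t-\ii\lambda S_x$ is $-2\ii\lambda\qq\qq_x^\dagger+2\lambda^2S^2$. After simplifying, it should match
\[
\lambda^2S^2-2\ii\lambda S_x+2\ii\lambda\qq_x\qq^\dagger+\lambda^2S^2-\lambda^2S^2.
\]
Here $\sum_m q_m^*q_{m,x}=\qq_x\qq^\dagger$ in the row-vector convention, which is the combination appearing in \eqref{CKE}. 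This bookkeeping of quadratic-derivative and quintic terms is the main obstacle. If a sign or conjugation convention produces a mismatch, I would first recheck the factor $\qq_x\qq^\dagger$ against $\qq\qq_x^\dagger$, using $S_x$ to trade one for the other.

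\textbf{Step 3: the spaces.} Both maps preserve $\Sch$ and $C^1([0,T];\Sch)$. The function $\Theta$ is smooth and bounded, with $|\Theta|\le|\lambda|\,\|\ww\|_{L^2}^2$. All its $x$-derivatives are of the form $\lambda\partial_x^{j-1}S$, which lie in $\Sch$. Hence $e^{\pm\ii\Theta}$ is a smooth function with bounded derivatives of all orders, and multiplication by it preserves $\Sch$, continuously in $t$. Finally, the initial data correspond: $\qq_0$ and $\ww_0$ are related by the same formulas at $t=0$.
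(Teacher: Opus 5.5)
Your CNLS $\Rightarrow$ CKE direction matches the paper's. The bookkeeping you left open does close, with no change of convention. The zero-order coefficient in your substituted equation is $\Theta_t-\ii\lambda S_x-\lambda^2S^2=\lambda^2S^2-2\ii\lambda\,\qq\qq_x^\dagger$. Expanding $-2\ii\lambda(S\qq)_x$ in \eqref{CKE} leaves, besides the common term $-2\ii\lambda S\qq_x$, the coefficient $\lambda^2S^2-2\ii\lambda S_x+2\ii\lambda\,\qq_x\qq^\dagger=\lambda^2S^2-2\ii\lambda\,\qq\qq_x^\dagger$, which is the same. (The target you display has a stray $-\lambda^2S^2$.)

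The genuine gap is the converse, CKE $\Rightarrow$ CNLS, which does not follow by running the computation backwards. Set $B=\ww_x\ww^\dagger-\ww\ww_x^\dagger$. Your substitution identity is purely algebraic, and after using \eqref{CKE} to eliminate $\qq_t$ it reduces to
\[
 \ii\ww_t+\ww_{xx}+2S\ww=(\Theta_t-\ii\lambda B)\ww .
\]
So you must show $\Theta_t=\ii\lambda B$ along the CKE flow, that is, $\partial_t|\qq|^2=\partial_x\bigl[\ii(\qq_x\qq^\dagger-\qq\qq_x^\dagger)+2\lambda|\qq|^4\bigr]$. Your Step~1 derives this only from \eqref{CNLS}, which is what you are trying to prove. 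Reading the $\Theta_t$ formula ``with $\Theta$ expressed through $\qq$'' is therefore circular. This direction cannot be skipped. The bijection needs it, and uniqueness for CKE in Theorem~\ref{thm:global_schwartz} is proved precisely by mapping a CKE solution to a CNLS solution.

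There are two ways to close the gap. The first is to verify the CKE conservation law directly. Solving \eqref{CKE} for $\qq_t$ gives $\qq_t\qq^\dagger=\ii\qq_{xx}\qq^\dagger+2\ii S^2+\ii\lambda^2S^3+2\lambda SS_x$, and $S_t=2\re(\qq_t\qq^\dagger)$ then yields the flux above. The second is the paper's argument, which avoids this computation. The coefficient $\ii\lambda B-\Theta_t$ in $\ii\ww_t+\ww_{xx}+2S\ww+(\ii\lambda B-\Theta_t)\ww=0$ is real, because $B^*=-B$ and $\Theta$ is real. Every equation $\ii\ww_t+\ww_{xx}+W\ww=0$ with real scalar $W$ satisfies $S_t=\ii B_x$. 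Integrating from $-\infty$ forces $\Theta_t=\ii\lambda B$, so the extra term vanishes.

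A smaller point concerns Step~3. For $C^1$ in time you also need $\Theta_t=\lambda\int_{-\infty}^xS_t\,\dd y$ to be bounded, with bounded $x$-derivatives $\lambda\partial_x^{j-1}S_t$, continuously in $t$. This is required because $\qq_t=(\ww_t+\ii\Theta_t\ww)e^{\ii\Theta}$.
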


\begin{proof}
Suppose first that $\qq$ solves \eqref{CKE}, and define $\ww$ by the inverse gauge transformation. Then $\qq\qq^\dagger=\ww\ww^\dagger=S$. Set $B=\ww_x\ww^\dagger-\ww\ww_x^\dagger$. Substituting $\qq=\ww e^{\ii\Theta}$ into \eqref{CKE} and using $\Theta_x=\lambda S$, we obtain
\[
 \ii\ww_t+\ww_{xx}+2S\ww+(\ii\lambda B-\Theta_t)\ww=0.
\]
Since $B^*=-B$ and $\Theta$ is real-valued, the scalar coefficient $\ii\lambda B-\Theta_t$ is real. Multiplying the equation for $\ww$ on the right by $\ww^\dagger$ and its Hermitian conjugate on the left by $\ww$, then subtracting the resulting scalar equations, we obtain $S_t=\ii B_x$. The rapid decay of $\ww$ and $\ww_x$ implies $\lim_{x\to-\infty}B(x,t)=0$. Hence,
\[
 \Theta_t=\lambda\int_{-\infty}^x S_t(y,t)\,\dd y=\ii\lambda B,
\]
so $\ww$ satisfies the differential equation in \eqref{CNLS}. Conversely, if $\ww$ solves \eqref{CNLS}, the local mass conservation law $S_t=\ii B_x$ gives $\Theta_t=\ii\lambda B$. Substitution of $\qq=\ww e^{\ii\Theta}$ then gives the differential equation in \eqref{CKE}. The identity $\qq\qq^\dagger=S$ gives the stated inverse and the corresponding initial data. The phase $\Theta$ is bounded on $\mathbb R\times[0,T]$, and $\partial_x^j\Theta\in C([0,T];\Sch(\mathbb R))$ for every $j\ge1$. All spatial derivatives of $e^{\pm\ii\Theta}$ are therefore bounded on $\mathbb R\times[0,T]$. Since $\Theta_t=\ii\lambda B$ belongs to $C([0,T];\Sch(\mathbb R))$, the Leibniz rule applied to the gauge formulas and their time derivatives shows that both transformations preserve $C^1([0,T];\Sch)$, and hence Schwartz initial data.\qed
\end{proof}

\subsection{Global well-posedness in the Schwartz space} \label{subsec:global_wellposedness}

For row-vector functions, we write $H^m=H^m(\mathbb R;\mathbb C^{1\times2})$ and use the norm
\[
 \|f\|_{H^m}^2=\sum_{r=0}^m\|\partial_x^r f\|_{L^2}^2, \qquad m=0,1,2,\ldots.
\]

\begin{theorem}\label{thm:global_schwartz}
Let $\qq_0\in\Sch$. Then the Cauchy problem \eqref{CKE} admits a unique global solution
\[
 \qq\in C^1([0,\infty);\Sch).
\]
The solution satisfies the uniform bound
\[
 \sup_{t\ge0}\|\qq(t)\|_{H^1}<\infty.
\]
Moreover, for every $T>0$, the data-to-solution map
\[
 \begin{aligned}
 \Sch&\longrightarrow C^1([0,T];\Sch),\\
 \qq_0&\longmapsto\qq|_{[0,T]}
 \end{aligned}
\]
is continuous with respect to the Schwartz topology on the initial data and the induced $C^1$ topology on the solutions.
\end{theorem}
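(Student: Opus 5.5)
We reduce everything to the cubic CNLS equation \eqref{CNLS} through Proposition~\ref{prop:gauge_equivalence}. Given $\qq_0\in\Sch$, set $\ww_0=\qq_0\exp[-\ii\lambda\int_{-\infty}^x|\qq_0|^2\,\dd y]$, which lies in $\Sch$ by the proposition.

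\textbf{Step 1: global $H^1$ solution and uniform bound.} By \cite[Theorem~1 and Remark~4(i)]{FanelliMontefusco2007}, \eqref{CNLS} has a unique global solution $\ww\in C([0,\infty);H^1)$. Mass $\|\ww\|_{L^2}^2$ and the energy $\|\ww_x\|_{L^2}^2-\|\ww\ww^\dagger\|_{L^2}^2$ are conserved. In one dimension, Gagliardo--Nirenberg gives $\|S\|_{L^2}^2\le \|\ww\|_{L^\infty}^2\|\ww\|_{L^2}^2\le \|\ww\|_{L^2}^3\|\ww_x\|_{L^2}$. This is subcritical, so the energy controls $\|\ww_x\|_{L^2}$ and $\sup_t\|\ww(t)\|_{H^1}<\infty$.

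Since $\qq=\ww e^{\ii\Theta}$ with $\Theta$ real, we have $|\qq|=|\ww|$ and $\qq_x=(\ww_x+\ii\lambda S\ww)e^{\ii\Theta}$. Moreover $\|S\ww\|_{L^2}\le\|\ww\|_{L^\infty}^2\|\ww\|_{L^2}$ is controlled by the $H^1$ norm. This yields $\sup_t\|\qq(t)\|_{H^1}<\infty$.

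\textbf{Step 2: propagation of Schwartz regularity for CNLS.} First we propagate $H^m$ for every $m$. Differentiating \eqref{CNLS} $m$ times and taking the $L^2$ energy estimate gives
\[
\tfrac{\dd}{\dd t}\|\partial_x^m\ww\|_{L^2}^2\lesssim \|\ww\|_{W^{1,\infty}}^2\|\ww\|_{H^m}^2 .
\]
The top-order terms cancel by skew-adjointness. Combined with $\|\ww\|_{L^\infty}\lesssim\|\ww\|_{H^1}$ and induction on $m$ (using $H^2\subset W^{1,\infty}$ and a Brezis--Gallouet or persistence-of-regularity argument at $m=2$), this gives at most exponential growth, hence global $H^m$ bounds on $[0,T]$.

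Next we propagate weights. For $x^{2j}\partial_x^m\ww$ we use the commutator $[\partial_x^2,x^{j}]$, which involves only lower weights and higher derivatives already controlled. Gronwall then bounds $\|x^j\partial_x^m\ww\|_{L^2}$ on $[0,T]$ for all $j,m$. Hence $\ww\in C([0,T];\Sch)$, and the equation gives $\ww_t\in C([0,T];\Sch)$, so $\ww\in C^1([0,T];\Sch)$.

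Uniqueness in $C^1([0,T];\Sch)$ follows from $H^1$ uniqueness. Proposition~\ref{prop:gauge_equivalence} then transfers existence and uniqueness to \eqref{CKE}.

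\textbf{Step 3: continuous dependence.} The inverse gauge map $\qq_0\mapsto\ww_0$ is continuous on $\Sch$. Indeed, $\int_{-\infty}^x|\qq_0|^2$ depends continuously on $\qq_0$ in $W^{k,\infty}$, and its derivatives depend continuously in $\Sch$. Likewise $\ww\mapsto\qq$ is continuous from $C^1([0,T];\Sch)$ to itself, with $\Theta_t=\ii\lambda B$ handled as in the proposition.

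It therefore suffices to show that the CNLS flow $\ww_0\mapsto\ww$ is continuous into $C^1([0,T];\Sch)$. For two solutions, the difference $\ww-\tilde\ww$ satisfies a linear Schr\"odinger equation whose coefficients are bounded in every seminorm, uniformly on Schwartz neighbourhoods, by Step 2. The same energy and weight estimates, applied to the differentiated and weighted difference, give
\[
\sup_{[0,T]}\|x^j\partial_x^m(\ww-\tilde\ww)\|_{L^2}\le C\sum_{j'\le j,\,m'\le m+j}\|x^{j'}\partial_x^{m'}(\ww_0-\tilde\ww_0)\|_{L^2}.
\]
The time derivative is then handled through the equation.

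\textbf{Main obstacle.} The main obstacle is the weighted estimates. Because of the commutator, $x^j$ weights cost derivatives, so the estimate must be closed in a carefully ordered induction on $(j,m)$, bounding weights with fewer derivatives by unweighted norms of more derivatives. I would organise it with the vector field $J=x+2\ii t\partial_x$, which commutes with the linear CNLS operator. Since the nonlinearity is gauge invariant, $J$ acts on it like a derivative, so $\|J^j\partial_x^m\ww\|_{L^2}$ closes by Gronwall. Weights are then recovered from $x=J-2\ii t\partial_x$ on the finite interval $[0,T]$.
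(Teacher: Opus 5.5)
Your outline follows the paper's route closely:
\begin{itemize}
\item gauge reduction to \eqref{CNLS};
\item the Fanelli--Montefusco global $H^1$ solution;
\item mass and energy conservation with the one-dimensional Gagliardo--Nirenberg inequality for the uniform $H^1$ bound (your $L^\infty$ bound for $S\ww$ plays the role of the paper's $L^6$ bound);
\item weighted estimates through $[\partial_x^2,x^j]$, closed by induction on $j$ for all $m$ at once;
\item uniqueness from $H^1$ uniqueness;
\item weighted difference estimates plus continuity of the gauge maps for continuous dependence.
\end{itemize}
The vector field $J=x+2\ii t\partial_x$ is a legitimate alternative for the weights. It commutes with $\ii\partial_t+\partial_x^2$ and satisfies $J\mathcal N(\ww)=(J\ww)\ww^\dagger\ww-\ww(J\ww)^\dagger\ww+\ww\ww^\dagger(J\ww)$. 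On a finite interval, however, it buys nothing. The paper's inequality bounding $Y_{j,m}$ by $Y_{j-1,m+1}$ and $Y_{j-2,m}$ already closes, so your ``main obstacle'' is not really an obstacle.

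The step that does not close as you wrote it is the $H^m$ propagation. With the coefficient $\|\ww\|_{W^{1,\infty}}^2$, the case $m=2$ is not a linear Gronwall inequality. The bound $\|\ww_x\|_{L^\infty}^2\le\|\ww_x\|_{L^2}\|\ww_{xx}\|_{L^2}$ only gives $y'\lesssim y+y^{3/2}$ for $y=\|\ww\|_{H^2}^2$, which does not exclude blow-up of the bound before $T$. Brezis--Gallouet does not repair this. In one dimension the borderline space for $\|\ww_x\|_{L^\infty}$ is $\ww\in H^{3/2}$, and the conserved quantities do not control it.

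What closes the argument, and what the paper uses, is the tame product estimate $\|\mathcal N(f)\|_{H^m}\le C_m\|f\|_{L^\infty}^2\|f\|_{H^m}$. The Gronwall coefficient is then $\|\ww\|_{L^\infty}^2\lesssim\sup_t\|\ww(t)\|_{H^1}^2$, and every $m$ gets an exponential bound directly, with no induction in $m$.

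Two further points are left implicit in your outline but are handled explicitly in the paper, since your energy identity and weighted Gronwall estimates are only a priori:
\begin{itemize}
\item You need a local $H^m$ theory, matched to the $H^1$ solution by uniqueness. The paper uses a contraction in $C([0,\tau];H^m)$ and the Duhamel formula.
\item You need a truncated weight $x^j\chi(x/R)$, with Fatou's lemma and dominated convergence, to know that the weighted norms are finite before Gronwall is applied.
\end{itemize}
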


\begin{proof}
For $f\in\Sch$, set
\[
 (\mathcal G_\lambda f)(x)=f(x)\exp\!\left[\ii\lambda\int_{-\infty}^x|f(y)|^2\,\dd y\right].
\]
By Proposition~\ref{prop:gauge_equivalence}, $\mathcal G_\lambda$ and $\mathcal G_{-\lambda}$ are inverse maps on $\Sch$. Let $\ww_0=\mathcal G_{-\lambda}\qq_0$. With $n=1$, $p=1$, and $\beta=1$ in \cite{FanelliMontefusco2007}, the rescaling $(\phi,\psi)=\sqrt2(u,v)$ gives \eqref{CNLS} without changing the space or time variable. Theorem~1 and Remark~4(i) of that work therefore give a unique global solution $\ww\in C([0,\infty);H^1)$. Its mass and energy
are conserved \cite[Section~2]{FanelliMontefusco2007}:
\[
 \begin{aligned}
 &\|\ww(t)\|_{L^2}^2=\int_{\mathbb R}|\ww_0|^2\,\dd x=:\mathcal M_1,\\
 &\int_{\mathbb R}(|\ww_x(t)|^2-|\ww(t)|^4)\,\dd x=\int_{\mathbb R}(|\ww_{0,x}|^2-|\ww_0|^4)\,\dd x=:\mathcal M_2.
 \end{aligned}
\]
The one-dimensional Gagliardo--Nirenberg inequalities give
\[
 \|f\|_{L^4}^4\le C_4\|f\|_{L^2}^3\|f_x\|_{L^2},\qquad\|f\|_{L^6}^3\le C_6\|f\|_{L^2}^2\|f_x\|_{L^2}.
\]
For row-vector functions, these follow by applying the scalar inequalities to $|f|$ and using $|\partial_x|f||\le|f_x|$ almost everywhere. Writing $X(t)=\|\ww_x(t)\|_{L^2}$, we obtain
\[
 \mathcal M_2\ge X(t)^2-C_4\mathcal M_1^{3/2}X(t)\ge\tfrac12X(t)^2-\tfrac12C_4^2\mathcal M_1^3.
\]
Thus,
\[
 X(t)\le K:=\bigl(2|\mathcal M_2|+C_4^2\mathcal M_1^3\bigr)^{1/2},\qquad\sup_{t\ge0}\|\ww(t)\|_{H^1}\le\mathcal M_1^{1/2}+K.
\]

Fix $T>0$ and write $\mathcal N(f)=(ff^\dagger)f$.
For every integer $m\ge1$, the Sobolev product estimate gives
\[
 \|\mathcal N(f)\|_{H^m}\le C_m\|f\|_{L^\infty}^2\|f\|_{H^m};
\]
see \cite[Section~3.3]{Tao2006}. A contraction argument in $C([0,\tau];H^m)$, for sufficiently small $\tau>0$, gives a local solution, which agrees with $\ww$ by uniqueness in $H^1$. The Duhamel formula is
\[
 \ww(t)=e^{\ii t\partial_x^2}\ww_0+2\ii\int_0^t e^{\ii(t-s)\partial_x^2}\mathcal N(\ww(s))\,\dd s.
\]
Here, $h(t)=e^{\ii t\partial_x^2}f$ solves $\ii h_t+h_{xx}=0$ with $h(0)=f$. The operator acts componentwise as the Fourier multiplier $e^{-\ii t\xi^2}$ and is unitary on every $H^m$. On the interval of existence, the product estimate, the uniform $H^1$ bound, and Gronwall's inequality give
\[
 \|\ww(t)\|_{H^m}\le\|\ww_0\|_{H^m}\exp\!\left[C_m(\mathcal M_1^{1/2}+K)^2t\right].
\]
The local $H^m$ solution therefore extends to $[0,T]$, with the corresponding bound uniform on this interval.

We next propagate the spatial weights. For nonnegative integers $j,m$, set
\[
 Y_{j,m}(t)=\sum_{r=0}^m\|x^j\partial_x^r\ww(t)\|_{L^2}.
\]
We prove by induction on $j$, simultaneously for all $m$, that $x^j\partial_x^r\ww\in C([0,T];L^2)$ for $0\le r\le m$. The case $j=0$ follows from the Sobolev estimates. For $j\ge1$, the commutator identity
\[
 [\partial_x^2,x^j]=2jx^{j-1}\partial_x+j(j-1)x^{j-2}
\]
gives
\[
 (\ii\partial_t+\partial_x^2)(x^j\partial_x^r\ww)=-2x^j\partial_x^r\mathcal N(\ww)+2jx^{j-1}\partial_x^{r+1}\ww+j(j-1)x^{j-2}\partial_x^r\ww.
\]
By Leibniz's rule, placing the weight on one factor and estimating the other two in $L^\infty$, we have
\[
 \sum_{r=0}^m\|x^j\partial_x^r\mathcal N(\ww)\|_{L^2}\le C_m\|\ww\|_{W^{m,\infty}}^2Y_{j,m}.
\]
The Sobolev bounds imply $B_{m,T}:=\sup_{0\le t\le T}\|\ww(t)\|_{W^{m,\infty}}<\infty$. Applying Duhamel's formula to the displayed equation for $x^j\partial_x^r\ww$ and using the $L^2$ unitarity of $e^{\ii t\partial_x^2}$, we obtain
\[
 Y_{j,m}(t)\le Y_{j,m}(0)+C_{j,m}B_{m,T}^2\int_0^tY_{j,m}(s)\,\dd s+C_{j,m}\int_0^t\bigl(Y_{j-1,m+1}(s)+Y_{j-2,m}(s)\bigr)\,\dd s,\qquad 0\le t\le T,
\]
where the term with weight $j-2$ is omitted when $j=1$.

To justify this estimate, replace $x^j$ by $x^j\chi(x/R)$, where $R\ge1$, $\chi\in C_c^\infty(\mathbb R)$, $0\le\chi\le1$, $\chi=1$ on $[-1,1]$, and $\chi=0$ outside $[-2,2]$. The first derivative of this weight is bounded by $C_j|x|^{j-1}$; for $j\ge2$, its second derivative is bounded by $C_j|x|^{j-2}$. For $j=1$, the second derivative is $O(R^{-1})$ and is supported where $R\le|x|\le2R$. Thus, the commutator terms are controlled by the lower-weight induction hypothesis, with an additional term $CR^{-1}\|\ww\|_{H^m}$ when $j=1$. The nonlinear estimate holds with the same cutoff on the weighted factor. Gronwall's inequality gives bounds independent of $R$, and Fatou's lemma gives the corresponding bounds for $Y_{j,m}$. The truncated weighted derivatives converge in $L^2$ at each fixed time. All terms on the right-hand side of their equations converge in $L^1([0,T];L^2)$ by dominated convergence; the additional term for $j=1$ tends to zero. Passing to the limit in Duhamel's formula gives an $L^2$-continuous representative of each $x^j\partial_x^r\ww$, completing the induction.

The weighted $L^2$ seminorms $\|x^j\partial_x^r f\|_{L^2}$ generate the Schwartz topology. Sobolev embedding applied to $x^j\partial_x^r f$ bounds the weighted supremum seminorms by weighted $L^2$ seminorms. Conversely, each weighted $L^2$ seminorm is bounded by a finite sum of weighted supremum seminorms. Hence, $\ww\in C([0,T];\Sch)$. Since $f\mapsto\ii f_{xx}+2\ii\mathcal N(f)$ is continuous from $\Sch$ to itself, the integral form of \eqref{CNLS} gives $\ww\in C^1([0,T];\Sch)$.

Set $\qq=\mathcal G_\lambda\ww$. Proposition~\ref{prop:gauge_equivalence} gives a solution of \eqref{CKE} in $C^1([0,\infty);\Sch)$ with initial value $\qq_0$. The identities $|\qq|=|\ww|$ and $\qq_x-\ii\lambda|\qq|^2\qq=e^{\ii\Theta}\ww_x$ imply
\[
 \int_{\mathbb R}|\qq|^2\,\dd x=\mathcal M_1,\qquad \int_{\mathbb R}\left(\bigl|\qq_x-\ii\lambda|\qq|^2\qq\bigr|^2-|\qq|^4\right)\,\dd x=\mathcal M_2.
\]
The $L^6$ estimate above gives
\[
 \|\qq_x(t)\|_{L^2}\le\|\ww_x(t)\|_{L^2}+|\lambda|\|\ww(t)\|_{L^6}^3\le(1+C_6|\lambda|\mathcal M_1)K.
\]
Together with conservation of mass, this yields
\[
 \sup_{t\ge0}\|\qq(t)\|_{H^1}\le\mathcal M_1^{1/2}+(1+C_6|\lambda|\mathcal M_1)K<\infty.
\]
Any other CKE solution in $C^1([0,\infty);\Sch)$ with initial value $\qq_0$ is mapped by $\mathcal G_{-\lambda}$ to a CNLS solution with initial value $\ww_0$. Uniqueness in $C([0,T];H^1)$ for every $T>0$ therefore proves uniqueness for CKE.

For continuous dependence, let $\ww_{0,n}\to\ww_0$ in $\Sch$, and let $\ww_n$ and $\ww$ be the corresponding CNLS solutions. Their masses, energies, and initial $H^m$ norms are uniformly bounded, so the preceding estimates give uniform $H^m$ bounds on $[0,T]$. Set $\boldsymbol z_n=\ww_n-\ww$. The identity
\[
 \mathcal N(\ww_n)-\mathcal N(\ww)=\boldsymbol z_n\ww_n^\dagger\ww_n+\ww\boldsymbol z_n^\dagger\ww_n+\ww\ww^\dagger\boldsymbol z_n
\]
allows the weight in each differentiated product to be placed on the factor containing $\boldsymbol z_n$. Consequently,
\[
 \sum_{r=0}^m\|x^j\partial_x^r(\mathcal N(\ww_n)-\mathcal N(\ww))\|_{L^2}\le C_{m,T}\sum_{r=0}^m\|x^j\partial_x^r\boldsymbol z_n\|_{L^2},
\]
with $C_{m,T}$ independent of $n$. Writing $Z_{j,m}^{(n)}(t)=\sum_{r=0}^m\|x^j\partial_x^r\boldsymbol z_n(t)\|_{L^2}$, the same weighted estimate gives
\[
 Z_{j,m}^{(n)}(t)\le Z_{j,m}^{(n)}(0)+C_{j,m,T}\int_0^t\bigl(Z_{j,m}^{(n)}(s)+Z_{j-1,m+1}^{(n)}(s)+Z_{j-2,m}^{(n)}(s)\bigr)\,\dd s,
\]
with terms of negative weight omitted. We first apply Gronwall's inequality with $j=0$. Induction on $j$, carried out simultaneously for all derivative orders $m$, then gives $\sup_{0\le t\le T}Z_{j,m}^{(n)}(t)\to0$ for every $j,m$. Thus, $\ww_n\to\ww$ in $C([0,T];\Sch)$. The equation $\ww_t=\ii\ww_{xx}+2\ii\mathcal N(\ww)$ gives convergence of the time derivatives in the same space.

Finally, for $f,g\in\Sch$, let $\Theta_f(x)=\lambda\int_{-\infty}^x|f(y)|^2\,\dd y$. Then
\[
 \begin{aligned}
 &\|\Theta_f-\Theta_g\|_{L^\infty} \le|\lambda|(\|f\|_{L^2}+\|g\|_{L^2})\|f-g\|_{L^2},\\
 &\partial_x^r\Theta_f=\lambda\partial_x^{r-1}|f|^2,\qquad r\ge1.
 \end{aligned}
\]
These estimates and Leibniz's rule give continuity of $\mathcal G_{\pm\lambda}$ in the Schwartz topology, with estimates uniform on bounded sets. If $\qq_{0,n}\to\qq_0$ in $\Sch$, their inverse gauge transforms therefore converge in $\Sch$. The continuous dependence for CNLS and the gauge estimates yield $\qq_n\to\qq$ in $C([0,T];\Sch)$. Solving \eqref{CKE} for $\qq_t$ expresses the time derivative as a continuous map of $\qq$ from $\Sch$ to itself. Hence, $\qq_n\to\qq$ in $C^1([0,T];\Sch)$, as required. \qed
\end{proof}

\begin{remark}\label{rem:cke_wellposedness}
The derivative relation $\qq_x=e^{\ii\Theta}(\ww_x+\ii\lambda|\ww|^2\ww)$ shows that passing from the CNLS equation \eqref{CNLS} to the CKE equation \eqref{CKE} at the $H^1$ level requires controlling the cubic term $|\ww|^2\ww$ in $L^2$. The conserved CNLS mass $\mathcal M_1$ and energy $\mathcal M_2$, combined with the one-dimensional Gagliardo--Nirenberg inequality, provide this bound uniformly in time. The weighted estimates for solutions and their differences, together with continuity of the gauge maps $\mathcal G_{\pm\lambda}$, give persistence of Schwartz regularity and continuous dependence in $C^1([0,T];\Sch)$. Moreover, because the one-dimensional cubic CNLS equation is $L^2$-subcritical, Theorem~\ref{thm:global_schwartz} holds for arbitrary initial data $\qq_0\in\Sch$ without any smallness condition.
\end{remark}

\section{Direct scattering and the RH problem}\label{sec:direct_scattering}

\subsection{Jost solutions and scattering data}

For the Lax pair \eqref{lax1}, set $\Phi(x,t;k)=kx+2k^2t$ and $\mu=\psi e^{-\ii\Phi\sigma}$. The spatial equation reads
\[
 \mu_x=\ii k[\sigma,\mu]+U\mu,
\]
and, for real $k$, the Jost solutions satisfy the Volterra equations
\begin{equation}\label{eq:Volterra}
 \mu_\pm(x,t;k)=\Id_3+\int_{\pm\infty}^{x}e^{\ii k(x-y)\widehat\sigma}U(y,t)\mu_\pm(y,t;k)\,\dd y,\quad e^{z\widehat\sigma}X=e^{z\sigma}Xe^{-z\sigma}.
\end{equation}

The analyticity and normalization properties of the Jost solutions stated below follow from \cite[Section~2]{CNLS-DeiftZhou}.

\begin{proposition}\label{prop:Jost}
For fixed $x,t$, write $\mu_\pm=(\mu_{\pm,L},\mu_{\pm,R})$, where $\mu_{\pm,L}$ is the first column and $\mu_{\pm,R}$ consists of the last two columns. The matrices $(\mu_{-,L},\mu_{+,R})$ and $(\mu_{+,L},\mu_{-,R})$ are analytic for $k\in\mathbb C_+$ and $k\in\mathbb C_-$, respectively, and extend continuously to the real axis. Moreover, $(\mu_{-,L},\mu_{+,R})=\Id_3+O(k^{-1})$ as $k\to\infty$ in $\overline{\mathbb C}_+$, and $(\mu_{+,L},\mu_{-,R})=\Id_3+O(k^{-1})$ as $k\to\infty$ in $\overline{\mathbb C}_-$. On the real axis, $\det\mu_\pm=1$ and $\mu_\pm(k)^\dagger=\mu_\pm(k)^{-1}$.
\end{proposition}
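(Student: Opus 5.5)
The plan is the standard Volterra argument for \eqref{eq:Volterra}, applied column by column. For fixed $(x,t)$ the potential $U(\cdot,t)$ is Schwartz, because $\ww(\cdot,t)\in\Sch$ by Theorem~\ref{thm:global_schwartz} and Proposition~\ref{prop:gauge_equivalence}. For a given column the kernel is $e^{\ii k(x-y)\widehat\sigma}U$. Since $\sigma=\diag(-1,1,1)$, conjugation by $e^{\ii k(x-y)\sigma}$ multiplies the $(1,j)$ entries ($j=2,3$) by $e^{-2\ii k(x-y)}$ and the $(j,1)$ entries by $e^{2\ii k(x-y)}$. The $2\times2$ lower block of $U$ is zero, so no other entries are affected.

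\emph{First column.} The equation for the first column $\mu_{\pm,L}$ only involves exponentials $e^{2\ii k(x-y)}$, which multiply the $(j,1)$ entries. For $\mu_{-,L}$ we have $y<x$, so $|e^{2\ii k(x-y)}|=e^{-2\operatorname{Im}k\,(x-y)}\le1$ when $k\in\overline{\mathbb C}_+$. For $\mu_{+,L}$ we have $y>x$, and the same bound holds when $k\in\overline{\mathbb C}_-$.

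\emph{Last two columns.} Here the relevant exponentials are $e^{-2\ii k(x-y)}$. They are bounded by $1$ for $\mu_{+,R}$ when $k\in\overline{\mathbb C}_+$, and for $\mu_{-,R}$ when $k\in\overline{\mathbb C}_-$.

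In each case the Neumann series converges uniformly, with bound $\exp(\|U(\cdot,t)\|_{L^1})$. Each iterate is analytic in the open half-plane and continuous up to $\mathbb R$, so the limit has the same properties. This gives the analyticity and continuity statements.

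For the asymptotics, I would write $\mu=\Id_3+\mu^{(1)}/k+\dots$ and compare with the differential equation $\mu_x=\ii k[\sigma,\mu]+U\mu$. This gives the off-diagonal part of $\mu^{(1)}$ in terms of $U$, and the diagonal part from $\int U^2$. To make this rigorous, I would integrate by parts once in the Volterra integral, using $\partial_y e^{\pm2\ii k(x-y)}$ and the decay of $U_x$. The term $e^{2\ii k(x-y)}$ is at most $1$, and one power of $1/k$ appears, so the remainder is $O(k^{-1})$ uniformly in the closed half-plane. This integration-by-parts estimate, uniform up to the real axis, is the only mildly delicate step. I would treat it by handling the first iterate explicitly and bounding the rest by the Neumann tail.

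For real $k$, $\tr(\ii k[\sigma,\cdot]+U)=\tr U=0$. Liouville's formula therefore gives that $\det\mu_\pm$ is independent of $x$, and it tends to $1$ as $x\to\pm\infty$. For unitarity, $U^\dagger=-U$, and $\ii k\sigma$ is anti-Hermitian for real $k$. Hence $\psi_\pm^\dagger\psi_\pm$ is constant in $x$ and equals $\Id_3$ at $\pm\infty$. Conjugating by the unitary $e^{\ii\Phi\sigma}$ then gives $\mu_\pm^\dagger\mu_\pm=\Id_3$.
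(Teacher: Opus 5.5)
Your proposal is correct. It is the standard Volterra/Neumann-series argument for \eqref{eq:Volterra}; the paper gives no proof of its own here and instead defers to Section~2 of Geng--Liu, which rests on the same integral equations, and your half-plane assignments, determinant computation and unitarity argument all check out. The one step to phrase precisely is the uniform $O(k^{-1})$ bound: the higher iterates must carry the $O(k^{-1})$ factor obtained by integrating the first iterate by parts. Write $\mu-\Id_3=(1-\mathcal K)^{-1}\mathcal K\Id_3$ column by column, so that the higher iterates are $\mathcal K^{m}(\mathcal K\Id_3)$ and $\|(1-\mathcal K)^{-1}\|\le e^{\|U\|_{L^1}}$ uniformly in the closed half-plane; the bare tail bound $\sum_{n\ge2}\|U\|_{L^1}^n/n!$ alone does not decay in $k$.
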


For the Jost solutions of Proposition~\ref{prop:Jost}, we define the scattering matrix by
\[
 \mu_-=\mu_+ e^{\ii\Phi\widehat\sigma}s(k),\qquad \det s=1.
\]
The scattering matrix $s(k)$ is independent of $x$ and $t$. Its $1+2$ block decomposition is
\begin{equation}\label{eq:scattering_blocks}
 s(k)=\begin{pmatrix}
 (\det a(k))^*&b(k)\\
 -\adj(a(k)^\dagger)b(k)^\dagger&a(k)
 \end{pmatrix},\qquad k\in\mathbb R.
\end{equation}
Here, $\adj(A)$ denotes the adjugate of $A$, so that $\adj(A)A=(\det A)\Id$. We also recall the following properties from \cite[Section~2]{CNLS-DeiftZhou}.

\begin{proposition}\label{prop:symmetry}
The scattering matrix $s(k)$ is unitary for $k\in\mathbb R$. The representation \eqref{eq:scattering_blocks} follows from $s^{-1}=s^\dagger$ and $\det s=1$. The function $a(k)$ is analytic in $\mathbb C_-$, and the function $(\det a(k^*))^*$ is analytic in $\mathbb C_+$. The coefficient $b(k)$ and the scattering matrix $s(k)$ are defined for $k\in\mathbb R$. Under Assumption~\ref{ass:data}, $a(k)^{-1}$ is uniformly bounded on $\mathbb R$.
\end{proposition}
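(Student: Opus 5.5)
The plan is to derive every claim from the symmetry of the Lax pair on the real axis, which gives $U^\dagger=-U$, and from the determinant identity for the Jost solutions.

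\textbf{Unitarity of $s$.} For real $k$ the matrices $\ii k\sigma$, $\ii k\widehat\sigma$ and $U$ are anti-Hermitian. If $\mu_\pm$ solves $\mu_x=\ii k[\sigma,\mu]+U\mu$, a direct computation gives
\[
\partial_x(\mu_\pm^\dagger\mu_\pm)=\ii k[\sigma,\mu_\pm^\dagger\mu_\pm]
\]
after simplification. Since $\mu_\pm\to\Id_3$ as $x\to\pm\infty$, the Volterra equations \eqref{eq:Volterra} give $\mu_\pm^\dagger\mu_\pm\equiv\Id_3$; this is the unitarity already recorded in Proposition~\ref{prop:Jost}. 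Then
\[
s=e^{-\ii\Phi\widehat\sigma}(\mu_+^{-1}\mu_-)
\]
is a product of unitary matrices conjugated by the unitary $e^{\ii\Phi\sigma}$, so $s$ is unitary. Taking determinants and using $\det\mu_\pm=1$ (trace-free coefficient, Liouville) gives $\det s=1$.

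\textbf{Block form \eqref{eq:scattering_blocks}.} Write $s=\begin{pmatrix}s_{11}&b\\ c&a\end{pmatrix}$. Since $\det s=1$, we have $s^{-1}=\adj(s)$, and unitarity gives $s^\dagger=\adj(s)$. Comparing blocks via the Schur complement gives the following.
\begin{itemize}
\item[(i)] The $(1,1)$ cofactor of $s$ is $\det a$, so $s_{11}^*=\det a$.
\item[(ii)] The lower-left block: $c^\dagger$ is the $(1,2)$ block of $s^{-1}$. Using $s^\dagger s=\Id$ we get $s_{11}^*b+c^\dagger a=0$. Hence
\[
c^\dagger=-(\det a)^*\,b\,a^{-1}=-b\,\adj(a)^{*T}\ \ldots
\]
More cleanly: from $ss^\dagger=\Id$ the lower-left block reads $c\,s_{11}^*+a b^\dagger=0$, so $c=-ab^\dagger/(\det a)^*$ where $\det a\neq0$. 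Since $(a^\dagger)^{-1}(\det a)^*=\adj(a^\dagger)$ and $aa^\dagger=\Id-cc^\dagger$, we obtain $c=-\adj(a^\dagger)b^\dagger$ by algebra. Where $\det a=0$ the identity follows by continuity/density, or directly from the cofactor formula $s^{-1}=\adj(s)$.
\end{itemize}

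\textbf{Analyticity.} Read the entries of $s$ off from the Wronskian-type formulas. The block $a=(\mu_+^{-1}\mu_-)_{22}$ equals $(\mu_+^\dagger\mu_-)$ restricted to the lower-right block. Using the adjoint Jost solutions $\mu_+^{-1}$, whose first row is analytic where $\mu_{+,L}$ is (in $\mathbb C_-$), one writes
\[
a=\bigl(\mu_+^{-1}\bigr)_{R\text{-rows}}\,\mu_{-,R}.
\]
The factor $(\mu_+^{-1})_{R\text{-rows}}$ comprises the cofactors of $(\mu_{+,L},\ldots)$. Its analytic extension can be expressed via cross products/cofactors involving $\mu_{+,L}$ and $\mu_{-,R}$. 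Concretely, $a(k)$ is a $2\times2$ block of the matrix
\[
\det(\mu_{+,L},\mu_{-,R})\text{-type}\ \text{combinations},
\]
which are analytic in $\mathbb C_-$ by Proposition~\ref{prop:Jost}; alternatively one can cite \cite[Section~2]{CNLS-DeiftZhou}. For the other claim, $s_{11}=\det(\mu_{-,L},\mu_{+,R})$ evaluated at any $x$ (after removing the $x$-independent exponentials), and this is analytic in $\mathbb C_+$. It equals $(\det a(k^*))^*$ on $\mathbb R$, which gives the analytic extension. The coefficient $b$ only involves columns with opposite analyticity, so it is defined on $\mathbb R$ alone.

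\textbf{Boundedness of $a^{-1}$.} This is the step needing care. By unitarity $|a(k)|\le|s|=\sqrt3$, and $a$ is continuous on $\mathbb R$. Under Assumption~\ref{ass:data}, $\det a\neq0$ on $\mathbb R$, and $\det a\to1$ as $|k|\to\infty$ by the normalization $a\to\Id_2$. Hence $|\det a|\ge c>0$ on $\mathbb R$ by compactness of $[-R,R]$ plus the tail bound. Therefore
\[
|a^{-1}|=\frac{|\adj a|}{|\det a|}\le\frac{\sqrt3}{c}.
\]
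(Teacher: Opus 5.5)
The one step that fails as written is the analyticity of the full $2\times2$ block $a$ in $\mathbb C_-$. Your formula $a=(\mu_+^{-1})_{\{2,3\}}\,\mu_{-,R}$ (rows $2,3$ of $\mu_+^{-1}$ times the last two columns of $\mu_-$) is correct, but the justification is not.

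Since $\det\mu_+=1$, the rows of $\mu_+^{-1}$ are cross products of columns of $\mu_+$. The \emph{first} row is built from the two columns of $\mu_{+,R}$, so it is analytic in $\mathbb C_+$, not in $\mathbb C_-$. Rows $2$ and $3$ are cross products of $\mu_{+,L}$ (analytic in $\mathbb C_-$) with a column of $\mu_{+,R}$ (analytic in $\mathbb C_+$), so the cofactor description does not by itself give analyticity in either half-plane. Rewriting $\mu_{+,R}=(\mu_{-,R}-e^{-2\ii\Phi}\mu_{+,L}b)a^{-1}$ brings in $a^{-1}$ and is circular. Determinants built from $\mu_{+,L}$ and $\mu_{-,R}$ alone give only the scalar $\det(\mu_{+,L},\mu_{-,R})=\det a$, never the individual entries of $a$. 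So your argument proves analyticity of $\det a$ in $\mathbb C_-$ and of $s_{11}=\det(\mu_{-,L},\mu_{+,R})$ in $\mathbb C_+$, but not of $a$.

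Either of the following supplies the missing step.
\begin{enumerate}
\item Because $\sigma$ is the identity on the last two components, $e^{\ii k(x-y)\widehat\sigma}$ acts trivially on the $22$ block. The $22$ block of \eqref{eq:Volterra} for $\mu_-$ (at $t=0$) therefore reads $\mu_{-,22}(x)=\Id_2+\ii\int_{-\infty}^x\ww_0^\dagger\mu_{-,12}\,\dd y$. For real $k$, $\mu_{-,22}=\mu_{+,21}e^{-2\ii\Phi}b+\mu_{+,22}a\to a$ as $x\to+\infty$, which gives \eqref{abk}. Its right-hand side is analytic in $\mathbb C_-$, since $\mu_{-,12}$ is a block of $\mu_{-,R}$ and is bounded there, and $\ww_0\in L^1$.
\item Alternatively, $\widetilde\mu=\mu_+^{-1}$ solves $\widetilde\mu_x=\ii k[\sigma,\widetilde\mu]-\widetilde\mu U$, i.e.\ $\widetilde\mu(x)=\Id_3-\int_{+\infty}^xe^{\ii k(x-y)\widehat\sigma}(\widetilde\mu U)(y)\,\dd y$, which decouples row by row. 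In rows $2,3$ the only oscillatory factor is $e^{2\ii k(x-y)}$ with $y>x$, which is bounded for $\operatorname{Im}k\le0$. These rows are therefore analytic in $\mathbb C_-$, and your formula finishes the proof.
\end{enumerate}
Once $a$ is analytic in $\mathbb C_-$, analyticity of $(\det a(k^*))^*$ in $\mathbb C_+$ follows at once by reflection. Your identity for $s_{11}$ then identifies it with the function used in \eqref{eq:Jost_RH}. The paper gives no argument beyond citing \cite[Section~2]{CNLS-DeiftZhou}; formula \eqref{abk} is essentially the first route.

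The rest is fine: unitarity of $s$ via $\partial_x(\mu_\pm^\dagger\mu_\pm)=\ii k[\sigma,\mu_\pm^\dagger\mu_\pm]$, $\det s=1$, and the bound $|a^{-1}|=|\adj a|/|\det a|\le\sqrt3/c$ from continuity and $a\to\Id_2$. There is a conjugation slip in the block computation. By (i), $s_{11}^*=\det a$, so the $(2,1)$ block of $ss^\dagger=\Id_3$ reads $c\det a+ab^\dagger=0$, i.e.\ $c=-ab^\dagger/\det a$, not $-ab^\dagger/(\det a)^*$. The same slip appears in your first attempt, which should read $c^\dagger=-(\det a)ba^{-1}=-b\adj(a)$. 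With the correct denominator, $c=-\adj(a^\dagger)b^\dagger$ follows from $a^\dagger ab^\dagger=(1-|b|^2)b^\dagger=|\det a|^2b^\dagger$. The cleanest route, and the one the statement has in mind, is to read off the $(1,2)$ block of $\adj(s)=s^\dagger$. This gives $c^\dagger=-b\adj(a)$ directly, with no division by $\det a$.
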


For the Jost solutions, the scattering matrix, and the RH matrices, the subscripts $11$, $12$, $21$, and $22$ refer to the $1\times1$, $1\times2$, $2\times1$, and $2\times2$ blocks in the $1+2$ decomposition, respectively. At $t=0$, with $\mu_-$ evaluated at $(x,0;k)$, the scattering coefficients satisfy
\begin{equation}\label{abk}
 a(k)=\Id_2+\ii\int_{\mathbb R}\ww_0^\dagger\mu_{-,12}\,\dd x, \qquad b(k)=\ii\int_{\mathbb R}e^{2\ii kx}\ww_0\mu_{-,22}\,\dd x.
\end{equation}
Finite-order regularity of the direct scattering transform in weighted Sobolev spaces is proved in \cite[Proposition~3.1 and Remark~3.2]{Liu-direct}. For rational approximation, we need the following Schwartz regularity of the reflection coefficient.

\begin{lemma}\label{lem:Schwartz}
Let $\ww_0\in\Sch$, and suppose that $a(k)$ satisfies Assumption~\ref{ass:data}. Then $\gamma\in\Sch$.
\end{lemma}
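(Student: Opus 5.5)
We show that $b$ and $a$ are smooth on $\mathbb R$ and have symbol-type asymptotics: $b$ with all derivatives decays faster than any power of $k$, while $a$ and all its derivatives are bounded. Since $a^{-1}$ is uniformly bounded on $\mathbb R$ (Proposition~\ref{prop:symmetry}), $\gamma=ba^{-1}$ is then Schwartz by the Leibniz rule.

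\textbf{Step 1 (smoothness and bounded derivatives of $\mu_-$).} We work at $t=0$ with the Volterra equation \eqref{eq:Volterra}. For the blocks entering \eqref{abk}, namely the last two columns $\mu_{-,R}$, the kernel involves only the phases $e^{0}$ and $e^{\pm 2\ii k(x-y)}$. Differentiating $j$ times in $k$ produces factors $(x-y)^j$. We control these by $\langle x\rangle^j\langle y\rangle^j$ and absorb $\langle y\rangle^j$ into $U$, which is Schwartz. Gronwall-type Volterra bounds (the iterated integrals give $\|\ww_0\|_{L^1}^n/n!$) then show that $\partial_k^j\mu_{-}(x;k)$ exists and satisfies $|\partial_k^j\mu_-(x;k)|\le C_j\langle x\rangle^j$ uniformly in $k\in\mathbb R$. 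The bound is uniform on the real axis because all exponentials are unimodular there.

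\textbf{Step 2 (bounds on $a$ and $b$).} Inserting Step 1 into \eqref{abk} shows that $a^{(j)}$ is bounded on $\mathbb R$ for every $j$, since $\langle x\rangle^j\ww_0^\dagger\in L^1$. For $b$, polynomial weights times derivatives are handled by integration by parts. Write $b(k)=\ii\int e^{2\ii kx}F(x;k)\,\dd x$ with $F=\ww_0\mu_{-,22}$. To gain decay $k^{-N}$, we integrate by parts $N$ times using $e^{2\ii kx}=(2\ii k)^{-1}\partial_x e^{2\ii kx}$. This requires bounds on $\partial_x^\ell\partial_k^j\mu_{-,22}$ that are uniform in $k$ and grow at most polynomially in $x$.

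We obtain these from the differential equation $\mu_x=\ii k[\sigma,\mu]+U\mu$. The commutator term brings down powers of $k$, which is the main obstacle. To overcome it, use that the $(2,2)$ block commutes with $\sigma$, so the commutator acts only through the off-diagonal block $\mu_{-,12}$. One then shows that $\mu_{-,12}=O(k^{-1})$ with a full asymptotic expansion in $k^{-1}$ whose coefficients are Schwartz-class in $x$ as $x\to-\infty$, and which is polynomially bounded in general. This follows from the standard WKB/Neumann expansion of Jost solutions: integrating by parts in the Volterra integral for $\mu_{-,12}$, each step gains $k^{-1}$ and a derivative on $\ww_0$.

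Concretely, we prove by induction on $N$ that
\[
\mu_{-,R}(x;k)=\sum_{n<N}k^{-n}m_n(x)+k^{-N}r_N(x;k),
\]
where $m_n$ is built from $\ww_0$ and its derivatives, and $\partial_k^j r_N$ is bounded by $C\langle x\rangle^{j}$. An equivalent route is to compare with $\mu_+$. Then $b$ splits into Fourier transforms of Schwartz functions $\ww_0 m_n$, which are rapidly decaying together with all $k$-derivatives, plus remainders that we handle by integrating by parts once more. This gives $|k^N b^{(j)}(k)|\le C_{N,j}$.

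\textbf{Step 3 (conclusion).} Derivatives of $a^{-1}$ are sums of products of $a^{-1}$ and derivatives of $a$, so they are bounded by Step 2 and the uniform bound on $a^{-1}$. Hence $\langle k\rangle^N\partial_k^j\gamma$ is bounded for all $N,j$, that is, $\gamma\in\Sch$.

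The delicate point is Step 2: the rapid decay of $b$ must be uniform with all $k$-derivatives. We would handle it through the asymptotic expansion of $\mu_{-,R}$ described there. Alternatively, one could cite the weighted Sobolev results of \cite{Liu-direct} at every order $H^{s,s}$; since $\ww_0$ lies in all such spaces, this gives $b$ in all $H^{s,s}$, which is equivalent to $b\in\Sch$.
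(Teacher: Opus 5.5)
Your proposal is correct in outline but takes a different route from the paper. The paper never expands the Jost function in $k^{-1}$. It inserts the whole Neumann series of $\mu_-$ into the formula for $b$ and observes that each odd-order term is an integral over the simplex $x_1<\cdots<x_n$ with phase $e^{2\ii k\alpha_n}$. It then integrates by parts $N$ times along the diagonal field $\sum_j\partial_{x_j}$. This field is tangent to the faces $x_j=x_{j+1}$ and satisfies $(\sum_j\partial_{x_j})\alpha_n=1$, so no interior boundary terms appear. A symmetric majorant then bounds $\sup_k|k^N\partial_k^m b|$ by the summable series $\sum_n n^{m+N}C_{m,N}^n/n!$, for all $m,N$ at once.

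You instead integrate by parts one step at a time inside the Volterra equation. You exploit that $\partial_x\mu_{-,22}=\ii\ww_0^\dagger\mu_{-,12}$ carries no factor of $k$, and build the large-$k$ expansion of $\mu_{-,R}$. Then $b$ becomes, to any order, a finite sum of $k^{-n}$ times Fourier transforms of Schwartz functions plus an $O(|k|^{-N})$ remainder. This gives more structure (an explicit expansion of the Jost columns) and avoids the simplex geometry and factorial bookkeeping. The cost is a remainder estimate that you only sketch, but it closes cleanly:
\begin{itemize}
\item Write $\mu_{-,12}\sim\sum_{n\ge1}k^{-n}a_n$ and $\mu_{-,22}\sim\sum_{n\ge0}k^{-n}c_n$. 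The recursion $2\ii a_{n+1}=-\partial_xa_n+\ii\ww_0c_n$, $\partial_xc_n=\ii\ww_0^\dagger a_n$, $c_0=\Id_2$, $c_n(-\infty)=0$ makes every $a_n$ Schwartz and every $c_n$ ($n\ge1$) bounded with Schwartz derivatives. Hence $\ww_0c_n\in\Sch(\mathbb R)$.
\item Truncate $\mu_{-,12}$ at order $N$ and $\mu_{-,22}$ at order $N-1$. The residual is $k^{-N}(\partial_xa_N,\,-\ii\ww_0^\dagger a_N)$, which is Schwartz in $x$ and independent of $k$.
\item The error therefore solves the same Volterra equation with this source. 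The weighted Gronwall argument of your Step 1 gives $|\partial_k^j(\text{error})|\le C_j\langle x\rangle^j|k|^{-N}$ for $|k|\ge1$. The region $|k|\le1$ is covered by Step 1 directly.
\end{itemize}
Do not lean on the fallback via \cite{Liu-direct}: the paper cites that work only for finite-order weighted Sobolev regularity.
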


\begin{proof}
Iterating the Volterra equation for $\mu_-$ in \eqref{eq:Volterra} and substituting the resulting Neumann series into \eqref{abk} gives a series for $b$. Only terms containing an odd number $n$ of factors of $U$ contribute; their phase is $e^{2\ii k\alpha_n}$, where $\alpha_n=\sum_{j=1}^{n}(-1)^{j+1}x_j$ on the ordered domain $x_1<\cdots<x_n$. Differentiating $m$ times with respect to $k$ introduces a factor proportional to $\alpha_n^m$. For odd $n$, the vector field $\sum_{j=1}^n\partial_{x_j}$ is tangent to the boundaries $x_j=x_{j+1}$ of the ordered integration domain and satisfies $(\sum_{j=1}^n\partial_{x_j})\alpha_n=1$. Integrating by parts $N$ times gives decay of order $|k|^{-N}$ for $|k|\ge1$, with no contributions from the interior boundaries; Schwartz decay eliminates the boundary terms at infinity. The function $(1+|x|)^m\sum_{j=0}^N|\partial_x^jU(x,0)|$ provides a common integrable bound for the factors in the differentiated integrands. After replacing each factor by this bound, the product is symmetric in $x_1,\ldots,x_n$, and integration over the ordered domain contributes the factor $1/n!$. Thus,
\[
 \sup_{k\in\mathbb R}|k^N\partial_k^m b(k)|\le\sum_{\substack{n\ge1\\n\ \text{odd}}}\frac{n^{m+N}C_{m,N}^{n}}{n!}<\infty,\qquad m,N=0,1,2,\ldots.
\]
Here, $C_{m,N}$ depends on finitely many Schwartz seminorms. For $|k|\le1$, the differentiated series for $b$ is bounded without integration by parts. Applying the same absolute majorant to the even terms, also without integration by parts, gives $\sup_{k\in\mathbb R}|\partial_k^m a(k)|<\infty$ for every nonnegative integer $m$. Assumption~\ref{ass:data} and the limit $a(k)\to\Id_2$ imply that $a^{-1}$ is bounded on $\mathbb R$. Differentiating the identity $aa^{-1}=\Id_2$ yields bounds for all derivatives of $a^{-1}$. The product rule applied to $\gamma=ba^{-1}$ proves the claim.\qed
\end{proof}

\subsection{The original RH problem}

Using Proposition~\ref{prop:symmetry}, we define $M$ from the Jost solutions by
\begin{equation}\label{eq:Jost_RH}
M(k)=
\begin{cases}
\bigl(\mu_{-,L}[(\det a(k^*))^*]^{-1},\ \mu_{+,R}\bigr),
& k\in\mathbb C_+,\\
\bigl(\mu_{+,L},\ \mu_{-,R}a^{-1}\bigr),
& k\in\mathbb C_-.
\end{cases}
\end{equation}
We denote the boundary values from $\mathbb C_\pm$ by $M_\pm$. The last two columns of the scattering relation give
\[
 \mu_{-,R}a^{-1}=\mu_{+,R}+e^{-2\ii\Phi}\mu_{+,L}\gamma.
\]
The first column of the scattering relation, together with the cofactor identity in \eqref{eq:scattering_blocks}, yields
\[
 M_{+,L}=(1+|\gamma|^2)M_{-,L}-e^{2\ii\Phi}M_{-,R}\gamma^\dagger.
\]
Thus, $M$ satisfies
\begin{equation}\label{eq:original_RH}
 M_+=M_-J,\qquad J=\begin{pmatrix}1+\gamma\gamma^\dagger&-e^{-2\ii\Phi}\gamma\\
                 -e^{2\ii\Phi}\gamma^\dagger&\Id_2\end{pmatrix}, \qquad M(\infty)=\Id_3.
\end{equation}

The solvability and reconstruction results are given in \cite[Theorem~2]{CNLS-DeiftZhou}. We also record the coefficient identities from the spatial Lax equation in the row-vector convention used here.

\begin{proposition}\label{recon1}
Under Assumption~\ref{ass:data}, the RH problem \eqref{eq:original_RH} has a unique solution $M(x,t;k)$ analytic in $\mathbb C\setminus\mathbb R$. It is normalized by $M(\infty)=\Id_3$, has no poles, and extends continuously to $\mathbb R$ from each half-plane. Its boundary values are bounded and satisfy $M_\pm-\Id_3\in L^2(\mathbb R)$. It satisfies $\det M(k)=1$ and $M(k)^{-1}=M(k^*)^\dagger$. If $M=\Id_3+M_1/k+O(k^{-2})$, then
\begin{equation}\label{eq:reconstruction}
 \ww=2(M_1)_{12},\qquad (M_1)_{21}=-\tfrac12\ww^\dagger, \qquad \partial_x(M_1)_{11}=-\tfrac\ii2 S.
\end{equation}
The function $\ww$ reconstructed by \eqref{eq:reconstruction} solves the Cauchy problem \eqref{CNLS}.
\end{proposition}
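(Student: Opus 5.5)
The statement to prove is Proposition~\ref{recon1}. Solvability, uniqueness and the reconstruction $\ww=2(M_1)_{12}$ are taken from \cite[Theorem~2]{CNLS-DeiftZhou}. Once Assumption~\ref{ass:data} is in force, the main work is to read off the remaining identities in the row-vector convention used here.

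The order I would follow is this:
\begin{enumerate}[(i)]
\item Establish existence and absence of poles.
\item Prove uniqueness, $\det M=1$ and the symmetry $M(k)^{-1}=M(k^*)^\dagger$.
\item Obtain the coefficient identities \eqref{eq:reconstruction} from the spatial Lax equation.
\end{enumerate}

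\textbf{Existence and absence of poles.} The explicit formula \eqref{eq:Jost_RH} gives $M$ directly in terms of Jost solutions. By Proposition~\ref{prop:Jost}, the column blocks are analytic in $\mathbb C_\pm$ and equal $\Id_3+O(k^{-1})$ at infinity. Assumption~\ref{ass:data} excludes zeros of $\det a$ in $\overline{\mathbb C}_-$, so $a^{-1}$ has no poles in $\mathbb C_-$ and is bounded up to $\mathbb R$. The same holds for $[(\det a(k^*))^*]^{-1}$ in $\mathbb C_+$. Hence $M$ is pole-free and has continuous bounded boundary values. The jump relation \eqref{eq:original_RH} was verified just above.

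The $L^2$ property $M_\pm-\Id_3\in L^2$ follows from two facts. First, the Volterra equations give $\mu_\pm-\Id=O(k^{-1})$ uniformly on $\mathbb R$. Second, $a-\Id_2=O(k^{-1})$, from \eqref{abk} and Lemma~\ref{lem:Schwartz}-type bounds.

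\textbf{Uniqueness, determinant and symmetry.} For $\det M$, note that $\det J=1+|\gamma|^2-\gamma\gamma^\dagger=1$, computed by a Schur complement. Therefore $\det M$ has no jump, is bounded and is entire, and equals $1$ at infinity. By Liouville, $\det M\equiv1$.

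For uniqueness, let $\widetilde M$ be a second solution and consider $\widetilde M M^{-1}$. It has no jump, so by Liouville it equals $\Id_3$. The $L^2$ boundary values rule out point singularities on $\mathbb R$, via Morera.

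For the symmetry, set $N(k)=(M(k^*)^\dagger)^{-1}$. On $\mathbb R$ the jump of $N$ is $(J^\dagger)^{-1}$ conjugated appropriately. Since $J$ is Hermitian, one checks that $N$ satisfies the same RH problem as $M$. Uniqueness then gives $N=M$.

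\textbf{Coefficient identities.} Write $\psi=Me^{\ii\Phi\sigma}$. This solves $\psi_x=(\ii k\sigma+U)\psi$ in each half-plane, because the jump matrix $e^{-\ii\Phi\sigma}Je^{\ii\Phi\sigma}$ is independent of $x$. Hence
\[
M_x=\ii k[\sigma,M]+UM.
\]
Substitute $M=\Id_3+M_1/k+M_2/k^2+O(k^{-3})$ and compare powers of $k$:
\begin{itemize}
\item At order $k^0$: $\ii[\sigma,M_1]+U=0$.
\item At order $k^{-1}$: $\partial_xM_1=\ii[\sigma,M_2]+UM_1$.
\end{itemize}

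From the $k^0$ equation, the $12$ block gives $-2\ii(M_1)_{12}+\ii\ww=0$, that is, $\ww=2(M_1)_{12}$. The $21$ block gives $2\ii(M_1)_{21}+\ii\ww^\dagger=0$, that is, $(M_1)_{21}=-\tfrac12\ww^\dagger$.

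For the last identity, take the $11$ block of the $k^{-1}$ equation. The commutator term vanishes on diagonal blocks, leaving
\[
\partial_x(M_1)_{11}=\ii\ww(M_1)_{21}=-\tfrac{\ii}2\ww\ww^\dagger=-\tfrac{\ii}2S.
\]
Interchanging $\partial_x$ with the expansion at infinity requires uniformity of the asymptotics in $x$ locally. I expect this to be the main technical obstacle. I would justify it through the integral representation of $M_1$ from the RH problem,
\[
M_1=-\frac{1}{2\pi\ii}\int_{\mathbb R}M_-(J-\Id)\,\dd\xi,
\]
together with the Schwartz decay of $\gamma$ from Lemma~\ref{lem:Schwartz}. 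This decay lets the next coefficients and their $x$-derivatives be represented by convergent moment integrals.

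That $\ww$ solves \eqref{CNLS} then follows from the time Lax equation by the same dressing argument, as in \cite{CNLS-DeiftZhou}.
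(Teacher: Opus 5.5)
Your proposal is correct and follows essentially the paper's route. Existence comes from the Jost representation \eqref{eq:Jost_RH}, with \cite[Theorem~2]{CNLS-DeiftZhou} cited for solvability and reconstruction, and the identities \eqref{eq:reconstruction} are read off order by order from $M_x=\ii k[\sigma,M]+UM$, exactly as you do; the interchange of $\partial_x$ with the large-$k$ expansion that you flag can be bypassed, as in the paper's proof of Proposition~\ref{prop:mass}, by expanding the Volterra equation for the first column $\mu_{+,L}$ of $M$ in $\mathbb C_-$, which gives $(M_1)_{21}=-\tfrac12\ww^\dagger$ and $(M_1)_{11}=\tfrac\ii2\int_x^\infty S\,\dd y$ directly and also fixes the integration constant.
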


By uniqueness for the CNLS Cauchy problem, the reconstructed solution coincides with the CNLS solution used in the proof of Theorem~\ref{thm:global_schwartz}.

\section{The nonlinear steepest descent method}\label{sec:4}

In this section, we analyze the RH problem \eqref{eq:original_RH} by the nonlinear steepest descent method. We first factorize the jump matrix and deform the contour through the stationary point $k_0=-x/(4t)$. The local expansion of $\delta(k;k_0)$ determines the jumps of the parabolic cylinder model on the resulting cross. Comparing the deformed RH problem with this model yields an asymptotic formula for the full coefficient $M_1(x,t)$ of $k^{-1}$ with an $O_C(t^{-1}\log t)$ remainder, which is used in Section~\ref{sec:5} to recover both the CNLS solution and the gauge phase.

Throughout this section, $t>0$ and $\theta(k)=\Phi(k)/t=2(k-k_0)^2-2k_0^2$, where $k_0=-x/(4t)$ is the stationary point. We write $\nu=\nu(k_0)$ and $\chi_0=\chi_0(k_0)$. All estimates are uniform for $k_0$ in any fixed compact interval. For an oriented contour, the $+$-side (resp.~$-$-side) is the region to the left (resp.~right) of the direction of traversal.

\subsection{Triangular factorizations of the jump matrix}

With $\mathsf U(f)=\begin{pmatrix}1&f\\0&\Id_2\end{pmatrix}$ and $\mathsf L(h)=\begin{pmatrix}1&0\\h&\Id_2\end{pmatrix}$, the real-axis jump admits the factorizations
\begin{equation}\label{eq:factorizations}
\begin{aligned}
 J&=\mathsf U(-e^{-2\ii t\theta}\gamma) \mathsf L(-e^{2\ii t\theta}\gamma^\dagger),\\
 J&=\mathsf L\left(-\frac{e^{2\ii t\theta}\gamma^\dagger}{1+|\gamma|^2}\right)\diag\bigl(1+|\gamma|^2,(\Id_2+\gamma^\dagger\gamma)^{-1}\bigr)\mathsf U\left(-\frac{e^{-2\ii t\theta}\gamma}{1+|\gamma|^2}\right).
\end{aligned}
\end{equation}
The lower-right block in the second factorization follows from $(\Id_2+\gamma^\dagger\gamma)^{-1}=\Id_2-\gamma^\dagger\gamma/(1+|\gamma|^2)$.

To remove the block-diagonal factor on $(-\infty,k_0)$ in the second factorization of \eqref{eq:factorizations}, we introduce the $2\times2$ matrix factor $\delta(k)=\delta(k;k_0)$ through the normalized RH problem with the jump matrix acting by left multiplication:
\begin{equation}\label{eq:delta_rh}
 \delta_+(k)=(\Id_2+\gamma(k)^\dagger\gamma(k))\delta_-(k)\quad(k<k_0),\qquad \delta(\infty)=\Id_2.
\end{equation}
We seek a solution analytic in $\mathbb C\setminus(-\infty,k_0]$, with $\delta_\pm-\Id_2\in L^2(\mathbb R)$, such that both $\delta$ and $\delta^{-1}$ are bounded on this domain, including near the endpoint $k_0$.

The solvability and symmetry of \eqref{eq:delta_rh} are established in \cite[Section~3.1]{CNLS-DeiftZhou}. The boundary identities and inverse bound below follow from the jump relation and symmetry.

\begin{lemma}\label{lem:delta}
The RH problem \eqref{eq:delta_rh} has a unique solution, and
\begin{equation}\label{eq:delta_sym_bounds}
 \delta(k)^{-1}=\delta(k^*)^\dagger,\qquad \|\delta^{\pm1}\|_{L^\infty(\mathbb C\setminus(-\infty,k_0])}+\|d^{\pm1}\|_{L^\infty(\mathbb C\setminus(-\infty,k_0])}\le K,
\end{equation}
where
\begin{equation}\label{eq:det_delta}
 d(k)=\det\delta(k)=\exp\left[\frac1{2\pi\ii}\int_{-\infty}^{k_0}\frac{\log(1+|\gamma(\xi)|^2)}{\xi-k}\,\dd\xi\right].
\end{equation}
The same bounds hold for both boundary values away from the endpoint. On the cut,
\[
 \delta_+\delta_+^\dagger=\Id_2+\gamma^\dagger\gamma,\qquad \delta_-\delta_-^\dagger=(\Id_2+\gamma^\dagger\gamma)^{-1};
\]
for $k>k_0$, $\delta(k)$ is unitary.
\end{lemma}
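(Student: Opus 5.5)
The proof will go in four steps: uniqueness and symmetry, the scalar determinant, the boundary identities, and the uniform bounds. Existence is taken from \cite[Section~3.1]{CNLS-DeiftZhou}, which solves \eqref{eq:delta_rh} using the positive-definite structure of the jump $\Id_2+\gamma^\dagger\gamma$ (a vanishing lemma for Hermitian positive jumps).

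\textbf{Uniqueness and symmetry.} Suppose $\delta$ and $\tilde\delta$ both solve \eqref{eq:delta_rh} with the stated boundedness. Then $\tilde\delta\delta^{-1}$ has no jump on $(-\infty,k_0)$. It is bounded near $k_0$ because both $\delta^{-1}$ and $\tilde\delta$ are bounded, so the singularity at $k_0$ is removable. Hence $\tilde\delta\delta^{-1}$ is entire and tends to $\Id_2$ at infinity, so Liouville gives $\tilde\delta=\delta$.

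For the symmetry, set $\hat\delta(k)=(\delta(k^*)^\dagger)^{-1}$. Taking conjugate transposes swaps the $\pm$ boundary values, so
\[
\hat\delta_+=(\delta_-^\dagger)^{-1}=\bigl((\delta_+^\dagger)(\Id_2+\gamma^\dagger\gamma)^{-1}\bigr)^{-1}=(\Id_2+\gamma^\dagger\gamma)\hat\delta_+^{\,\prime},
\]
where $\hat\delta_+^{\,\prime}$ denotes $\hat\delta_-$. Here I used $\delta_+^\dagger=\delta_-^\dagger(\Id_2+\gamma^\dagger\gamma)$ and that the jump is Hermitian. So $\hat\delta$ solves the same normalized RH problem, and uniqueness gives $\delta(k)^{-1}=\delta(k^*)^\dagger$.

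\textbf{Determinant.} Taking determinants in \eqref{eq:delta_rh} gives $d_+=(1+|\gamma|^2)d_-$ on $(-\infty,k_0)$, since $\det(\Id_2+\gamma^\dagger\gamma)=1+|\gamma|^2$. The displayed formula \eqref{eq:det_delta} solves this scalar problem via the Plemelj formula and tends to $1$ at infinity. Its quotient with $\det\delta$ is entire: near $k_0$ both are bounded and bounded away from zero, because the Cauchy integral of $\log(1+|\gamma|^2)$ has only a bounded-modulus singularity there. Indeed, the real part of the exponent is bounded, since the singular part is $\frac{1}{2\pi\ii}\log(1+|\gamma(k_0)|^2)\log(k-k_0)$, whose real part is $\frac{\log(1+|\gamma(k_0)|^2)}{2\pi}\arg(k-k_0)$. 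Liouville then gives equality. The same computation bounds $d^{\pm1}$ uniformly in terms of $\|\log(1+|\gamma|^2)\|_{L^1\cap L^\infty}$ and a Lipschitz bound, both finite by Lemma~\ref{lem:Schwartz}.

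\textbf{Boundary identities.} For $k>k_0$ there is no jump, so the symmetry reads $\delta(k)^{-1}=\delta(k)^\dagger$ and $\delta$ is unitary. On the cut the symmetry gives $\delta_-^{-1}=\delta_+^\dagger$. Substituting into $\delta_+=(\Id_2+\gamma^\dagger\gamma)\delta_-$ yields
\[
\delta_+\delta_+^\dagger=(\Id_2+\gamma^\dagger\gamma)\delta_-\delta_-^{-1}=\Id_2+\gamma^\dagger\gamma .
\]
Similarly, $\delta_-\delta_-^\dagger=\delta_-\delta_+^{-1}=(\Id_2+\gamma^\dagger\gamma)^{-1}$.

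\textbf{Uniform bounds.} I do not expect to get boundedness of $\delta$ itself near $k_0$ from these identities alone; it is the main obstacle. My plan is to use a maximum-principle argument. The function $\operatorname{tr}(\delta\delta^\dagger)$ is subharmonic, since it is a sum of squared moduli of analytic entries. Its boundary values are bounded by $2(1+\|\gamma\|_\infty^2)$ on the cut and equal $2$ on $(k_0,\infty)$, and it tends to $2$ at infinity. To apply Phragm\'en--Lindel\"of near $k_0$, I need an a priori subexponential (e.g.\ $|k-k_0|^{-1/2}$) growth bound there. I would take this from the existence construction in \cite{CNLS-DeiftZhou}: it represents $\delta$ through an $L^2$ Cauchy integral with H\"older jump, so the blow-up at the endpoint is at most logarithmic or a small power.

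Given the bound on $\delta$, the bound on $\delta^{-1}$ follows from $\delta^{-1}=\adj(\delta)/d$, using the bounds on $\delta$ and on $d^{-1}$. Uniformity in $k_0$ over compact sets follows because every constant involved depends only on norms of $\gamma$.
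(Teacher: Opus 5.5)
Your outline follows the paper's argument. Existence is cited from \cite{CNLS-DeiftZhou}. The symmetry and the identities $\delta_+\delta_+^\dagger=\Id_2+\gamma^\dagger\gamma$ and $\delta_-\delta_-^\dagger=(\Id_2+\gamma^\dagger\gamma)^{-1}$ come from the jump relation plus uniqueness. The uniform bounds come from the maximum principle applied with those boundary identities. One step is wrong as written, though easily repaired. Because the jump acts by left multiplication, $\tilde\delta\delta^{-1}$ is not jump-free: it satisfies $(\tilde\delta\delta^{-1})_+=(\Id_2+\gamma^\dagger\gamma)(\tilde\delta\delta^{-1})_-(\Id_2+\gamma^\dagger\gamma)^{-1}$. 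Use $\delta^{-1}\tilde\delta$ instead. Then $\delta_+^{-1}\tilde\delta_+=\delta_-^{-1}\tilde\delta_-$, and your Liouville argument goes through unchanged. Your computation for $\hat\delta$ in the symmetry step is correct.

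What you call the main obstacle is not one. The problem \eqref{eq:delta_rh} is posed, and cited as solvable, in the class where $\delta^{\pm1}$ are bounded up to $k_0$, and your own uniqueness step already uses this. Hence $\operatorname{tr}(\delta\delta^\dagger)$ is a bounded subharmonic function on the slit plane, and the single point $k_0$ can be ignored. The maximum principle then gives $|\delta|^2\le2+\|\gamma\|_{L^\infty}^2$ directly. This is how the paper obtains a constant depending only on $\|\gamma\|_{L^\infty}$. Suppose you really had only an a priori bound $|\delta|\lesssim|k-k_0|^{-1/2}$. Then you would have to apply Phragm\'en--Lindel\"of to $\log\operatorname{tr}(\delta\delta^\dagger)$, because $\operatorname{tr}(\delta\delta^\dagger)$ itself would exceed the $o(|k-k_0|^{-1/2})$ threshold at the tip of the slit; compare $\re\,(k-k_0)^{-1}$. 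The remaining bounds are also simpler than your route. Symmetry gives $\sup|\delta^{-1}|=\sup|\delta|$, and for $2\times2$ matrices $|\det A|\le\frac12|A|^2$, so $|d^{\pm1}|\le\frac12|\delta^{\pm1}|^2$. Alternatively, $\log|d|$ is $\pm\frac12$ times a Poisson integral of $\log(1+|\gamma|^2)$, so $|\log|d||\le\frac12\log(1+\|\gamma\|_{L^\infty}^2)$. This needs no Lipschitz input and is uniform in $k_0\in\mathbb R$.
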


By the maximum modulus principle, the boundary identities ensure that the $L^\infty$ bounds in \eqref{eq:delta_sym_bounds} depend only on $\|\gamma\|_{L^\infty}$, and are therefore independent of $k_0\in\mathbb R$. The boundedness of $\delta_-$ and the Schwartz decay of $\gamma$ imply $\xi^j\gamma^\dagger\gamma\delta_-\mathbf1_{(-\infty,k_0)}\in L^1$ for $j=0,1$. The additive Cauchy representation therefore gives
\begin{equation}\label{eq:delta_expansion}
 \delta(k)=\Id_2+\frac{\delta^{[1]}}k+O(k^{-2}),\qquad \delta^{[1]}=-\frac1{2\pi\ii}\int_{-\infty}^{k_0}\gamma^\dagger\gamma\delta_-\,\dd\xi.
\end{equation}
The expansion holds in closed angular sectors separated from the real axis. The remainder is controlled by the first absolute moment of the jump density.

With the principal branch specified after \eqref{eq:endpoint_data}, the scalar factor satisfies
\begin{equation}\label{eq:d_local}
 d(k)=(k-k_0)^{\ii\nu}e^{\chi_0(k_0)}[1+O(|k-k_0|(1+|\log|k-k_0||))]
\end{equation}
as $k\to k_0$ along the rays $\arg(k-k_0)=\pm\pi/4,\pm3\pi/4$ or along the real axis with $k>k_0$. Equation \eqref{eq:chi0} and this expansion are obtained by subtracting the endpoint value of the logarithm on $(k_0-1,k_0)$ in \eqref{eq:det_delta}.

Define
\begin{equation}\label{eq:Delta_rho}
 \Delta=\diag(d,\delta^{-1}),\qquad \rho(k)=\begin{cases}-\gamma(k)/(1+|\gamma(k)|^2),&k<k_0,\\
          \gamma(k),&k>k_0,
          \end{cases}
 \qquad M^\Delta=M\Delta^{-1}.
\end{equation}
By \eqref{eq:det_delta} and  \eqref{eq:delta_expansion}, in closed angular sectors separated from the real axis,
\[
 \Delta(k)=\Id_3+\frac{\Delta_1}{k}+O(k^{-2}),\qquad \Delta_1=\diag(d^{[1]},-\delta^{[1]}),\qquad d(k)=1+\frac{d^{[1]}}{k}+O(k^{-2}).
\]
Following the transformation \eqref{eq:Delta_rho}, we reverse the orientation of the semi-infinite interval $(k_0,\infty)$ as depicted in Fig.~\ref{fig1:contour}(b). This orientation reversal interchanges the $\pm$ boundary values, so that the jump matrix is replaced by its inverse. The resulting jump matrix is
\begin{equation}\label{eq:M_Delta_rh}
 J^\Delta=\mathsf L\left(e^{2\ii t\theta}\delta_-^{-1}\rho^\dagger/d_-\right)\mathsf U\left(e^{-2\ii t\theta}d_+\rho\delta_+\right).
\end{equation}
On $(-\infty,k_0)$, the jump relation $\delta_+=(\Id_2+\gamma^\dagger\gamma)\delta_-$ and the identity $d_+=(1+|\gamma|^2)d_-$ give
\[
\begin{aligned}
 &\Delta_-^{-1}\Delta_+=\diag\bigl(1+|\gamma|^2,(\Id_2+\gamma^\dagger\gamma)^{-1}\bigr),\\
 &\Delta_-\diag\bigl(1+|\gamma|^2,(\Id_2+\gamma^\dagger\gamma)^{-1}\bigr)\Delta_+^{-1}=\Id_3.
\end{aligned}
\]
Consequently, the second factorization in \eqref{eq:factorizations} yields
\[
 \Delta_-J\Delta_+^{-1}=\mathsf L\left(-\frac{e^{2\ii t\theta}\delta_-^{-1}\gamma^\dagger}{(1+|\gamma|^2)d_-}\right)\mathsf U\left(-\frac{e^{-2\ii t\theta}d_+\gamma\delta_+}{1+|\gamma|^2}\right).
\]
On the reversed right half-line, $\Delta$ has no jump, and
\[
 \Delta J^{-1}\Delta^{-1}=\mathsf L(e^{2\ii t\theta}\delta^{-1}\gamma^\dagger/d)\mathsf U(e^{-2\ii t\theta}d\gamma\delta).
\]
The two expressions give \eqref{eq:M_Delta_rh} with the respective definitions of $\rho$.

\begin{figure}[t]
 \centering\includegraphics[width=.95\linewidth]{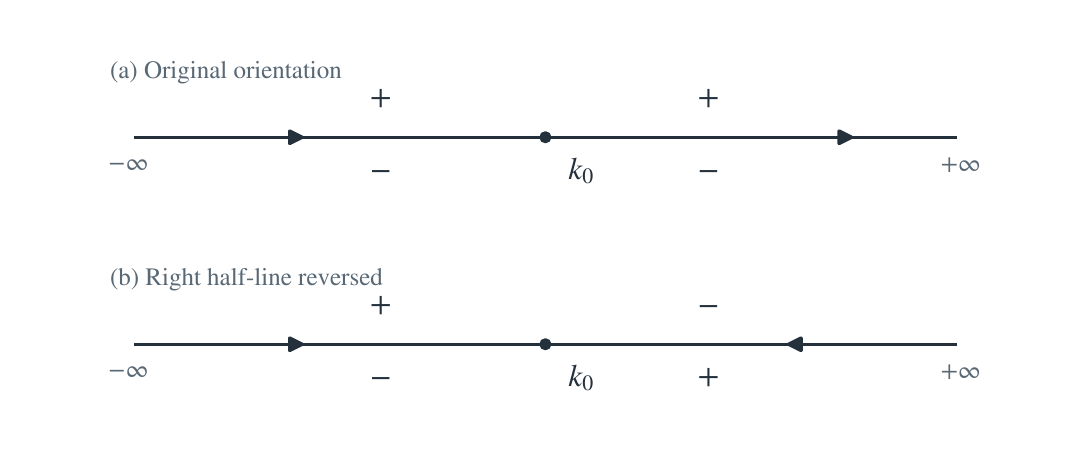}
 \caption{Orientation of the real contour: (a) the original left-to-right orientation; (b) the orientation after reversing $(k_0,\infty)$. In (b), both half-lines are oriented toward $k_0$. The $+$-side lies to the left of each oriented segment.}\label{fig1:contour}
\end{figure}

\subsection{Decomposition of \texorpdfstring{$\rho(k)$}{rho(k)}}

Let $L$ consist of the rays from $k_0$ at angles $3\pi/4$ and $-\pi/4$, and let $L^*$ consist of their conjugates. All four rays are oriented toward $k_0$. By Lemma~\ref{lem:Schwartz}, the restriction of $\rho$ to each half-line is smooth up to $k_0$ and admits the Taylor expansion used below.

\begin{lemma}\label{rational-de}
For each integer $l\ge1$, there exists a decomposition $\rho=R+h_1+h_2$ on each half-line, where $R$ is rational and has no poles in the closed sector between that half-line and $L$, and $h_2$ is analytic in the corresponding open sector. The functions $h_1$ and $h_2$ may depend on $t$ and satisfy
\[
 |e^{-2\ii t\theta}h_1|\le K_l t^{-l}(1+|k-k_0|^2)^{-1}\quad(k\in\mathbb R),\qquad|e^{-2\ii t\theta}h_2|\le K_l t^{-l}(1+|k-k_0|^2)^{-1}\quad(k\in L).
\]
\end{lemma}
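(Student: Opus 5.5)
This is the standard Deift--Zhou rational approximation, adapted to row-vector $\rho$ (componentwise). We treat the half-line $k>k_0$; the left half-line is the same with $-\gamma/(1+|\gamma|^2)$, which is smooth on $(-\infty,k_0]$ with all derivatives decaying rapidly by Lemma~\ref{lem:Schwartz}. Uniformity in $k_0$ over compact sets comes from the uniform Schwartz seminorms of $\gamma$.

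\textbf{Step 1: rational interpolation at $k_0$ and at infinity.} Fix $m$, to be chosen large in terms of $l$, and set $q=2m+2$. We construct $R(k)=\sum_{j=0}^{q}a_j(k-k_0)^j/(k-k_0+\ii)^{q+1}$. The pole at $k_0-\ii$ lies below the real axis, so it is outside the sector between $(k_0,\infty)$ and the ray $\arg(k-k_0)=-\pi/4$. For the left half-line we put the pole at $k_0+\ii$ instead, consistent with the sector reaching the ray at angle $3\pi/4$; on the conjugate sectors we mirror this. We choose the $a_j$ so that $\rho-R$ vanishes to order $m$ at $k_0$ and decays like $(k-k_0)^{-m-1}$ at infinity. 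This is possible because $\rho$ is Schwartz: we match Taylor coefficients at $k_0$ and asymptotic coefficients, all zero, at $\infty$. The result is $h:=\rho-R$ with
\[
 \frac{\dd^j h}{\dd k^j}(k)=O\bigl(|k-k_0|^{m-j}(1+|k-k_0|)^{-2m}\bigr)\quad (0\le j\le m).
\]

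\textbf{Step 2: Fourier splitting in the variable $\theta$.} On $k>k_0$ the map $k\mapsto\theta=2(k-k_0)^2-2k_0^2$ is a bijection onto $(-2k_0^2,\infty)$. We write $h(k)=(k-k_0)^{p}\,g(\theta)$ with $p$ fixed and small relative to $m$, extend $g$ by zero for $\theta<-2k_0^2$, and note that $g\in H^{s}$ with $s\sim m/4$, because $\dd/\dd\theta=(4(k-k_0))^{-1}\dd/\dd k$ and each derivative costs one power of $|k-k_0|$ from the vanishing order. Fourier inversion gives $g(\theta)=\int \hat g(s)e^{\ii s\theta}\,\dd s$. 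We then split the frequencies at $s=t$. We set $h_1$ to be $(k-k_0)^p$ times the part with $s<t$, and $h_2$ to be the part with $s\ge t$.

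Plancherel together with $\int|\hat g|(1+s^2)^{r}<\infty$ gives $|e^{-2\ii t\theta}h_1(k)|\le |k-k_0|^p\int_{-\infty}^{t}|\hat g|\,\dd s$ on $\mathbb R$. We need to be careful here: the sign of the phase dictates which frequency half is small, and I expect this to be the main obstacle. One must check that $e^{-2\ii t\theta}$ combined with $e^{\ii s\theta}$ continues analytically with decay into the sector bounded by $L$. In that sector $\operatorname{Im}\theta\le0$ along $\arg(k-k_0)=-\pi/4$, so $|e^{\ii(s-2t)\theta}|$ decays there exactly when $s<2t$. We therefore adjust the assignment: $h_2$ is the analytic piece $(k-k_0)^p\int_{-\infty}^{t}\hat g(s)e^{\ii s\theta}\,\dd s$, and $h_1$ is the high-frequency tail $\int_t^\infty$. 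The tail is bounded by $t^{-r}\|(1+s^2)^{r/2}\hat g\|_{L^1}\lesssim t^{-l}$ on $\mathbb R$. Multiplying by $(1+|k-k_0|^2)$ is absorbed by choosing the power $p$ and the decay of $h$ appropriately; alternatively, write $h=(k-k_0)^p(k-k_0+\ii)^{-p'}g$ to obtain the $(1+|k-k_0|^2)^{-1}$ factor directly.

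\textbf{Step 3: the bound on $L$.} For $k=k_0+ue^{-\ii\pi/4}$ we have $\theta+2k_0^2=-2\ii u^2$. The integrand therefore satisfies $|e^{\ii(s-2t)\theta}|=e^{-2(2t-s)u^2}\le e^{-2tu^2}$ for $s\le t$. This gives
\[
 |e^{-2\ii t\theta}h_2|\le |k-k_0|^{p}e^{-2tu^2}\|\hat g\|_{L^1}\lesssim t^{-p/2}e^{-tu^2},
\]
which yields $K_lt^{-l}(1+|k-k_0|^2)^{-1}$ once $p\ge2l+2$ and we use $|k-k_0+\ii|^{-p'}$ for large $u$. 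Finally, $m$ is chosen large enough that $g\in H^r$ with $r>l+1$.
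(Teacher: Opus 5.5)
Your proposal is correct and follows essentially the same route as the paper: Taylor-type rational interpolation at $k_0$ with a pole outside the closed sector, then writing $\rho-R$ as a power $\tau^p$ times a decaying rational factor times a Sobolev function of $\theta$ extended by zero, and finally splitting the Fourier integral at $s=t$. As in the paper, the high frequencies go into $h_1$ and the low frequencies into the analytic piece $h_2$; this is your corrected assignment, and it is the right one. The decaying factor is not optional, however. Without your $(k-k_0+\ii)^{-p'}$ (the paper uses $\tau^p/(\tau-\ii)^{p+2}$), the bound $|k-k_0|^p\int_t^\infty|\widehat g|\,\dd s$ grows along $\mathbb R$ and cannot produce the factor $(1+|k-k_0|^2)^{-1}$, so the first option you mention does not work.
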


\begin{proof}
We first treat the right half-line, where $\tau=k-k_0\ge0$. Choose an integer $p\ge2l+2$ and set $m=3p+1$. Define $R(k_0+\tau)=(\tau-\ii)^{-(m+5)}P_m(\tau)$, where $P_m(\tau)$ is the Taylor polynomial of degree $m$ of the function $(\tau-\ii)^{m+5}\rho(k_0+\tau)$ at $\tau=0$. Then $h=\rho-R=\tau^{m+1}G(\tau)/(\tau-\ii)^{m+5}$ near zero, where $G$ is smooth with bounded derivatives. Define $\theta_0=-2k_0^2$ and
\[
 f(\theta)=\begin{cases}
 (\tau-\ii)^{p+2}\tau^{-p}h(k_0+\tau),&\theta\ge\theta_0,\quad
                                   \theta=\theta_0+2\tau^2,\\
 0,&\theta<\theta_0.
 \end{cases}
\]
Since $\partial_\theta=(4\tau)^{-1}\partial_\tau$ and $f=\tau^{2p+2}G(\tau)/(\tau-\ii)^{2p+4}$ near zero,
\[
 |\partial_\theta^jf|\le K \tau^{2p+2-2j}, \qquad 0<\tau\le1,\quad 0\le j\le p+1.
\]
For $\tau\ge1$, $h^{(j)}=O(\tau^{-5-j})$, so $\partial_\theta^jf=O(\tau^{-3-2j})$. These bounds are square-integrable with respect to $\dd\theta=4\tau\,\dd \tau$. Since $\partial_\theta^j f(\theta_0+)=0$ for $0\le j\le p$, extending $f$ by zero to $\theta<\theta_0$ introduces no boundary terms in its weak derivatives through order $p+1$. Together with the preceding $L^2$ estimates, this gives $f\in H^{p+1}$. With
\[
 \widehat f(s)=\frac1{\sqrt{2\pi}}\int_{\mathbb R} e^{-\ii s\theta}f(\theta)\,\dd\theta,
\]
Plancherel's theorem and the Cauchy--Schwarz inequality give
\begin{equation}\label{eq:Fourier_bound}
 \int_{\mathbb R}(1+s^2)^{p+1}|\widehat f(s)|^2\,\dd s\le K_p, \qquad \|\widehat f\|_{L^1}\le K_p.
\end{equation}
Set
\[
 h_1=\frac{\tau^p}{(\tau-\ii)^{p+2}}\frac1{\sqrt{2\pi}}\int_t^\infty e^{\ii s\theta}\widehat f(s)\,\dd s,\qquad h_2=\frac{\tau^p}{(\tau-\ii)^{p+2}}\frac1{\sqrt{2\pi}} \int_{-\infty}^t e^{\ii s\theta}\widehat f(s)\,\dd s.
\]
On the real axis, the Cauchy--Schwarz inequality and \eqref{eq:Fourier_bound} give $|h_1|\le K(1+\tau^2)^{-1}t^{-p-1/2}$. In the open sector $-\pi/4<\arg \tau<0$, $\re(\ii\theta)>0$, so the formula for $h_2$ defines an analytic function of $k$. After multiplication by $e^{-2\ii t\theta}$, the integrand is bounded by $e^{-t\re(\ii\theta)}|\widehat f(s)|$ for $s\le t$. On $\arg \tau=-\pi/4$,
\[
 |e^{-2\ii t\theta}h_2| \le K\frac{|\tau|^p e^{-2t|\tau|^2}}{|\tau-\ii|^{p+2}} \le K(1+|\tau|^2)^{-1}t^{-p/2}.
\]
The construction on the left half-line uses $3\pi/4<\arg \tau<\pi$, where $\re(\ii\theta)>0$ as well. The pole $\tau=\ii$ lies outside both closed sectors. Define the reflected columns in the conjugate sectors by
\[
 R_{\mathrm{low}}(k)=R(k^*)^\dagger,\qquad h_{2,\mathrm{low}}(k)=h_2(k^*)^\dagger.
\]
The subscript $\mathrm{low}$ refers to the lower triangular factor. Both reflected columns are analytic in $k$ in the conjugate sectors. The rational pole $\tau=-\ii$ lies outside $0\le\arg \tau\le\pi/4$ and $-\pi\le\arg \tau\le-3\pi/4$. On the real axis, the boundary values of $R_{\mathrm{low}}$ and $h_{2,\mathrm{low}}$ are $R^\dagger$ and $h_2^\dagger$, respectively. Reflection gives the corresponding estimates with $e^{2\ii t\theta}$. Shifted Schwartz seminorms bound the Taylor coefficients uniformly for $k_0$ in compact subsets of $\mathbb R$, and the choices $p\ge2l+2$, $m=3p+1$ give the prescribed order.\qed
\end{proof}

The two half-line constructions have endpoint values
\[
 R_{\mathrm{low}}(k_0)=R(k_0)^\dagger =\begin{cases}-\gamma(k_0)^\dagger/(1+|\gamma(k_0)|^2),&\text{left},\\
                \gamma(k_0)^\dagger,&\text{right}.
   \end{cases}
\]

\subsection{The first contour deformation}

Let $\Sigma=\mathbb R\cup L\cup L^*$, oriented as in Fig.~\ref{fig2:contour}, which shows the two central sectors $\Omega_1,\Omega_2$ and the four sectors adjacent to the real axis. The exponential factor $e^{-2\ii t\theta}$ in the upper off-diagonal block decays in $\Omega_5\cup\Omega_6$, while $e^{2\ii t\theta}$ in the lower off-diagonal block decays in $\Omega_3\cup\Omega_4$. Write $J^\Delta=b_-^{-1}b_+$, where
\[
 b_+=\mathsf U(e^{-2\ii t\theta}d\rho\delta),\qquad b_-=\mathsf L(-e^{2\ii t\theta}\delta^{-1}\rho^\dagger/d).
\]
On the real contour, the boundary values are those used in \eqref{eq:M_Delta_rh}. Write $b_\pm=b_\pm^o b_\pm^a$. The factors $b_+^o$ and $b_-^o$ contain the $h_1$ and $h_1^\dagger$ terms, respectively. The analytic factors in the corresponding sectors are
\[
 b_+^a=\mathsf U\bigl(e^{-2\ii t\theta}d(R+h_2)\delta\bigr),\qquad b_-^a=\mathsf L\bigl(-e^{2\ii t\theta} \delta^{-1}(R_{\mathrm{low}}+h_{2,\mathrm{low}})/d\bigr).
\]

\begin{figure}[t]
 \centering\includegraphics[width=.95\linewidth]{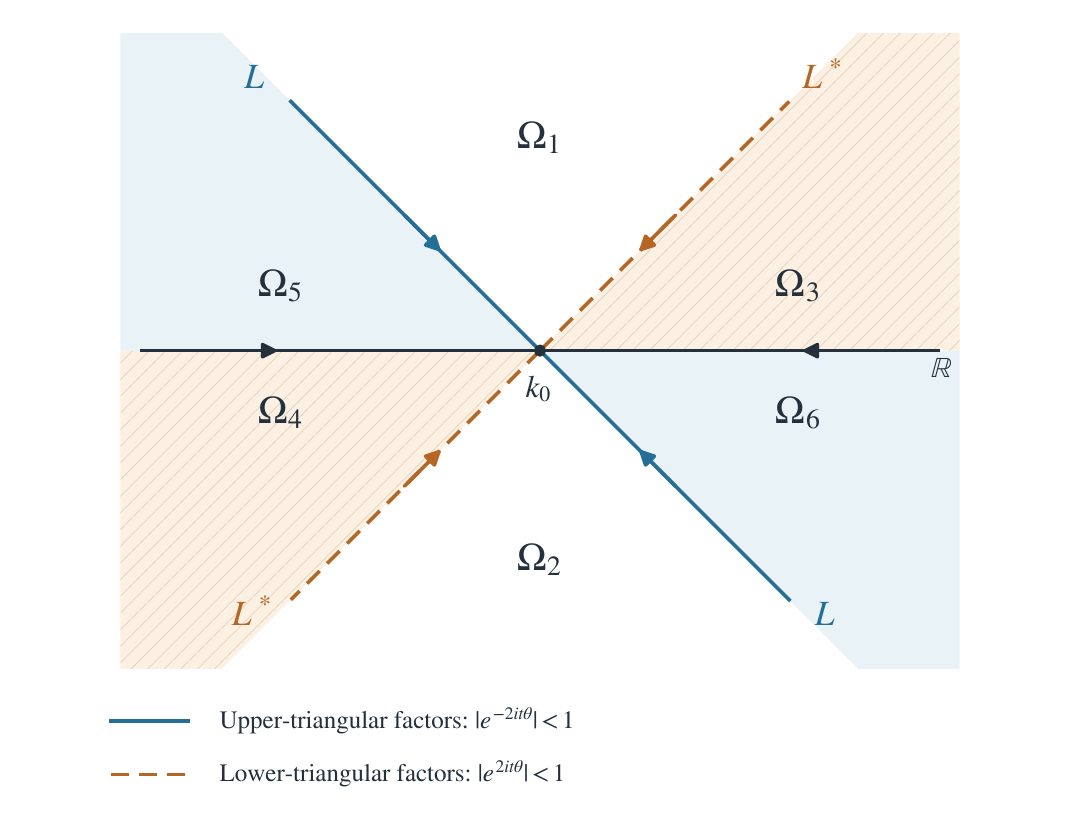}
 \caption{The first deformed contour $\Sigma=\mathbb R\cup L\cup L^*$ and its six sectors. All contour segments are oriented toward $k_0$. The solid oblique rays form $L$ and the dashed oblique rays form $L^*$. The unhatched shaded sectors satisfy $|e^{-2\ii t\theta}|<1$; the hatched sectors satisfy $|e^{2\ii t\theta}|<1$ for $t>0$.}
 \label{fig2:contour}
\end{figure}

\begin{lemma}\label{lem:deformation}
The piecewise analytic transformation
\[
 M^\sharp=\begin{cases}
 M^\Delta,&\Omega_1\cup\Omega_2,\\
 M^\Delta(b_-^a)^{-1},&\Omega_3\cup\Omega_4,\\
 M^\Delta(b_+^a)^{-1},&\Omega_5\cup\Omega_6
 \end{cases}
\]
has jump matrices $(b_-^o)^{-1}b_+^o$ on the real contour, $b_+^a$ on $L$, and $(b_-^a)^{-1}$ on $L^*$. As $k\to\infty$ in closed angular sectors separated from $\Sigma$,
\[
 M^\sharp(k)=\Id_3+\frac{M_1-\Delta_1}{k}+O(k^{-2}).
\]
\end{lemma}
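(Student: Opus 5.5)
The plan is to verify the jump relations contour by contour and then read off the expansion at infinity. I will use the factorization $J^\Delta=b_-^{-1}b_+$ from \eqref{eq:M_Delta_rh}, together with $b_\pm=b_\pm^ob_\pm^a$ and the analyticity of the pieces from Lemma~\ref{rational-de}.

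\textbf{Real contour.} On each half-line of the real contour, the sectors adjacent to its $+$ side and its $-$ side are one of $\Omega_5\cup\Omega_6$ and one of $\Omega_3\cup\Omega_4$. The orientation convention of Fig.~\ref{fig2:contour} fixes which one is which. Suppose the $+$ side borders the sector where $b_+^a$ is removed and the $-$ side borders the sector where $b_-^a$ is removed. Then
\[
 M^\sharp_-{}^{-1}M^\sharp_+=b_-^a\,J^\Delta\,(b_+^a)^{-1}=b_-^a(b_-^ab_-^o)^{-1}\ldots,
\]
and I should write this out carefully. With $b_-=b_-^ob_-^a$, we have $b_-^{-1}=(b_-^a)^{-1}(b_-^o)^{-1}$. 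The triangular factors of the same type commute, since they are additive in the off-diagonal block. Hence the ordering is immaterial, and the computation gives $(b_-^o)^{-1}b_+^o$. Here I will use that $b_\pm^o$ and $b_\pm^a$ are of the same triangular type, so they commute and $b_\pm=b_\pm^ob_\pm^a=b_\pm^ab_\pm^o$.

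\textbf{Oblique rays.} On a ray of $L$, $M^\sharp$ is $M^\Delta$ on the side in $\Omega_1\cup\Omega_2$ and $M^\Delta(b_+^a)^{-1}$ on the side in $\Omega_5\cup\Omega_6$. Since $M^\Delta$ has no jump there, the jump is $b_+^a$ or its inverse, depending on orientation; the convention that all rays point toward $k_0$ puts the central sector on the $-$ side and yields $b_+^a$. The same argument on $L^*$ yields $(b_-^a)^{-1}$. I will also check that no jump appears across the other boundaries and that $k_0$ is not a point singularity. For the latter, $M^\sharp$ is bounded near $k_0$, because $\delta^{\pm1}$ and $d^{\pm1}$ are bounded by Lemma~\ref{lem:delta} and $R$ and $h_2$ are bounded on the closed sectors; a bounded isolated singularity of the jump problem is removable in the $L^2$ sense.

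\textbf{Main obstacle: the expansion at infinity.} This is where the real work lies. In closed sectors separated from $\Sigma$ lying inside $\Omega_1\cup\Omega_2$, we have $M^\sharp=M\Delta^{-1}$. Combining $M=\Id_3+M_1/k+O(k^{-2})$ with $\Delta^{-1}=\Id_3-\Delta_1/k+O(k^{-2})$ gives the coefficient $M_1-\Delta_1$. In the sectors $\Omega_3,\dots,\Omega_6$, one must show that $(b_\pm^a)^{-1}-\Id_3=O(k^{-N})$ in such sectors; this would give $M^\sharp=M^\Delta(\Id_3+O(k^{-2}))$ there. There are three ingredients. First, $e^{\mp2\ii t\theta}$ decays like $e^{-ct|k|^2}$ in closed subsectors strictly inside these sectors. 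Second, $R$ is rational and bounded. Third, $h_2$ is bounded by $|\tau|^p|\tau-\ii|^{-p-2}\|\widehat f\|_{L^1}$ times the exponential. Together these give exponential decay of the off-diagonal entries, uniformly in $t$ fixed. The delicate point is that the decay uses $|e^{-2\ii t\theta}|=e^{-4t\,\re(\ii(\tau^2))}$ with $\re(\ii\tau^2)\gtrsim|\tau|^2$ only away from the sector boundaries. This is exactly why the statement restricts to sectors separated from $\Sigma$.
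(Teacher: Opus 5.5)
Your plan has the same structure as the paper's proof: read off the jumps sector by sector from $J^\Delta=b_-^{-1}b_+$ with $b_\pm=b_\pm^ob_\pm^a$, then expand $M^\Delta=M\Delta^{-1}$ and show $b_\pm^a\to\Id_3$. Your real-axis computation is correct, and it does not even need commutativity: with the stated orderings, $b_-^ab_-^{-1}=(b_-^o)^{-1}$ and $b_+(b_+^a)^{-1}=b_+^o$ directly.

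\textbf{The step on $L$ is wrong as written.} With your convention $J=(M^\sharp_-)^{-1}M^\sharp_+$, a central sector on the $-$ side would give $M^\sharp_-=M^\Delta$ and $M^\sharp_+=M^\Delta(b_+^a)^{-1}$. The jump would then be $(b_+^a)^{-1}$, not $b_+^a$. In fact the central sector lies on the $+$ side of $L$. Each sector between $\mathbb R$ and $L$ is bounded by a real half-line and a ray of $L$, both oriented toward $k_0$, so it lies on the $+$ side of one and the $-$ side of the other. You correctly placed it on the $+$ side of $\mathbb R$, so it lies on the $-$ side of $L$. You can also check this directly: on $\arg(k-k_0)=-\pi/4$ traversed toward $k_0$, the direction is $e^{3\pi\ii/4}$ and the left normal is $e^{-3\pi\ii/4}$, which points into the central sector. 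Hence $M^\sharp_+=M^\Delta$ and $M^\sharp_-=M^\Delta(b_+^a)^{-1}$ give the jump $b_+^a$. On $L^*$ the roles reverse. $\Omega_3\cup\Omega_4$ lies on the $-$ side of $\mathbb R$, hence on the $+$ side of $L^*$, and the central sector is on the $-$ side. This reversal is exactly what produces the inverse $(b_-^a)^{-1}$, so ``the same argument'' must use the opposite side assignment.

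\textbf{Your expansion at infinity takes a different route from the paper.} The paper uses only $|e^{\mp2\ii t\theta}|\le1$ on the closed sectors together with algebraic decay. Since $R=P_m(\tau)/(\tau-\ii)^{m+5}$, we get $e^{-2\ii t\theta}R=O(k^{-5})$. The prefactor $\tau^p/(\tau-\ii)^{p+2}$ gives $e^{-2\ii t\theta}h_2=O(k^{-2})$. Hence $b_\pm^a=\Id_3+O(k^{-2})$ uniformly up to the real boundary of each sector. Your Gaussian-decay argument in closed subsectors is also valid and even gives $O(k^{-N})$ for every $N$. However, it degenerates at the sector boundaries, and it is not ``exactly why'' the statement is restricted to sectors separated from $\Sigma$. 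The $b_\pm^a$ estimates need no such restriction. The hypothesis is there because the expansions of $M$ and $\Delta$ hold only away from the real axis and because $M^\sharp$ jumps on $\Sigma$.

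\textbf{One correction to your ingredients.} The exponential bound holds for the product:
\[
|e^{-2\ii t\theta}h_2|\le(2\pi)^{-1/2}\|\widehat f\|_{L^1}\,|\tau|^p|\tau-\ii|^{-p-2}e^{-t\re(\ii\theta)}.
\]
The function $h_2$ itself is only bounded by the same expression with $e^{+t\re(\ii\theta)}$, because $|e^{\ii s\theta}|=e^{s\re(\ii\theta)}>1$ for $0<s\le t$. It is the truncation at $s=t$ that leaves net decay. So both your ``$h_2$ bounded by \dots\ times the exponential'' and ``$R$ and $h_2$ are bounded on the closed sectors'' in your removability remark should refer to $e^{-2\ii t\theta}h_2$. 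For removability at $k_0$, local boundedness near $k_0$ at fixed $t$ is all you need, and that does hold.
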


\begin{proof}
Lemmas~\ref{lem:delta} and \ref{rational-de} imply that $M^\sharp$ is analytic in each sector. Using the sectorwise definitions of $M^\sharp$ and the jump matrix \eqref{eq:M_Delta_rh}, we obtain the stated jumps with the inward orientations shown in Fig.~\ref{fig2:contour}. As $k\to\infty$ in $\Omega_5\cup\Omega_6$, $e^{-2\ii t\theta}R=O(k^{-5})$ and $e^{-2\ii t\theta}h_2=O(k^{-2})$. Both estimates hold up to the real boundary of each sector. Reflection gives the corresponding estimates in $\Omega_3\cup\Omega_4$. The bounded quantities $d^{\pm1}$ and $\delta^{\pm1}$ therefore give $b_\pm^a=\Id_3+O(k^{-2})$ in their respective sectors. Together with $M^\Delta=M\Delta^{-1}$ and the expansion of $\Delta$ after \eqref{eq:Delta_rho}, this proves the stated expansion.\qed
\end{proof}

Write $J^\sharp=(\Id_3-\omega_-^\sharp)^{-1}(\Id_3+\omega_+^\sharp)$ and $\omega^\sharp=\omega_-^\sharp+\omega_+^\sharp$. We use the same sum convention for $\omega'$ and $\omega^0$. For each pair, we use the Beals--Coifman operator $C_\omega$ and density $\mathfrak m$ \cite{Beals-Coifman}:
\begin{equation}\label{eq:BC}
 C_\omega f=C_+(f\omega_-)+C_-(f\omega_+),\qquad \mathfrak m=\Id_3+C_\omega\mathfrak m,\qquad \mathfrak m-\Id_3\in L^2(\Sigma).
\end{equation}
The operator $C_\omega$ depends on the ordered pair $(\omega_-,\omega_+)$; the Cauchy representation of the RH solution has density $\mathfrak m\omega$. The boundary operators $C_\pm$ associated with the Cauchy operator $\Cauchy f(k)=(2\pi\ii)^{-1}\int_\Sigma f(\xi)/(\xi-k)\,\dd\xi$ are bounded on $L^2$ for this fixed finite union of straight rays.

For all triangular weight pairs used below, $\omega_+$ and $\omega_-$ are supported in the upper and lower off-diagonal blocks, respectively. Hence, in the Frobenius norm,
\[
 |\omega_-|^2+|\omega_+|^2=|\omega|^2,\qquad\|\omega_\pm\|_{L^p}\le\|\omega\|_{L^p} \quad(1\le p\le\infty).
\]
The same statements hold for differences of these weight pairs. In particular,
\[
 \|C_\omega\| \le \bigl(\|C_+\|+\|C_-\|\bigr)\|\omega\|_{L^\infty},
\]
and the corresponding operator differences are controlled by the $L^\infty$ norm of the difference of the weight sums.

\begin{proposition}[\cite{Beals-Coifman}]\label{prop:BC_representation}
If $1-C_{\omega^\sharp}$ is invertible on $L^2(\Sigma)$, the normalized RH solution is represented by
\begin{equation}\label{eq:affine_mu}
 \mathfrak m^\sharp-\Id_3=(1-C_{\omega^\sharp})^{-1}C_{\omega^\sharp}\Id_3,\qquad M^\sharp=\Id_3+\Cauchy(\mathfrak m^\sharp\omega^\sharp).
\end{equation}
Its coefficient of $k^{-1}$ and the reconstruction formula are
\begin{equation}\label{eq:M_sharp_solution}
 M_1^\sharp=-\frac1{2\pi\ii}\int_\Sigma\mathfrak m^\sharp\omega^\sharp\,\dd\xi,\qquad \ww=2(M_1)_{12}=2(M_1^\sharp)_{12}.
\end{equation}
These representations also apply to the other pairs of jump weights, with the corresponding Beals--Coifman solution and coefficient of $k^{-1}$. The inverse operator acts on $C_{\omega^\sharp}\Id_3\in L^2$. The last equality in \eqref{eq:M_sharp_solution} uses Lemma~\ref{lem:deformation} and the block-diagonal coefficient of $\Delta$.
\end{proposition}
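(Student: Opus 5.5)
The plan is to verify the standard Beals--Coifman argument \cite{Beals-Coifman} directly for the weight pair $(\omega_-^\sharp,\omega_+^\sharp)$ on $\Sigma$. The key steps are: reduce the jump relation to the integral equation for $\mathfrak m^\sharp$, read off the $k^{-1}$ coefficient, and transfer the reconstruction formula from $M$ to $M^\sharp$.

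\emph{Step 1: invert the integral equation.} Assume $1-C_{\omega^\sharp}$ is invertible on $L^2(\Sigma)$. Since $\omega^\sharp_\pm\in L^2\cap L^\infty$, we have $C_{\omega^\sharp}\Id_3=C_+(\omega_-^\sharp)+C_-(\omega_+^\sharp)\in L^2$. We then set $\mathfrak m^\sharp=\Id_3+(1-C_{\omega^\sharp})^{-1}C_{\omega^\sharp}\Id_3$. This is equivalent to $\mathfrak m^\sharp=\Id_3+C_{\omega^\sharp}\mathfrak m^\sharp$ with $\mathfrak m^\sharp-\Id_3\in L^2$, which gives the first identity in \eqref{eq:affine_mu}.

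\emph{Step 2: check the jump of the Cauchy representation.} Define $\widetilde M=\Id_3+\Cauchy(\mathfrak m^\sharp\omega^\sharp)$. This makes sense because $\mathfrak m^\sharp\omega^\sharp\in L^2$, being a sum of an $L^2\cdot L^\infty$ term and an $L^\infty\cdot L^2$ term. Using $C_+-C_-=1$, we compute
\[
\widetilde M_\pm=\Id_3+C_\pm(\mathfrak m^\sharp\omega_-^\sharp)+C_\pm(\mathfrak m^\sharp\omega_+^\sharp).
\]
The integral equation then gives $\widetilde M_+=\mathfrak m^\sharp(\Id_3+\omega_+^\sharp)$ and $\widetilde M_-=\mathfrak m^\sharp(\Id_3-\omega_-^\sharp)$. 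Hence $\widetilde M_+=\widetilde M_-J^\sharp$ and $\widetilde M\to\Id_3$ at infinity.

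\emph{Step 3: identify $\widetilde M$ with $M^\sharp$.} The function $M^\sharp$ from Lemma~\ref{lem:deformation} solves the same normalized RH problem. The main obstacle is uniqueness in the $L^2$ class. Here I would take $M^\sharp(\widetilde M)^{-1}$, using $\det J^\sharp=1$ so that $\det\widetilde M\equiv1$. This quotient has no jump across $\Sigma$. Its boundary values lie in $L^1_{\mathrm{loc}}$, since they are products of an $\Id_3+L^2$ factor with another such factor. It is also $\Id_3+O(1/k)$ away from $\Sigma$. Morera's theorem then makes it entire, and Liouville's theorem gives $M^\sharp(\widetilde M)^{-1}=\Id_3$. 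Equivalently, one can argue that $(1-C_{\omega^\sharp})$ is injective and that the boundary values of $M^\sharp$ produce a second solution of the integral equation via $\mathfrak m=M^\sharp_+(\Id_3+\omega_+^\sharp)^{-1}$; this is the route I would try first.

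\emph{Step 4: read off $M_1^\sharp$ and the reconstruction.} Expand $1/(\xi-k)=-k^{-1}-\xi k^{-2}-\cdots$ in closed sectors away from $\Sigma$. Dominated convergence applies because $\mathfrak m^\sharp\omega^\sharp\in L^1$, since $\omega^\sharp\in L^1\cap L^2$ by Lemma~\ref{rational-de}. This gives $M_1^\sharp=-(2\pi\ii)^{-1}\int_\Sigma\mathfrak m^\sharp\omega^\sharp\,\dd\xi$. By Lemma~\ref{lem:deformation}, $M_1^\sharp=M_1-\Delta_1$. Since $\Delta_1=\diag(d^{[1]},-\delta^{[1]})$ is block diagonal, its $(1,2)$-block vanishes, so $(M_1^\sharp)_{12}=(M_1)_{12}$. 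Proposition~\ref{recon1} then gives $\ww=2(M_1)_{12}=2(M_1^\sharp)_{12}$.

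The same argument applies verbatim to any triangular weight pair with $\omega_\pm\in L^1\cap L^\infty$.
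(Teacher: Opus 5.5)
Steps~1, 2 and 4 are the standard Beals--Coifman argument, which the paper does not reprove but cites. Your Step~4 is exactly the paper's stated reason for the last equality: $M_1^\sharp=M_1-\Delta_1$ by Lemma~\ref{lem:deformation}, and $\Delta_1=\diag(d^{[1]},-\delta^{[1]})$ is block diagonal, so the $(1,2)$ blocks agree.

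The first route you give in Step~3 has a gap as written. For $3\times3$ matrices in the $L^2$ class, $\det J^\sharp=1$ does not by itself give $\det\widetilde M\equiv1$. The boundary values $\det\widetilde M_\pm=\det\mathfrak m^\sharp$ do agree. However, $\det\mathfrak m^\sharp-1$ contains cubic products of entries of $\mathfrak m^\sharp-\Id_3\in L^2$, which are only guaranteed to lie in $L^{2/3}$, while the Morera/removability step needs $L^1_{\mathrm{loc}}$. For comparison, $1/(k-a)$ with $a\in\Sigma$ has no jump and boundary values in every $L^p_{\mathrm{loc}}$ with $p<1$, yet it is not entire. So you do not know a priori that $\widetilde M$ is invertible. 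Your stated reason for local integrability is also off: even granting $\det\widetilde M=1$, $\widetilde M_\pm^{-1}=\adj\widetilde M_\pm$ is quadratic in the entries, so it is not of the form $\Id_3+L^2$.

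The remedy is to invert the known solution instead. $M^\sharp$ is bounded and $\det M^\sharp=1$, because $\det M=1$, $\det\Delta=d\det\delta^{-1}=1$, and the factors $b_\pm^a$ are unipotent. Hence $(M^\sharp)^{-1}=\adj M^\sharp$ is bounded, and $\widetilde M(M^\sharp)^{-1}$ is the right quotient to show equals $\Id_3$. This is how the paper argues uniqueness for the model problem in Proposition~\ref{prop:model}, using $\widetilde M^0(M^0)^{-1}$ and $\Xi=\mathcal Y(M^0)^{-1}$.

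Your second route, via injectivity of $1-C_{\omega^\sharp}$ and $\mathfrak m=M^\sharp_+(\Id_3+\omega^\sharp_+)^{-1}$, is sound and is the better choice. It requires two checks: $M^\sharp_\pm-\Id_3\in L^2(\Sigma)$, and $M^\sharp-\Id_3=\Cauchy(M^\sharp_+-M^\sharp_-)$. Both follow from Proposition~\ref{recon1} (bounded $M$ with $M_\pm-\Id_3\in L^2$), the bounds on $\Delta^{\pm1}$, and the decay of $b_\pm^a-\Id_3$ in their sectors. The same two facts are the input needed for the corrected version of the first route.
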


\subsection{Reduction to the parabolic cylinder model}

We separate the $h_1$ and $h_2$ contributions from the rational terms in the jump weights. Lemma~\ref{lem:comparison} controls the resulting change in the coefficient of $k^{-1}$ in the RH solution.

Define the error weights $\omega^e_\pm$ by retaining the $h_1$ and $h_1^\dagger$ terms on $\mathbb R$, the $h_2$ terms on $L$, and the $h_{2,\mathrm{low}}$ terms on $L^*$. Set $\omega'_\pm=\omega^\sharp_\pm-\omega^e_\pm$ and $\omega^e=\omega^e_-+\omega^e_+$. The remaining pair is supported on $\Sigma'=L\cup L^*$ and is obtained by retaining only $R$ and $R_{\mathrm{low}}$ in the off-diagonal blocks. On $L$, the only nonzero block of $\omega'_+$ is $e^{-2\ii t\theta}dR\delta$. On $L^*$, the only nonzero block of $\omega'_-$ is $e^{2\ii t\theta}\delta^{-1}R_{\mathrm{low}}/d$.

\begin{lemma}\label{lem3.3}
For the decomposition in Lemma~\ref{rational-de} with fixed $l\ge1$,
\begin{equation}\label{eq:weight_norms}
 \|\omega^e\|_{L^1}+\|\omega^e\|_{L^2}+\|\omega^e\|_{L^\infty}=O(t^{-l}),\quad \|\omega'\|_{L^1}=O(t^{-1/2}),\quad\|\omega'\|_{L^2}=O(t^{-1/4}),\quad\|\omega'\|_{L^\infty}=O(1).
\end{equation}
\end{lemma}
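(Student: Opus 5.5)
The bounds split into two independent parts. The error weights $\omega^e$ are controlled by Lemma~\ref{rational-de} together with the uniform bounds on $d^{\pm1},\delta^{\pm1}$ from Lemma~\ref{lem:delta}. The rational weights $\omega'$ are controlled by explicit Gaussian decay of $e^{\mp2\ii t\theta}$ along the rays $L\cup L^*$. Throughout, $\omega_\pm$ occupy distinct off-diagonal blocks, so $|\omega|$ is simply the Frobenius norm of the single nonzero block at each point.

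For $\omega^e$, the plan is to treat each piece of the contour in turn.

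On $\mathbb R$, the nonzero blocks are $e^{-2\ii t\theta}d_+h_1\delta_+$ and $e^{2\ii t\theta}\delta_-^{-1}h_1^\dagger/d_-$. Lemma~\ref{rational-de} gives $|e^{-2\ii t\theta}h_1|\le K_lt^{-l}(1+|k-k_0|^2)^{-1}$. Since $|e^{2\ii t\theta}|=1$ on $\mathbb R$, the lower block obeys the same bound. The boundary values of $d^{\pm1}$ and $\delta^{\pm1}$ are bounded near the endpoint as well: by the definition of the solution of \eqref{eq:delta_rh}, $\delta$ and $\delta^{-1}$ are bounded on the whole domain, including near $k_0$. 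Hence $|\omega^e|\le Kt^{-l}(1+|k-k_0|^2)^{-1}$ on $\mathbb R$.

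On $L$, the block is $e^{-2\ii t\theta}dh_2\delta$, and the same bound follows directly from the $h_2$ estimate. On $L^*$, the reflection $h_{2,\mathrm{low}}(k)=h_2(k^*)^\dagger$ gives $|e^{2\ii t\theta(k)}h_{2,\mathrm{low}}(k)|=|e^{-2\ii t\theta(k^*)}h_2(k^*)|$, because $\theta$ has real coefficients. This reduces $L^*$ to the $L$ estimate.

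The function $(1+|k-k_0|^2)^{-1}$ lies in $L^1\cap L^2\cap L^\infty$ of $\Sigma$ uniformly in $k_0$. Therefore all three norms of $\omega^e$ are $O(t^{-l})$.

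For $\omega'$, parametrize a ray of $L$ as $k=k_0+u e^{\ii\phi}$ with $u\ge0$ and $\phi\in\{3\pi/4,-\pi/4\}$. Then $\theta=2u^2e^{2\ii\phi}-2k_0^2$ with $e^{2\ii\phi}=-\ii$, so $|e^{-2\ii t\theta}|=e^{-4tu^2}$. The analogous computation gives the same on $L^*$ for $e^{2\ii t\theta}$.

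Next, $R$ is bounded on the closed sectors uniformly in $k_0$ on compacts. Its denominator $(\tau\mp\ii)^{m+5}$ stays away from zero there, and the Taylor coefficients are controlled by shifted Schwartz seminorms of $\gamma$ via Lemma~\ref{lem:Schwartz}. The same holds for $R_{\mathrm{low}}$. Combined with the bounds on $d^{\pm1}$ and $\delta^{\pm1}$, this gives $|\omega'|\le Ke^{-4tu^2}$ on each ray. The $L^\infty$ bound $O(1)$ is immediate. Integrating,
\[
\|\omega'\|_{L^1}\le 4K\int_0^\infty e^{-4tu^2}\,\dd u=O(t^{-1/2}),\qquad \|\omega'\|_{L^2}^2\le 4K^2\int_0^\infty e^{-8tu^2}\,\dd u=O(t^{-1/2}),
\]
which gives $\|\omega'\|_{L^2}=O(t^{-1/4})$.

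The only delicate point is uniformity near the vertex $k_0$. There $d$ behaves like $(k-k_0)^{\ii\nu}$, so it is worth checking that this factor stays bounded. On rays with $\arg(k-k_0)=\pm\pi/4,\pm3\pi/4$, $|(k-k_0)^{\ii\nu}|=e^{-\nu\arg(k-k_0)}$, which is bounded because $\nu$ is bounded by $\|\gamma\|_{L^\infty}$. This is in any case already contained in the uniform $L^\infty$ bound on $d^{\pm1}$ in \eqref{eq:delta_sym_bounds}, which is independent of $k_0$. So I expect no real obstacle beyond bookkeeping of the uniformity in $k_0$.
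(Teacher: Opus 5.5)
Your proposal is correct and takes essentially the same approach as the paper. The error weights are controlled by Lemma~\ref{rational-de} together with the bounds \eqref{eq:delta_sym_bounds}, and the rational weights by the Gaussian factor $e^{-4t|k-k_0|^2}$ on $L\cup L^*$, followed by a $\sqrt t$ rescaling. The only difference is cosmetic: the paper also records the $(1+|k-k_0|^5)^{-1}$ decay of $R$, whereas you use only its boundedness, which suffices because the Gaussian already gives integrability.
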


\begin{proof}
The error estimates follow from Lemma~\ref{rational-de} and the bounds in \eqref{eq:delta_sym_bounds}. On $L$, $|e^{-2\ii t\theta}dR\delta|\le K e^{-4t|k-k_0|^2}(1+|k-k_0|^5)^{-1}$. The bounds for $\delta^{-1}$ and $d^{-1}$ give the same estimate for the lower triangular jump weights on $L^*$. Rescaling by $\sqrt t$ yields \eqref{eq:weight_norms}.\qed
\end{proof}

\begin{lemma}\label{lem:comparison}
Assume that $1-C_{\omega^\sharp}$ and $1-C_{\omega'}$ are invertible on $L^2(\Sigma)$, with inverse norms bounded by $K$, uniformly in the parameters under consideration. Let $\mathfrak m^\sharp$ and $\mathfrak m'$ be the corresponding solutions of \eqref{eq:BC}, with $\mathfrak m^\sharp-\Id_3,\mathfrak m'-\Id_3\in L^2(\Sigma)$. Then
\begin{equation}\label{eq:comparison_identity}
\begin{split}
 &\mathfrak m^\sharp-\mathfrak m'=(1-C_{\omega^\sharp})^{-1}C_{\omega^e}\mathfrak m',\\
 &\mathfrak m^\sharp\omega^\sharp-\mathfrak m'\omega'=\mathfrak m'\omega^e+[(1-C_{\omega^\sharp})^{-1}C_{\omega^e}\mathfrak m']\omega^\sharp.
\end{split}
\end{equation}
Moreover,
\begin{equation}\label{eq:comparison_bound}
 \left|\int_\Sigma(\mathfrak m^\sharp\omega^\sharp-\mathfrak m'\omega')\,\dd\xi\right|\le K'\bigl(\|\omega^e\|_{L^1}+\|\omega'\|_{L^2}\|\omega^e\|_{L^2}+(\|\omega^e\|_{L^2}+\|\omega^e\|_{L^\infty}\|\omega'\|_{L^2}) \|\omega^\sharp\|_{L^2}\bigr).
\end{equation}
\end{lemma}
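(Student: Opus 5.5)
The identities in \eqref{eq:comparison_identity} follow from the linearity of $C_\omega$ in the weight pair and the two Beals--Coifman equations. The bound \eqref{eq:comparison_bound} then follows by estimating each term of the second identity with Cauchy--Schwarz and the $L^2$ boundedness of $C_\pm$.

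\emph{Derivation of the identities.} Since $\omega^\sharp_\pm=\omega'_\pm+\omega^e_\pm$ and $C_\omega$ is linear in $(\omega_-,\omega_+)$, we have $C_{\omega^\sharp}=C_{\omega'}+C_{\omega^e}$. From $\mathfrak m'=\Id_3+C_{\omega'}\mathfrak m'$ we get
\[
 (1-C_{\omega^\sharp})\mathfrak m'=\Id_3-C_{\omega^e}\mathfrak m'.
\]
Subtracting this from $(1-C_{\omega^\sharp})\mathfrak m^\sharp=\Id_3$ gives $(1-C_{\omega^\sharp})(\mathfrak m^\sharp-\mathfrak m')=C_{\omega^e}\mathfrak m'$. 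This manipulation has to be interpreted on $L^2$, because $\Id_3\notin L^2(\Sigma)$. I would write it for $\mathfrak m-\Id_3$: the difference $\mathfrak m^\sharp-\mathfrak m'$ lies in $L^2$, and $C_{\omega^e}\mathfrak m'=C_{\omega^e}\Id_3+C_{\omega^e}(\mathfrak m'-\Id_3)$. Here $C_{\omega^e}\Id_3\in L^2$ because $\omega^e\in L^2$, and $C_{\omega^e}(\mathfrak m'-\Id_3)\in L^2$ because $\omega^e\in L^\infty$. Applying the inverse gives the first identity. For the second identity, write
\[
 \mathfrak m^\sharp\omega^\sharp-\mathfrak m'\omega'=\mathfrak m'(\omega^\sharp-\omega')+(\mathfrak m^\sharp-\mathfrak m')\omega^\sharp
\]
and insert the first identity.

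\emph{The estimate.} Integrate the second identity and split $\mathfrak m'=\Id_3+(\mathfrak m'-\Id_3)$ in the first term:
\[
 \Bigl|\int_\Sigma\mathfrak m'\omega^e\Bigr|\le\|\omega^e\|_{L^1}+\|\mathfrak m'-\Id_3\|_{L^2}\|\omega^e\|_{L^2}.
\]
Since $\mathfrak m'-\Id_3=(1-C_{\omega'})^{-1}C_{\omega'}\Id_3$, the uniform inverse bound and $\|C_{\omega'}\Id_3\|_{L^2}\le(\|C_+\|+\|C_-\|)\|\omega'\|_{L^2}$ give $\|\mathfrak m'-\Id_3\|_{L^2}\le K\|\omega'\|_{L^2}$. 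This produces the first two terms of \eqref{eq:comparison_bound}.

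For the second term, Cauchy--Schwarz gives
\[
 \Bigl|\int_\Sigma[(1-C_{\omega^\sharp})^{-1}C_{\omega^e}\mathfrak m']\,\omega^\sharp\Bigr|\le K\|C_{\omega^e}\mathfrak m'\|_{L^2}\|\omega^\sharp\|_{L^2}.
\]
Splitting $\mathfrak m'$ again,
\[
 \|C_{\omega^e}\mathfrak m'\|_{L^2}\le\|C_{\omega^e}\Id_3\|_{L^2}+\|C_{\omega^e}(\mathfrak m'-\Id_3)\|_{L^2}\lesssim\|\omega^e\|_{L^2}+\|\omega^e\|_{L^\infty}\|\omega'\|_{L^2}.
\]
This gives the remaining term of \eqref{eq:comparison_bound}.

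The one delicate point is bookkeeping. We must check that every operator is applied only to $L^2$ functions, and that $\|\omega^e_\pm\|\le\|\omega^e\|$ in the $L^p$ norms, which holds by the block-support remark preceding Proposition~\ref{prop:BC_representation}. The bound on $C_{\omega^e}$ follows from the operator-norm estimate stated there.
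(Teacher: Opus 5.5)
Your proof is correct and follows essentially the same route as the paper. You subtract the two Beals--Coifman equations using $C_{\omega^\sharp}=C_{\omega'}+C_{\omega^e}$ and bound $\|\mathfrak m'-\Id_3\|_{L^2}$ by $\|\omega'\|_{L^2}$ through the uniform inverse, then split off the constant matrix $\Id_3$ in both $\mathfrak m'\omega^e$ and $C_{\omega^e}\mathfrak m'$ and finish with H\"older; your remark that the subtraction must be read in $\Id_3+L^2$, with $C_{\omega^e}\mathfrak m'\in L^2$, is the same point the paper records in passing.
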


\begin{proof}
Subtracting the two equations in \eqref{eq:BC} gives
\[
 (1-C_{\omega^\sharp})(\mathfrak m^\sharp-\mathfrak m')=C_{\omega^e}\mathfrak m'.
\]
Together with $\omega^\sharp=\omega'+\omega^e$, this proves \eqref{eq:comparison_identity}. Proposition~\ref{prop:BC_representation} gives $\|\mathfrak m'-\Id_3\|_{L^2}\le K'\|\omega'\|_{L^2}$. Separating the constant matrix from $\mathfrak m'$ gives
\[
 \|C_{\omega^e}\mathfrak m'\|_{L^2}\le K' (\|\omega^e\|_{L^2}+\|\omega^e\|_{L^\infty}\|\mathfrak m'-\Id_3\|_{L^2}).
\]
In particular, $C_{\omega^e}\mathfrak m'\in L^2(\Sigma)$. Moreover, $\|\mathfrak m'\omega^e\|_{L^1}\le\|\omega^e\|_{L^1}+\|\mathfrak m'-\Id_3\|_{L^2}\|\omega^e\|_{L^2}$. H\"older's inequality applied to the second identity in \eqref{eq:comparison_identity} proves \eqref{eq:comparison_bound}.\qed
\end{proof}

\begin{lemma}\label{lem:endpoint}
With $A_0$ as in \eqref{eq:endpoint_data}, write $\tau=k-k_0$ and set
\[
 \mathcal D(k)=\exp[\ii A_0\log(k-k_0)],\qquad F(\tau)=\mathcal D(k_0+\tau)^{-1}\delta(k_0+\tau).
\]
For $\tau<0$, the regular factor $F$ satisfies the jump relation
\begin{equation}\label{eq:F_jump}
 F_+=J_F(\tau)F_-,\qquad J_F(\tau)=\Id_2+\mathcal D_+^{-1}[\gamma(k_0+\tau)^\dagger\gamma(k_0+\tau)-\gamma(k_0)^\dagger\gamma(k_0)]\mathcal D_-.
\end{equation}
As $\tau\to0$ along the four oblique rays or along the positive real axis, $F(\tau)$ tends to $E_0$, and
\begin{equation}\label{eq:matrix_local}
 \delta(k)=\mathcal D(k)\{E_0+O(|\tau|(1+|\log|\tau||))\},\qquad E_0^\dagger E_0=\Id_2,\quad \det E_0=e^{\chi_0}.
\end{equation}
The estimates in \eqref{eq:matrix_local} are uniform for $k_0$ in compact subsets of $\mathbb R$, including points where $\gamma(k_0)=0$.
\end{lemma}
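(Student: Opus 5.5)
The plan is to treat $F$ as the solution of a small-norm matrix Riemann--Hilbert problem on the cut $(-\infty,0]$ (in the $\tau$ variable), whose jump equals the identity at $\tau=0$.

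\textbf{Jump relation and endpoint behaviour.} Write $P_0=\gamma(k_0)^\dagger\gamma(k_0)$. It is a Hermitian matrix of rank at most one, and $A_0=-\log(1+|\gamma(k_0)|^2)P_0/(2\pi|\gamma(k_0)|^2)$ is a function of $P_0$, so $A_0$ commutes with $P_0$. For $\tau<0$ the principal branch gives $\log_+-\log_-=2\pi\ii$. Hence
\[
\mathcal D_+=\mathcal D_-e^{-2\pi A_0}=\mathcal D_-(\Id_2+P_0),
\]
where we used $e^{-2\pi A_0}=\Id_2+P_0$ on the range of $P_0$ and the identity on its kernel. Substituting $\delta_\pm=\mathcal D_\pm F_\pm$ into \eqref{eq:delta_rh} and using that $\mathcal D_\pm$ commute with $P_0$, we get $F_+=\mathcal D_+^{-1}(\Id_2+\gamma^\dagger\gamma)\mathcal D_-F_-$. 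Now write $\Id_2+\gamma^\dagger\gamma=(\Id_2+P_0)+(\gamma^\dagger\gamma-P_0)$ and note $\mathcal D_+^{-1}(\Id_2+P_0)\mathcal D_-=\Id_2$; this gives \eqref{eq:F_jump}.

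Since $A_0$ is Hermitian, $\mathcal D_\pm$ are bounded: the factors $e^{\pm 2\pi A_0}$ have norm controlled by $1+\|\gamma\|_{L^\infty}^2$. By Lemma~\ref{lem:Schwartz}, $|J_F(\tau)-\Id_2|\le K\min(|\tau|,1)$, uniformly for $k_0$ in a compact set. Moreover $F$ is bounded near $\tau=0$, because $\delta$ and $\mathcal D^{-1}$ are bounded along the rays and in sectors.

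\textbf{H\"older continuity at $0$.} Write $F$ through its Cauchy representation
\[
F(\tau)=\Id_2+\frac1{2\pi\ii}\int_{-\infty}^0\frac{(J_F(s)-\Id_2)F_-(s)}{s-\tau}\,\dd s,
\]
which is valid because $F-\Id_2$ has a logarithmic-type growth at infinity controlled by $\mathcal D$. To handle this cleanly, I would apply the representation to $F$ times a normalizing factor, or work locally on $|\tau|\le1$ and treat the far part as an analytic remainder. The density $(J_F-\Id_2)F_-$ is bounded and $O(|s|)$ near $0$. Standard estimates for Cauchy integrals whose density vanishes linearly at an endpoint then show that $F(\tau)\to E_0$, with $E_0$ given by the integral evaluated at $\tau=0$, and that
\[
|F(\tau)-E_0|\le K|\tau|(1+|\log|\tau||).
\]
The logarithm comes from $\int_{-1}^0 |s|/(|s-\tau||s|)\,\dd s$. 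This bound holds uniformly in all directions of approach, including the oblique rays and $\tau>0$; in particular $E_0$ is the limit in \eqref{eq:endpoint_data}, and \eqref{eq:matrix_local} follows.

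\textbf{Identification of $E_0$.} For $\tau>0$, $\delta$ is unitary by Lemma~\ref{lem:delta}, and $\mathcal D$ is unitary there because $\log\tau$ is real and $A_0$ is Hermitian. Hence $F$ is unitary on $\tau>0$, and passing to the limit gives $E_0^\dagger E_0=\Id_2$. For the determinant, $\det\mathcal D=\exp[\ii\operatorname{tr}A_0\log\tau]=\tau^{\ii\nu}$, since $\operatorname{tr}A_0=\nu$. Combining this with \eqref{eq:d_local} gives $\det F\to e^{\chi_0}$.

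When $\gamma(k_0)=0$ we have $A_0=0$ and $\mathcal D=\Id_2$, so every step still holds. Uniformity in $k_0$ follows because all constants depend only on Schwartz seminorms of $\gamma$ on a neighbourhood of the compact set and on $\|\gamma\|_{L^\infty}$.

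\textbf{Main obstacle.} The delicate step is the Cauchy-integral estimate: showing that $F_-$ is bounded up to $\tau=0$, so that the density vanishes linearly there, and that this control is uniform in $k_0$. The function $A_0(k_0)$ is not smooth in $k_0$ where $\gamma$ vanishes, but only the bounds on $\mathcal D_\pm^{\pm1}$ enter, and these are uniform.
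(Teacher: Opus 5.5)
Your route is the paper's. You have the same derivation of \eqref{eq:F_jump} from $\mathcal D_+=(\Id_2+\gamma(k_0)^\dagger\gamma(k_0))\mathcal D_-$, and the same Cauchy-integral estimate: the density is $O(|\xi|)$ at the endpoint and the kernel difference is $\tau/[\xi(\xi-\tau)]$, which gives $|\tau|(1+|\log|\tau||)$. You also identify $E_0$ the same way, through unitarity on $\tau>0$ and $\det\mathcal D=\tau^{\ii\nu}$ combined with \eqref{eq:d_local}.

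The step that fails as written is the global representation $F=\Id_2+\frac1{2\pi\ii}\int_{-\infty}^0\cdots$, and your justification misdiagnoses why. $F$ does not grow; it is bounded. But for $\gamma(k_0)\ne0$ it has no limit at infinity: on $\tau>0$ it behaves like the unitary matrix $\exp[-\ii A_0\log\tau]$, which keeps rotating. Correspondingly, $J_F(s)-\Id_2\to-\gamma(k_0)^\dagger\gamma(k_0)/(1+|\gamma(k_0)|^2)\ne0$ as $s\to-\infty$. The density therefore tends to a nonzero multiple of $|s|^{-\ii\nu}\gamma(k_0)^\dagger\gamma(k_0)$, and its integral against $1/(s-\tau)$ does not converge at $s=-\infty$.

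Your fallback of working locally is what is needed, and it is exactly the paper's argument. Apply Cauchy's formula on the slit disk $|\zeta|<\epsilon$:
\begin{itemize}
\item the outer-circle integral is analytic in $|\tau|<\epsilon/2$ and contributes $O(|\tau|)$ to $F(\tau)-E_0$;
\item the cut integral over $(-\epsilon,0)$ produces the logarithm;
\item a small circle of radius $\varrho$ around the branch point contributes $O(\varrho/|\tau|)$, which vanishes because $F$ is bounded.
\end{itemize}
You should state the last point explicitly, since it is what lets you ignore the endpoint. Your other alternative, a normalizing factor, also works. An example is the factor $\mathcal B$ of Section~\ref{sec:6}, which has the same logarithmic singularity at $k_0$ but tends to $\Id_2$ at infinity. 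You would have to carry that out, though.

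Two smaller corrections. First, the estimate is not uniform over all directions of approach, because $|\xi-\tau|\ge c(|\xi|+|\tau|)$ fails as $\tau$ approaches the cut. It does hold along the four oblique rays and the positive axis, which is all the lemma claims. Second, boundedness of $F_-$ is not really an obstacle. Lemma~\ref{lem:delta} bounds $\delta$ on the whole slit plane, and $\mathcal D^{-1}$ is bounded there because $A_0$ is Hermitian and $|\arg\tau|<\pi$.
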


\begin{proof}
Since $e^{-2\pi A_0}=\Id_2+\gamma(k_0)^\dagger\gamma(k_0)$, we have $\mathcal D_+=(\Id_2+\gamma(k_0)^\dagger\gamma(k_0))\mathcal D_-$. Substitution into $F_+=\mathcal D_+^{-1}\delta_+$ gives \eqref{eq:F_jump}. Since $A_0$ is Hermitian and the argument on the chosen branch is bounded, $\mathcal D^{\pm1}$ are bounded; Lemma~\ref{lem:delta} then implies that $F$ and its boundary values are bounded. Smoothness of $\gamma$ gives $J_F(\xi)-\Id_2=O(\xi)$. Since $F_-$ is bounded, $[J_F(\xi)-\Id_2]F_-(\xi)=O(|\xi|)$ for $\xi<0$.

Choose a fixed small $\epsilon>0$. Cauchy's formula in the slit disk, with the outer circle traversed counterclockwise, gives
\[
 F(\tau)=\frac1{2\pi\ii}\oint_{|\zeta|=\epsilon} \frac{F(\zeta)}{\zeta-\tau}\,\dd\zeta+\frac1{2\pi\ii}\int_{-\epsilon}^0\frac{[J_F(\xi)-\Id_2]F_-(\xi)}{\xi-\tau}\,\dd\xi.
\]
For fixed nonzero $\tau$, the inner circle of radius $\varrho$ contributes $O(\varrho/|\tau|)$ and hence vanishes, because $F$ is bounded. The right-hand side is well defined at $\tau=0$; denote its value by $E_0$. Subtracting this value from $F(\tau)$ introduces the kernel difference $\tau/[\xi(\xi-\tau)]$ in the integral over the cut. For $|\tau|<\epsilon/2$ along the stated directions, $|\xi-\tau|\ge c(|\xi|+|\tau|)$, hence
\[
 |F(\tau)-E_0|\le K|\tau|+K|\tau|\int_0^\epsilon\frac{\dd\xi}{\xi+|\tau|}\le K|\tau|(1+|\log|\tau||).
\]
On $\tau>0$, both $\mathcal D$ and $\delta$ are unitary, so $E_0$ is unitary and is the limit appearing in \eqref{eq:endpoint_data}. Since $\det \mathcal D=\tau^{\ii\nu}$, \eqref{eq:d_local} gives $\det E_0=e^{\chi_0}$. For $k_0$ in a fixed compact set, the bounds for $\delta$, $\mathcal D^{\pm1}$, and the shifted reflection coefficient are uniform. The same $\epsilon$ can therefore be used in the circle and jump-density estimates. The continuous extension of $A_0$ at $\gamma(k_0)=0$ preserves these bounds.\qed
\end{proof}

The identities $\gamma(k_0)\mathcal D=\tau^{\ii\nu}\gamma(k_0)$ and $\mathcal D^{-1}\gamma(k_0)^\dagger=\tau^{-\ii\nu}\gamma(k_0)^\dagger$, together with \eqref{eq:matrix_local}, give the following expansions on the rays $\arg(k-k_0)=-\pi/4$ and $\arg(k-k_0)=\pi/4$, respectively:
\begin{equation}\label{eq:effective_products}
\begin{aligned}
 &dR\delta=\tau^{2\ii\nu}e^{\chi_0}\gamma(k_0)E_0(k_0)+O(|\tau|(1+|\log|\tau||)),\\
 &\frac{\delta^{-1}R_{\mathrm{low}}}{d}=\tau^{-2\ii\nu}e^{-\chi_0}E_0(k_0)^\dagger\gamma(k_0)^\dagger+O(|\tau|(1+|\log|\tau||)).
\end{aligned}
\end{equation}
On the two rays with $\re(k-k_0)<0$, the leading terms in \eqref{eq:effective_products} acquire the additional factor $-1/(1+|\gamma(k_0)|^2)$.

Set $z=\sqrt{8t}(k-k_0)$ and
\begin{equation}\label{eq:eta}
 \eta=(8t)^{-\ii\nu/2}e^{2\ii k_0^2t+\chi_0/2},\qquad |\eta|=1.
\end{equation}
The change of variables maps $\Sigma'$ to the cross $\Sigma^0=\bigcup_{j=1}^4\Sigma_j^0$ shown in Fig.~\ref{fig3:contour}. The rays $\Sigma_1^0$, $\Sigma_2^0$, $\Sigma_3^0$, and $\Sigma_4^0$ have arguments $-\pi/4$, $\pi/4$, $3\pi/4$, and $-3\pi/4$, respectively, and are all oriented toward the origin. We seek a matrix $M^0(z)$ that is analytic in $\mathbb C\setminus\Sigma^0$, is normalized at infinity, and remains bounded together with its inverse as $z\to0$ within each sector. Its jumps are
\begin{equation}\label{eq:M0_rh}
\begin{array}{c|c}
 \text{ray}& J^0\\\hline
 \Sigma^0_1&\mathsf U(\eta^2z^{2\ii\nu}e^{-\ii z^2/2}\gamma(k_0)E_0(k_0))\\
 \Sigma^0_2&\mathsf L(\eta^{-2}z^{-2\ii\nu}e^{\ii z^2/2}E_0(k_0)^\dagger\gamma(k_0)^\dagger)\\
 \Sigma^0_3&\mathsf U(-\eta^2z^{2\ii\nu}e^{-\ii z^2/2}\gamma(k_0)E_0(k_0)/(1+|\gamma(k_0)|^2))\\
 \Sigma^0_4&\mathsf L(-\eta^{-2}z^{-2\ii\nu}e^{\ii z^2/2}E_0(k_0)^\dagger\gamma(k_0)^\dagger/(1+|\gamma(k_0)|^2))
\end{array}
\end{equation}
The powers of $z$ use the same principal-branch convention, $-\pi<\arg z<\pi$. For the model jumps, set $\omega_+^0=J^0-\Id_3$ and $\omega_-^0=0$ on $\Sigma_1^0\cup\Sigma_3^0$; set $\omega_-^0=J^0-\Id_3$ and $\omega_+^0=0$ on $\Sigma_2^0\cup\Sigma_4^0$. As before, $\omega^0=\omega_-^0+\omega_+^0$. We extend both weights by zero to $\mathbb R$, so that they are defined on $\Sigma^0\cup\mathbb R$.

\begin{figure}[htbp]
 \centering\includegraphics[width=.95\linewidth]{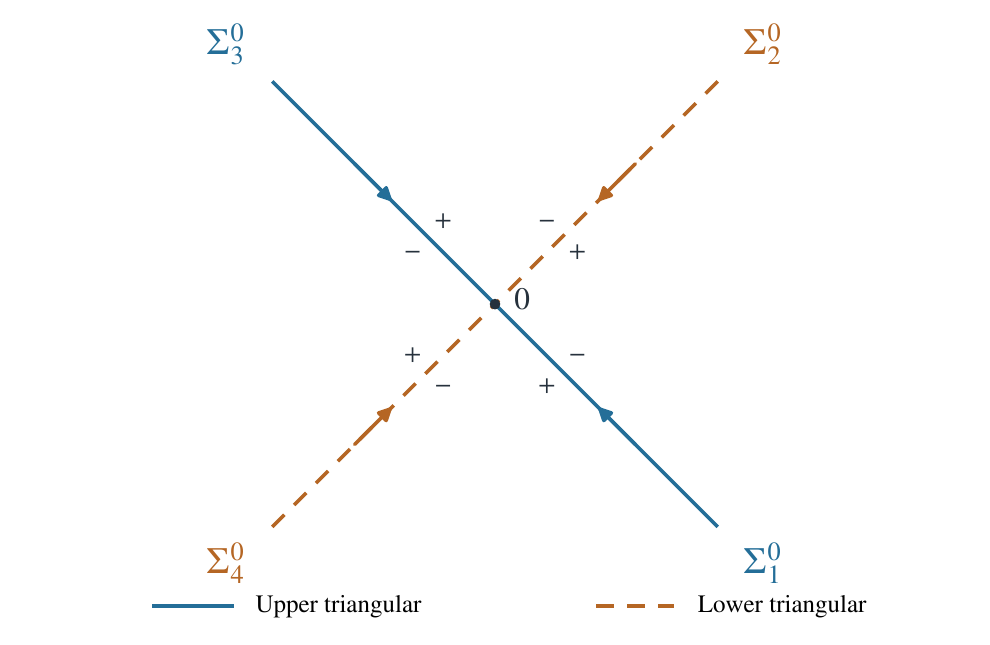}
  \caption{The four rays of the model cross $\Sigma^0$ in the $z$-plane are oriented toward the origin. Solid and dashed rays carry upper and lower triangular jumps, respectively. The plus and minus signs indicate the corresponding boundary sides.}
 \label{fig3:contour}
\end{figure}

\begin{proposition}\label{prop:model}
The model RH problem \eqref{eq:M0_rh} has a unique bounded matrix-valued solution $M^0(z)$. The operator $1-C_{\omega^0}$ is uniformly invertible on $L^2(\Sigma^0\cup\mathbb R)$ for $|\gamma(k_0)|\le\|\gamma\|_{L^\infty}$, unitary $E_0(k_0)$, and $|\eta|=1$. Writing $M^0(z)=\Id_3+M_1^0/z+O(z^{-2})$ as $z\to\infty$, we have $(M_1^0)_{11}=0$ and $(M_1^0)_{22}=0$, while
\begin{equation}\label{eq:model_coefficient}
 (M_1^0)_{12}=-\ii\eta^2\frac{e^{\pi\nu/2-\ii\pi/4}\Gamma(1+\ii\nu)}{\sqrt{2\pi}}\gamma(k_0)E_0(k_0).
\end{equation}
\end{proposition}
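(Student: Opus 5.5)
The plan is to reduce \eqref{eq:M0_rh} to the classical scalar $2\times2$ parabolic cylinder model of Deift and Zhou \cite{DeiftZhouNLS}, using two constant unitary conjugations. Set $r=\gamma(k_0)E_0(k_0)\in\mathbb C^{1\times2}$. Since $E_0$ is unitary, $|r|=|\gamma(k_0)|$ and $E_0^\dagger\gamma(k_0)^\dagger=r^\dagger$, so every jump in \eqref{eq:M0_rh} depends on the data only through $r$ and $\eta$.

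\textbf{Reduction.} If $r\ne0$, choose a unitary $2\times2$ matrix $V$ with $rV=(|r|,0)$; if $r=0$, take $V=\Id_2$. Put $W=\diag(1,V)$ and $\widetilde M(z)=W^\dagger M^0(z)W$. Since $W^\dagger\mathsf U(f)W=\mathsf U(fV)$ and $W^\dagger\mathsf L(h)W=\mathsf L(V^\dagger h)$, the jumps of $\widetilde M$ involve only the row $(|r|,0)$ and the column $(|r|,0)^T$. Hence $\widetilde M=\diag(N,1)$, where $N$ solves the $2\times2$ problem on the cross with reflection data $q=\eta^2|r|$.

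\textbf{Solution of the scalar model.} The problem for $N$ is the standard NLS model at the stationary point, with the same branch $z^{2\ii\nu}$ and $\nu=-\frac1{2\pi}\log(1+|q|^2)$. Following \cite{DeiftZhouNLS}, I undo the lens opening to obtain a real-axis jump and conjugate by $z^{\ii\nu\sigma_3}e^{-\ii z^2\sigma_3/4}$. This gives a constant-jump problem whose solution $\Psi$ satisfies a linear ODE $\Psi_z=(\text{const}+\ii z\sigma_3/2)\Psi$. The entries of $\Psi$ are then expressed through the parabolic cylinder functions $D_{\pm\ii\nu}$, $D_{\pm\ii\nu-1}$. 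Matching the connection formulas for $D_a$ across the sectors with the jumps determines the off-diagonal coefficient $\beta_{12}$. This gives $(N_1)_{12}=-\ii\beta_{12}$ with
\[
\beta_{12}=\frac{\sqrt{2\pi}\,e^{\ii\pi/4}e^{-\pi\nu/2}}{q^*\,\Gamma(-\ii\nu)}.
\]
Using $|\Gamma(\ii\nu)|^2=\pi/(\nu\sinh(\pi|\nu|))$ and $|q|^2=e^{-2\pi\nu}-1$, this rewrites as $(N_1)_{12}=-\ii q\,e^{\pi\nu/2-\ii\pi/4}\Gamma(1+\ii\nu)/\sqrt{2\pi}$. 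I will need to check sign and orientation conventions carefully against Fig.~\ref{fig3:contour}, since our rays point inward and the roles of $\mathsf U$ and $\mathsf L$ follow \eqref{eq:M0_rh}. In the scalar model the diagonal entries of $N_1$ vanish, because $\Psi$ has the form $e^{-\ii z^2\sigma_3/4}z^{\ii\nu\sigma_3}(\Id+\beta/z+\cdots)$ with $\beta$ off-diagonal. Transforming back gives $M^0_1=W\diag(N_1,0)W^\dagger$. Its $(1,1)$ entry is $(N_1)_{11}=0$, and its $(2,2)$ block is $V\diag((N_1)_{22},0)V^\dagger=0$. Its $(1,2)$ block is $(N_1)_{12}(1,0)V^\dagger=(N_1)_{12}r/|r|$, which equals \eqref{eq:model_coefficient} since $q/|r|=\eta^2$. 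The case $r=0$ is trivial, with $M^0=\Id_3$.

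\textbf{Uniqueness and uniform invertibility.} Uniqueness follows from the standard argument: $\det M^0\equiv1$ by Liouville, because $\det J^0=1$ and the solution is bounded at $0$. If $\widehat M$ is a second solution, $\widehat M(M^0)^{-1}$ has no jumps and a removable singularity at $0$, so it equals $\Id_3$. For invertibility, the explicit solution provides the resolvent
\[
(1-C_{\omega^0})^{-1}f=f+C_+\bigl(f(M^0_+)^{-1}\omega^0_-\ldots\bigr)M^0_\pm,
\]
which is the Deift--Zhou formula $(1-C_\omega)^{-1}f=C_\pm\bigl(f(\Id+\omega_\pm)\cdots\bigr)$. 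Its norm is bounded by a constant times $(1+\|M^0_\pm\|_{L^\infty})(1+\|(M^0_\pm)^{-1}\|_{L^\infty})$. Since $W$ is unitary, these sup norms equal those of $N$. $N$ depends continuously on $(q,\nu)$, which range over the compact set $|q|\le\|\gamma\|_{L^\infty}$, so the norms are uniform.

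\textbf{Main obstacle.} I expect the hardest step to be obtaining this uniform sup bound on $N^{\pm1}$, including near $z=0$ and as $q\to0$. I would get it from the explicit parabolic cylinder representation, whose entries are entire in $\nu$ and continuous in $q$, combined with uniform large-$z$ asymptotics of $D_a$ on the closed sectors. The $O(z^{-1})$ remainder is locally uniform in $a$, which makes this workable.
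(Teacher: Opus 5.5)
Your overall route is the paper's. A unitary $V$ with $\gamma(k_0)E_0(k_0)V=(|\gamma(k_0)|,0)$ reduces $M^0$ to $\diag(N,1)$; folding $\eta^2$ into $q$ is equivalent to the paper's conjugation by $\eta^{\mp\sigma_2}$. $N$ is then built from parabolic cylinder functions, uniqueness follows from Liouville, and $(1-C_{\omega^0})^{-1}$ is expressed through the bounded model solution, with uniformity from the uniform asymptotics of $D_\alpha$. Two steps, however, fail as written.

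The first is the coefficient. Your displayed $\beta_{12}$ has the form that matches the defocusing relation $|q|^2=1-e^{-2\pi\nu}$. With the focusing relation $|q|^2=e^{-2\pi\nu}-1$ and $\Gamma(-\ii\nu)\Gamma(1+\ii\nu)=\ii\pi/\sinh(\pi\nu)$, one gets
\[
 \frac{\sqrt{2\pi}\,e^{\ii\pi/4}e^{-\pi\nu/2}}{q^*\,\Gamma(-\ii\nu)}=-\frac{q\,e^{\pi\nu/2-\ii\pi/4}\,\Gamma(1+\ii\nu)}{\sqrt{2\pi}}.
\]
So your ``rewrite'' is off by a sign. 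Also, $|\Gamma(\ii\nu)|^2=\pi/(\nu\sinh(\pi\nu))$; your version with $\sinh(\pi|\nu|)$ is negative for $\nu<0$. Your final expression is the correct one: it matches \eqref{eq:model_coefficient}. Its $q\to0$ limit $-\ii qe^{-\ii\pi/4}/\sqrt{2\pi}$ also agrees with the first-order (Born) approximation, obtained by integrating the $(1,2)$ jump weight of $N$ over $\Sigma^0$ against $-1/(2\pi\ii)$. But you must derive $\beta_{12}$ from the connection formula for the actual jump on $\Sigma_1^0$, as the paper does, rather than import it.

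Likewise, $\diag N_1=0$ does not follow from the ODE coefficient being off-diagonal, since that coefficient is proportional to $[\sigma_3,N_1]$ by construction. You can obtain it in either of two ways. One is to use $D_\alpha(\zeta)=\zeta^\alpha e^{-\zeta^2/4}(1+O(\zeta^{-2}))$ after rewriting each sector's columns as single parabolic cylinder functions, as in the paper. The other is to continue the ODE expansion to order $z^{-2}$, where, together with the $z^{-1}$ balance, it forces $\diag N_1=0$. This matters, because $(M_1^0)_{11}=0$ is what drives the mass asymptotics in Proposition~\ref{prop:mass}.

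The second is the resolvent. The formula you display contains ellipses and is not the inverse. The correct one is
\[
 (1-C_{\omega^0})^{-1}f=\bigl[f+C_-\bigl(f(J^0-\Id_3)(M^0_+)^{-1}\bigr)M^0_-\bigr](b^0_-)^{-1},\qquad b^0_-=\Id_3-\omega^0_-.
\]
It comes from $\mathfrak m b^0_\pm=f+\mathcal Y_\pm$ with $\mathcal Y=\Cauchy(\mathfrak m\omega^0)$, together with the additive jump $f(J^0-\Id_3)(M^0_+)^{-1}$ of $\mathcal Y(M^0)^{-1}$.

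Showing that this actually inverts $1-C_{\omega^0}$ requires an $L^2$ Liouville argument. The contour passes through $0$, where $M^0$ is merely bounded and $\mathcal Y$ is only the Cauchy transform of an $L^1\cap L^2$ density. For injectivity, you must show that $\Cauchy(h\omega^0)(M^0)^{-1}$ has a removable singularity at $0$ and vanishes at infinity. For surjectivity, the same is needed for $\Cauchy(g)M^0-\Cauchy(\mathfrak m\omega^0)$. The paper handles this with the scale-invariant circle estimate $\int_{|z|=\varrho}|\Xi|\,|\dd z|\le K\varrho^{1/2}\|h\omega^0\|_{L^2}$ for all $\varrho>0$. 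Either supply such an argument or cite a precise theorem valid for self-intersecting contours. With that in place, your uniform inverse bound follows as you intend, from uniform bounds on $M^0_\pm$, $(M^0_\pm)^{-1}$, $J^0-\Id_3$, and $(b^0_-)^{-1}$.
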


\begin{proof}
We first construct the model solution and compute its coefficient at infinity. We then establish uniform bounds for the model solution and for $(1-C_{\omega^0})^{-1}$.

For $\gamma(k_0)\ne0$, choose a unitary $V$ such that $\gamma(k_0)E_0(k_0)V=(|\gamma(k_0)|,0)$, and set $Q=\diag(1,V)$ and $\sigma_2=\diag(-1,1)$. Conjugation by $Q$ reduces the model to a $2\times2$ block and a scalar identity block. Diagonal conjugation removes the phase factors $\eta$ and gives a parabolic cylinder model, as in \cite[Section~3.6]{CNLS-DeiftZhou}:
\[
 Q^{-1}M^0Q=\diag(\eta^{-\sigma_2}H_2\eta^{\sigma_2},1).
\]

Put
\[
 \Psi=H_2z^{-\ii\nu\sigma_2}e^{\ii z^2\sigma_2/4},\qquad \beta=\frac{e^{\pi\nu/2-\ii\pi/4}\Gamma(1+\ii\nu)}{\sqrt{2\pi}}|\gamma(k_0)|.
\]
The identities $|\Gamma(1+\ii\nu)|^2=\pi\nu/\sinh(\pi\nu)$ and $1+|\gamma(k_0)|^2=e^{-2\pi\nu}$ give $|\beta|^2=-\nu$. Define the entire matrix function
\[
 \Psi_R(z)=e^{\pi\nu/4}
 \begin{pmatrix}
 D_{\ii\nu}(e^{\ii\pi/4}z)&\beta e^{-3\ii\pi/4}D_{-\ii\nu-1}(e^{-\ii\pi/4}z)\\
 -\beta^* e^{3\ii\pi/4}D_{\ii\nu-1}(e^{\ii\pi/4}z)&D_{-\ii\nu}(e^{-\ii\pi/4}z)
 \end{pmatrix}.
\]
Here, $D_\alpha$ denotes the parabolic cylinder function \cite{DLMF}. Set
\[
 \mathsf U_2(c)=\begin{pmatrix}1&c\\0&1\end{pmatrix},\qquad \mathsf L_2(c)=\begin{pmatrix}1&0\\c&1\end{pmatrix}.
\]
Define $\Psi$ sectorwise by
\[
\Psi(z)=
\begin{cases}
 \Psi_R(z),&-\pi/4<\arg z<\pi/4,\\
 \Psi_R(z)\mathsf L_2(-|\gamma(k_0)|),&\pi/4<\arg z<3\pi/4,\\
 \Psi_R(z)\mathsf L_2(-|\gamma(k_0)|)\mathsf U_2(|\gamma(k_0)|/(1+|\gamma(k_0)|^2)),&3\pi/4<\arg z<\pi,\\
 \Psi_R(z)\mathsf U_2(|\gamma(k_0)|)\mathsf L_2(-|\gamma(k_0)|/(1+|\gamma(k_0)|^2)),&-\pi<\arg z<-3\pi/4,\\
 \Psi_R(z)\mathsf U_2(|\gamma(k_0)|),&-3\pi/4<\arg z<-\pi/4.
\end{cases}
\]
With the inward orientations, the constant right jumps on $\Sigma_1^0$, $\Sigma_2^0$, $\Sigma_3^0$, and $\Sigma_4^0$ are $\mathsf U_2(|\gamma(k_0)|)$, $\mathsf L_2(|\gamma(k_0)|)$, $\mathsf U_2(-|\gamma(k_0)|/(1+|\gamma(k_0)|^2))$, and $\mathsf L_2(-|\gamma(k_0)|/(1+|\gamma(k_0)|^2))$, respectively. On $\Sigma_1^0$, the plus side lies below the ray, and the jump relation for the upper-right entry is
\[
 \beta e^{\ii\pi/4-3\pi\nu/4} D_{-\ii\nu-1}(e^{3\ii\pi/4}z)=|\gamma(k_0)|e^{\pi\nu/4}D_{\ii\nu}(e^{\ii\pi/4}z)+\beta e^{-3\ii\pi/4+\pi\nu/4}D_{-\ii\nu-1}(e^{-\ii\pi/4}z).
\]
The connection formula DLMF 12.2.18, written in $D_\alpha$ notation using DLMF 12.2.5 \cite{DLMF}, gives this jump relation when
\[
 \beta e^{\ii\pi/4-3\pi\nu/4}=\frac{|\gamma(k_0)|\Gamma(1+\ii\nu)}{\sqrt{2\pi}}e^{-\pi\nu/4}.
\]
This agrees with the stated value of $\beta$. The representations in the remaining sectors follow from the same connection formula at adjacent orders and from DLMF 12.2.19 at orders $\pm\ii\nu$. The triangular factors also satisfy
\[
 \begin{aligned}
 \mathsf L_2(-|\gamma(k_0)|)\mathsf U_2\!\left(\frac{|\gamma(k_0)|}{1+|\gamma(k_0)|^2}\right)
 &=\mathsf U_2(|\gamma(k_0)|)\mathsf L_2\!\left(-\frac{|\gamma(k_0)|}{1+|\gamma(k_0)|^2}\right)\\
 &\quad\times\diag(1+|\gamma(k_0)|^2,(1+|\gamma(k_0)|^2)^{-1}).
 \end{aligned}
\]
Thus, the boundary values across the negative real axis satisfy
\[
 \Psi_+(z)=\Psi_-(z)\diag(1+|\gamma(k_0)|^2,(1+|\gamma(k_0)|^2)^{-1}),\qquad z<0,
\]
where $+$ and $-$ here refer to the upper and lower sides. Since $1+|\gamma(k_0)|^2=e^{-2\pi\nu}$, this jump is canceled by the principal-branch power in $H_2=\Psi e^{-\ii z^2\sigma_2/4}z^{\ii\nu\sigma_2}$. Hence, $H_2$ is analytic across the negative real axis.

Using the connection formulas, we write the first column of $\Psi_R \mathsf L_2(-|\gamma(k_0)|)$ with argument $e^{-3\ii\pi/4}z$ and the second column of $\Psi_R \mathsf U_2(|\gamma(k_0)|)$ with argument $e^{3\ii\pi/4}z$. Together with the original arguments $e^{\pm\ii\pi/4}z$, these representations cover all five sectors. In each representation, $\zeta$ denotes the argument of the parabolic cylinder function and satisfies $|\arg\zeta|\le\pi/2$, strictly inside the asymptotic sector $|\arg\zeta|<3\pi/4$. The orders of the parabolic cylinder functions are $\pm\ii\nu$ and $\pm\ii\nu-1$, where
\[
 -\frac{\log(1+\|\gamma\|_{L^\infty}^2)}{2\pi}\le \nu\le0.
\]
For these orders and $|\arg\zeta|\le\pi/2$, DLMF 12.9.1 and the remainder bounds in \S12.9(ii), obtained from \S13.7(ii), apply uniformly. Substitution into the sector formulas yields
\[
 H_2=\Id_2+\frac1z
   \begin{pmatrix}0&-\ii\beta\\-\ii\beta^*&0\end{pmatrix}
       +O(z^{-2}).
\]
Conjugation by the diagonal factors $\eta^{\mp\sigma_2}$ and then by $Q$ gives \eqref{eq:model_coefficient} and the vanishing diagonal blocks.

The functions $D_\alpha(\zeta)$ are entire in $\zeta$, and the imaginary powers are bounded in each sector. Hence, $H_2$ is bounded near the origin. Since all jump matrices have determinant one, $\det H_2$ has no jumps and its singularity at the origin is removable. The normalization at infinity and Liouville's theorem therefore give $\det H_2=1$. The uniform expansion at infinity, continuity of $D_\alpha(\zeta)$ on compact sets of $(\alpha,\zeta)$, and bounded imaginary powers give a uniform bound for $H_2$ when $|\gamma(k_0)|\le\|\gamma\|_{L^\infty}$. The identity $\det H_2=1$ gives the corresponding bound for $H_2^{-1}$. Unitarity of $V$ and $E_0$, together with $|\eta|=1$, preserves these bounds for $M^0$ and $(M^0)^{-1}$. If $\gamma(k_0)=0$, then $M^0\equiv\Id_3$.

For any other normalized solution $\widetilde M^0$, the product $\widetilde M^0(M^0)^{-1}$ has no jumps, is bounded near the origin, and tends to $\Id_3$ at infinity. The singularity at the origin is removable, so Liouville's theorem gives $\widetilde M^0(M^0)^{-1}=\Id_3$, proving uniqueness. Boundedness near the origin and the expansion at infinity give $M^0_\pm-\Id_3\in L^2(\Sigma^0\cup\mathbb R)$.

Set $b_-^0=\Id_3-\omega_-^0$ and $b_+^0=\Id_3+\omega_+^0$, so that $J^0=(b_-^0)^{-1}b_+^0$. For injectivity, suppose that $h\in L^2(\Sigma^0\cup\mathbb R)$ solves $(1-C_{\omega^0})h=0$, and set
\[
 \mathcal Y(z)=\Cauchy(h\omega^0)(z),\qquad \Xi(z)=\mathcal Y(z)(M^0(z))^{-1}.
\]
Since $\mathcal Y_-=h b_-^0$ and $\mathcal Y_+=h(\Id_3+\omega_+^0)$, $\mathcal Y_+=\mathcal Y_-J^0$. Thus, $\Xi$ has no jumps and extends analytically across every open ray. Also, $h\omega^0\in L^1\cap L^2$. Define
\[
 \mathcal K_{\varrho}(s)=\int_{|z|=\varrho}\frac{|\dd z|}{|s-z|}.
\]
The Cauchy representation and the bounded model inverse give
\[
 \int_{|z|=\varrho}|\Xi(z)|\,|\dd z|\le K\|h\omega^0\|_{L^2}\|\mathcal K_{\varrho}\|_{L^2(\Sigma^0\cup\mathbb R)}\le K'\varrho^{1/2}\|h\omega^0\|_{L^2}.
\]
Indeed, the change of variables $s=\varrho\zeta$ gives $\mathcal K_{\varrho}(\varrho\zeta)=\mathcal K_1(\zeta)$. On each ray, $\mathcal K_1$ has only logarithmic singularities at its intersections with the unit circle and is $O(|\zeta|^{-1})$ at infinity, so it belongs to $L^2$. For every $n\ge1$, the coefficient $\Xi_{-n}$ of $z^{-n}$ in the Laurent expansion of $\Xi$ satisfies
\[
 |\Xi_{-n}|\le \frac{\varrho^{n-1}}{2\pi}\int_{|z|=\varrho}|\Xi(z)|\,|\dd z|\le K'\varrho^{n-1/2}\|h\omega^0\|_{L^2}\longrightarrow0\qquad(\varrho\downarrow0).
\]
Hence, the origin is removable. The same circle estimate holds for every radius $R>0$. For any fixed $z_0$ and $R>2|z_0|$, Cauchy's formula therefore yields
\[
 |\Xi(z_0)|\le K'R^{-1/2}\|h\omega^0\|_{L^2}\longrightarrow0\qquad(R\to\infty).
\]
Thus $\Xi=0$, and $\mathcal Y_-=h b_-^0$ with invertible $b_-^0$ gives $h=0$.

To prove surjectivity, fix $f\in L^2(\Sigma^0\cup\mathbb R)$ and define
\[
 g=f(J^0-\Id_3)(M_+^0)^{-1},\qquad \mathcal X=\Cauchy g,\qquad \mathcal Y=\mathcal XM^0,\qquad \mathfrak m=(f+\mathcal Y_-)(b_-^0)^{-1}.
\]
The bounded model multipliers and the $L^2$ bounds for the Cauchy boundary operators give $\mathfrak m\in L^2$. The inclusions $J^0-\Id_3,\omega^0\in L^2\cap L^\infty$ then imply $g,\mathfrak m\omega^0\in L^1\cap L^2$. The Plemelj formulas and $M_+^0=M_-^0J^0$ yield
\[
 \mathcal Y_+-\mathcal Y_-=(f+\mathcal Y_-)(J^0-\Id_3)=\mathfrak m b_-^0(J^0-\Id_3)=\mathfrak m\omega^0.
\]
Thus, $Z=\mathcal Y-\Cauchy(\mathfrak m\omega^0)$ has no jumps and extends analytically across every open ray. The preceding Cauchy-kernel estimate, now using the bounded multiplier $M^0$, gives, for every $R>0$,
\[
 \int_{|z|=R}|Z(z)|\,|\dd z|\le K R^{1/2}\bigl(\|g\|_{L^2}+\|\mathfrak m\omega^0\|_{L^2}\bigr).
\]
Letting $R\downarrow0$ in the circle estimate shows that all negative Laurent coefficients of $Z$ vanish, so the origin is removable. Cauchy's formula on circles with $R\to\infty$ then gives $Z\equiv0$. Consequently,
\[
 \mathfrak m b_-^0=f+C_-(\mathfrak m\omega^0).
\]
Using $C_+-C_-=1$, this identity is equivalent to $(1-C_{\omega^0})\mathfrak m=f$. Together with injectivity, this proves invertibility and gives
\[
 (1-C_{\omega^0})^{-1}f = \left[ f+C_-\!\left(f(J^0-\Id_3)(M_+^0)^{-1}\right)M_-^0\right](b_-^0)^{-1}.
\]

The uniform bounds for the model boundary values, their inverses, and the triangular jump factors, together with the $L^2$ operator bounds for $C_\pm$, yield the asserted inverse bound on $L^2(\Sigma^0\cup\mathbb R)$. The normalized Beals--Coifman equation is solved in $\Id_3+L^2(\Sigma^0\cup\mathbb R)$, or equivalently by solving the $L^2$ equation for $\mathfrak m-\Id_3$ in \eqref{eq:affine_mu}.\qed
\end{proof}

\subsection{Comparison with the parabolic cylinder model}

\begin{lemma}\label{lem:scaled_error}
Let $\widehat\omega_\pm(z)=\omega'_\pm(k_0+z/\sqrt{8t})$ and $\widehat\omega=\widehat\omega_-+\widehat\omega_+$. Then
\begin{equation}\label{eq:scaled_error}
 \|\widehat\omega-\omega^0\|_{L^1}+ \|\widehat\omega-\omega^0\|_{L^2}+ \|\widehat\omega-\omega^0\|_{L^\infty}=O(t^{-1/2}\log t).
\end{equation}
\end{lemma}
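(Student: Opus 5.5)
We compare the weights ray by ray. Both $\widehat\omega$ and $\omega^0$ vanish on $\mathbb R$, so only the four rays of $\Sigma^0$ contribute. We treat $\Sigma_1^0$ in detail; the others follow from the sign change in \eqref{eq:effective_products} and the reflection $k\mapsto k^*$. On $\Sigma_1^0$ we have $k=k_0+\tau$ with $\tau=z/\sqrt{8t}$ and $2t\theta=-4k_0^2t+z^2/4$. The only nonzero block of $\widehat\omega$ is then
\[
 e^{4\ii k_0^2t}e^{-\ii z^2/2}\,(dR\delta)(k_0+z/\sqrt{8t}).
\]
Using $\tau^{2\ii\nu}=z^{2\ii\nu}(8t)^{-\ii\nu}$, the leading term in \eqref{eq:effective_products} gives exactly $\eta^2z^{2\ii\nu}e^{-\ii z^2/2}\gamma(k_0)E_0(k_0)$, which is the $\omega^0$ block in \eqref{eq:M0_rh}. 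This requires checking that the branches agree: on $\Sigma_1^0$ we have $\arg z=\arg\tau=-\pi/4$, so the principal powers match.

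The difference on $\Sigma_1^0$ is therefore $e^{-\ii z^2/2}e^{4\ii k_0^2t}r(\tau)$, where $r$ is the remainder in \eqref{eq:effective_products}. On this ray $|e^{-\ii z^2/2}|=e^{-|z|^2/2}$. We split the ray at $|\tau|=\epsilon$, with $\epsilon$ the fixed radius from Lemma~\ref{lem:endpoint}.

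For $|\tau|\le\epsilon$, the remainder satisfies
\[
 |r|\le K|\tau|(1+|\log|\tau||)\le K t^{-1/2}|z|\bigl(1+\log t+|\log|z||\bigr),
\]
uniformly for $k_0$ in compacts. Multiplying by the Gaussian $e^{-|z|^2/2}$ makes the factors $|z|$ and $|z||\log|z||$ integrable and bounded in $L^1$, $L^2$ and $L^\infty$. This gives $O(t^{-1/2}\log t)$.

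For $|\tau|\ge\epsilon$, that is $|z|\ge\epsilon\sqrt{8t}$, both weights are bounded by $Ke^{-|z|^2/2}$. This uses boundedness of $d^{\pm1}$ and $\delta^{\pm1}$ (Lemma~\ref{lem:delta}), the bound $|R|\le K$ from Lemma~\ref{rational-de}, unitarity of $E_0$, and $|\eta|=1$. Their contribution in all three norms is $O(e^{-ct})$.

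The main obstacle is making sure that the leading terms in \eqref{eq:effective_products} hold, with the stated error, uniformly along each ray including at $\gamma(k_0)=0$. This in turn needs three inputs. First, $R(k_0)=\rho(k_0\pm0)$ with $R$ Lipschitz near $k_0$, uniformly in $k_0$; this comes from the Taylor construction with shifted Schwartz seminorms. Second, the expansion \eqref{eq:d_local} of $d$. Third, \eqref{eq:matrix_local} for $\delta$, together with the exact commutation identities $\gamma(k_0)\mathcal D=\tau^{\ii\nu}\gamma(k_0)$. Writing
\[
 dR\delta=\tau^{\ii\nu}e^{\chi_0}(1+O)\,[R(k_0)+O(\tau)]\,\mathcal D(E_0+O),
\]
every cross term is bounded by a product of bounded factors ($|\tau^{\pm\ii\nu}|$ is bounded on rays) with one small factor, so the remainder is $O(|\tau|(1+|\log|\tau||))$. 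On the left rays, $R(k_0)$ carries the extra factor $-1/(1+|\gamma(k_0)|^2)$, matching $\Sigma_3^0$ and $\Sigma_4^0$. The lower triangular rays follow by the symmetry $\delta(k)^{-1}=\delta(k^*)^\dagger$ and the definition of $R_{\mathrm{low}}$.
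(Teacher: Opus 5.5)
Your proposal is correct and follows the paper's proof: the same split at $|\tau|=\epsilon$, the same scaled local bound $Kt^{-1/2}|z|(1+\log t+|\log|z||)$ times a Gaussian, and exponentially small tails. The only variation is on the lower-triangular rays, which you obtain from the symmetries $\delta(k)^{-1}=\delta(k^*)^\dagger$ and $R_{\mathrm{low}}(k)=R(k^*)^\dagger$ (this implicitly uses $|\eta|=1$ and $\chi_0\in\ii\mathbb R$), whereas the paper bounds $F^{-1}-E_0^{-1}=F^{-1}(E_0-F)E_0^{-1}$ directly; also note the slip $2t\theta=z^2/2-4k_0^2t$, not $z^2/4$, although the weight you then write with $e^{-\ii z^2/2}$ is correct.
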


\begin{proof}
For $|\tau|<\epsilon$, the rational factor differs from its half-line endpoint value by $O(|\tau|)$. The scalar and matrix endpoint expansions \eqref{eq:d_local} and \eqref{eq:matrix_local} therefore give the product estimates in \eqref{eq:effective_products}, with errors of order $|\tau|(1+|\log|\tau||)$. The bounds for $d^{\pm1}$, $\delta^{\pm1}$, and $\mathcal D^{\pm1}$ also control the remaining factors, including $F^{\pm1}$. For the inverse matrix factor, we use
\[
 F^{-1}-E_0^{-1}=F^{-1}(E_0-F)E_0^{-1},
\]
which gives the same error bound as for $F-E_0$, since the inverse factors are bounded. After scaling, each error is bounded on its ray by
\[
 K t^{-1/2}|z|(1+\log t+|\log|z||)e^{-c|z|^2}.
\]
The right-hand side tends to zero as $z\to0$ and belongs to $L^1\cap L^2\cap L^\infty$. Outside the disk $|k-k_0|<\epsilon$, the scaled weights satisfy
\[
 |\widehat\omega(z)|+|\omega^0(z)|\le K e^{-|z|^2/2},\qquad |z|\ge\sqrt{8t}\,\epsilon,
\]
on the four rays. Consequently, their tails have $L^1$, $L^2$, and $L^\infty$ norms of order $O(e^{-c_\epsilon t})$ for some $c_\epsilon>0$. Finally, the identities $e^{-2\ii t\theta}=e^{4\ii k_0^2t}e^{-\ii z^2/2}$ and $\tau^{2\ii\nu}=(8t)^{-\ii\nu}z^{2\ii\nu}$ yield \eqref{eq:eta} and \eqref{eq:M0_rh}. By Lemma~\ref{lem:endpoint}, these bounds remain uniform at points $k_0$ where $\gamma(k_0)=0$.\qed
\end{proof}

Consider the RH problem for a matrix $M'$ analytic in $\mathbb C\setminus\Sigma'$, with jump relation
\[
 M'_+=M'_-(\Id_3-\omega'_-)^{-1}(\Id_3+\omega'_+)
\]
and normalization $M'(\infty)=\Id_3$. We use the Beals--Coifman class: $M'=\Id_3+\Cauchy(\mathfrak m'\omega')$, where $\mathfrak m'-\Id_3\in L^2(\Sigma)$ and $\mathfrak m'=\Id_3+C_{\omega'}\mathfrak m'$, with both weights extended by zero to the real line. The inverse bound proved below gives existence and uniqueness in this class for sufficiently large $t$. For its solution, we write
\[
 M'(k)=\Id_3+\frac{M'_1}{k}+O(k^{-2})
\]
as $k\to\infty$ in closed angular sectors separated from $\Sigma'$.

\begin{proposition}\label{eq:main_reduction}
For sufficiently large $t$, the inverses of $1-C_{\omega^0}$ and $1-C_{\widehat\omega}$ on $L^2(\Sigma^0\cup\mathbb R)$ are uniformly bounded, as are the inverses of $1-C_{\omega'}$ and $1-C_{\omega^\sharp}$ on $L^2(\Sigma)$. Moreover,
\begin{equation}\label{eq:full_coefficient}
 M_1=\Delta_1+\frac{M_1^0}{\sqrt{8t}}+O_C(t^{-1}\log t),\qquad \ww=\frac{(M_1^0)_{12}}{\sqrt{2t}}+O_C(t^{-1}\log t).
\end{equation}
Here, $M^0=\Id_3+M_1^0/z+O(z^{-2})$. The first estimate holds for the full coefficient matrix in the Frobenius norm. Both estimates are uniform for $|x/t|\le C$.
\end{proposition}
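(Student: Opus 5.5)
The proof runs a chain of resolvent comparisons: first $\omega^0$ to $\widehat\omega$, then $\omega'$ (by scaling), then $\omega^\sharp$. After that the coefficient $M_1^\sharp$ is read off, and finally Lemma~\ref{lem:deformation} transfers the result to $M_1$.

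\textbf{Uniform invertibility.} Proposition~\ref{prop:model} gives $\|(1-C_{\omega^0})^{-1}\|\le K$ uniformly for $k_0$ in a compact set, since $|\eta|=1$ and $E_0$ is unitary. The operator bound after \eqref{eq:BC}, together with Lemma~\ref{lem:scaled_error}, gives
\[
 \|C_{\widehat\omega}-C_{\omega^0}\|\le K\|\widehat\omega-\omega^0\|_{L^\infty}=O(t^{-1/2}\log t).
\]
A Neumann series then yields a uniform bound for $(1-C_{\widehat\omega})^{-1}$ once $t$ is large. The scaling $k=k_0+z/\sqrt{8t}$ conjugates $C_{\omega'}$ on $L^2(\Sigma)$ to $C_{\widehat\omega}$ on the rescaled contour. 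The real axis carries zero weight, and the Cauchy operators are dilation invariant, so the $L^2$ operator norms agree. This gives the bound for $(1-C_{\omega'})^{-1}$. Next, Lemma~\ref{lem3.3} gives $\|C_{\omega^\sharp}-C_{\omega'}\|\le K\|\omega^e\|_{L^\infty}=O(t^{-l})$, and a second Neumann series bounds $(1-C_{\omega^\sharp})^{-1}$. All constants depend only on $\|\gamma\|_{L^\infty}$ and on the uniform bounds of Lemmas~\ref{lem:delta}, \ref{rational-de} and \ref{lem:endpoint}. The restriction $|x/t|\le C$ keeps $k_0$ in a compact interval, which makes these constants uniform.

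\textbf{Removing $\omega^e$.} Lemma~\ref{lem:comparison}, combined with \eqref{eq:weight_norms} and $\|\omega^\sharp\|_{L^2}\le\|\omega'\|_{L^2}+\|\omega^e\|_{L^2}=O(1)$, bounds the change in $\int_\Sigma\mathfrak m\omega$ by $O(t^{-l})$. Taking $l=1$ gives $M_1^\sharp=M'_1+O(t^{-1})$.

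\textbf{Comparison with the model.} Change variables in $M'_1=-(2\pi\ii)^{-1}\int_{\Sigma'}\mathfrak m'\omega'\,\dd\xi$. The scaled density $\widehat{\mathfrak m}(z)=\mathfrak m'(k_0+z/\sqrt{8t})$ solves the Beals--Coifman equation for $\widehat\omega$, and $\dd\xi=\dd z/\sqrt{8t}$. Hence
\[
 M'_1=-\frac{1}{\sqrt{8t}}\,\frac{1}{2\pi\ii}\int\widehat{\mathfrak m}\,\widehat\omega\,\dd z,
\]
and the model coefficient $M_1^0$ is given by the same formula with $\mathfrak m^0$ and $\omega^0$ in place of $\widehat{\mathfrak m}$ and $\widehat\omega$. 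Write $\widehat\omega=\omega^0+e$. The resolvent identity from the proof of Lemma~\ref{lem:comparison} gives
\[
 \widehat{\mathfrak m}-\mathfrak m^0=(1-C_{\widehat\omega})^{-1}C_{e}\mathfrak m^0 .
\]
We also have $\|\mathfrak m^0-\Id_3\|_{L^2}\le K$, since $\|\omega^0\|_{L^2}=O(1)$. Estimating as in \eqref{eq:comparison_bound}, using \eqref{eq:scaled_error} for all three norms of $e$, gives
\[
 \Bigl|\int(\widehat{\mathfrak m}\widehat\omega-\mathfrak m^0\omega^0)\,\dd z\Bigr|=O(t^{-1/2}\log t).
\]
Therefore $M'_1=M_1^0/\sqrt{8t}+O(t^{-1}\log t)$.

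\textbf{Back to $M_1$.} Lemma~\ref{lem:deformation} gives $M_1^\sharp=M_1-\Delta_1$. Combining the three steps proves the first formula in \eqref{eq:full_coefficient}, in the Frobenius norm. $\Delta_1$ is block diagonal, so taking the $(1,2)$ block and using $\ww=2(M_1)_{12}$ with $2/\sqrt{8t}=1/\sqrt{2t}$ gives the second formula.

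\textbf{Main obstacle.} The delicate step is the model comparison. It needs the $L^1$ norm of $\widehat\omega-\omega^0$, not only the $L^\infty$ norm, because the $\Id_3$ part of $\mathfrak m^0$ multiplies $e$ in the integral. It also needs all bounds to survive as $\gamma(k_0)\to0$. For this I would use the uniform statements in Lemma~\ref{lem:endpoint} and Proposition~\ref{prop:model}. I would also check that the $L^2$ equation is posed on $\Sigma^0\cup\mathbb R$ consistently under scaling, with both weights extended by zero to $\mathbb R$.
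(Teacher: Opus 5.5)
Your proposal is correct and follows the paper's proof essentially step for step: a Neumann-series perturbation from $\omega^0$ to $\widehat\omega$, then the unitary rescaling that transfers the inverse bound to $1-C_{\omega'}$, then a second perturbation by $\omega^e$, then Lemma~\ref{lem:comparison} applied on both the scaled cross and $\Sigma$, and finally Lemma~\ref{lem:deformation} to pass from $M_1^\sharp$ to $M_1$. The only deviation is that you take $l=1$ where the paper takes $l\ge2$; this is harmless, since the resulting $O(t^{-1})$ error is absorbed into $O_C(t^{-1}\log t)$.
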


\begin{proof}
Proposition~\ref{prop:model} gives the inverse on the scaled cross, including zero weights on the real line. By Lemma~\ref{lem:scaled_error},
\[
 \|(1-C_{\omega^0})^{-1}(C_{\widehat\omega}-C_{\omega^0})\|\le K\|\widehat\omega-\omega^0\|_{L^\infty} \le K t^{-1/2}\log t<\tfrac12
\]
for large $t$. A Neumann series applied to the preceding perturbation yields a uniform bound for $(1-C_{\widehat\omega})^{-1}$. The map $f(k)\mapsto(8t)^{-1/4}f(k_0+z/\sqrt{8t})$ is unitary from $L^2(\Sigma)$ to $L^2(\Sigma^0\cup\mathbb R)$ and intertwines $C_{\omega'}$ with $C_{\widehat\omega}$. Hence, the same inverse bound holds for $1-C_{\omega'}$. By Lemma~\ref{lem3.3}, $\|C_{\omega^e}\|\le K\|\omega^e\|_{L^\infty}=O(t^{-l})$; a second perturbation gives the inverse for $1-C_{\omega^\sharp}$. Proposition~\ref{prop:BC_representation} constructs $M'$ with $\omega'$ extended by zero to $\Sigma$.

The coefficient formula \eqref{eq:M_sharp_solution} and the change of variables $z=\sqrt{8t}(k-k_0)$ give
\[
 M'_1=-\frac1{2\pi\ii\sqrt{8t}}\int_{\Sigma^0}\widehat{\mathfrak m}(z)\widehat\omega(z)\,\dd z,\qquad M_1^0=-\frac1{2\pi\ii}\int_{\Sigma^0}\mathfrak m^0(z)\omega^0(z)\,\dd z,
\]
where $\mathfrak m^0$ solves the Beals--Coifman equation \eqref{eq:BC} associated with $\omega^0$, and $\widehat{\mathfrak m}(z)=\mathfrak m'(k_0+z/\sqrt{8t})$ solves the corresponding equation for the rescaled weights $\widehat\omega_\pm$ on $\Sigma^0\cup\mathbb R$. The $L^1$, $L^2$, and $L^\infty$ norms of $\omega^0$, as well as $\|\mathfrak m^0-\Id_3\|_{L^2}$, are uniformly bounded. Applying Lemma~\ref{lem:comparison} to the weight pairs $\widehat\omega_\pm$ and $\omega^0_\pm$, and using \eqref{eq:scaled_error}, gives an $O(t^{-1/2}\log t)$ bound for the difference of the matrix-valued integrals. The prefactor $1/\sqrt{8t}$ reduces this error to $O(t^{-1}\log t)$. On $\Sigma$, \eqref{eq:comparison_bound} and \eqref{eq:weight_norms} give $M_1^\sharp-M'_1=O(t^{-l})$; choose $l\ge2$. Lemma~\ref{lem:deformation} gives $M_1^\sharp=M_1-\Delta_1$. This identity and the preceding estimates prove \eqref{eq:full_coefficient}.\qed
\end{proof}

\section{Proof of Theorem~\ref{thm1}}\label{sec:5}

Combining the reconstruction formula $\ww=2(M_1)_{12}$ with Proposition~\ref{eq:main_reduction}, \eqref{eq:model_coefficient}, and \eqref{eq:eta} immediately yields the long-time asymptotic expansion for the CNLS solution:
\begin{equation}\label{eq:CNLS_asymptotics}
 \ww(x,t)=\frac{-\ii\Gamma(1+\ii\nu)e^{\pi\nu/2+4\ii k_0^2t+\chi_0-\ii\pi/4-\ii\nu\log(8t)}}{2\sqrt{\pi t}}\gamma(k_0)E_0(k_0)+O_C(t^{-1}\log t).
\end{equation}
To recover the CKE solution through \eqref{eq:gauge}, we estimate the gauge phase using the $(1,1)$ entry of $M_1$ and the trace formula for the conserved mass.

\begin{proposition}\label{prop:mass}
Let $S=|\ww|^2$. For each fixed $t\ge0$,
\begin{equation}\label{eq:tail_identity}
\begin{aligned}
 \int_x^\infty S(y,t)\,\dd y&=-2\ii(M_1)_{11}(x,t),\\
 \int_{\mathbb R}S(y,t)\,\dd y&=\mathcal M_1=\frac1\pi\int_{\mathbb R}\log(1+|\gamma(\xi)|^2)\,\dd\xi.
\end{aligned}
\end{equation}
Moreover, as $t\to+\infty$, uniformly for $k_0$ in compact subsets of $\mathbb R$,
\begin{equation}\label{eq:diagonal_asymptotics}
 (M_1)_{11}=d^{[1]}+O(t^{-1}\log t),\qquad d^{[1]}=\frac\ii{2\pi}\int_{-\infty}^{k_0}\log(1+|\gamma(\xi)|^2)\,\dd\xi.
\end{equation}
\end{proposition}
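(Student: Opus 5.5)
The proof splits into three parts: the tail identity, the total-mass trace formula, and the diagonal asymptotics.

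\emph{Tail identity.} By \eqref{eq:reconstruction}, $\partial_x(M_1)_{11}=-\tfrac{\ii}{2}S$. Integrating from $x$ to $\infty$ gives
\[
\int_x^\infty S(y,t)\,\dd y=-2\ii(M_1)_{11}(x,t)+2\ii\lim_{y\to+\infty}(M_1)_{11}(y,t),
\]
so it suffices to show $(M_1)_{11}(y,t)\to0$ as $y\to+\infty$. For this we use the representation \eqref{eq:Jost_RH} in $\mathbb C_-$, where the first column of $M$ is $\mu_{+,L}$. Iterating the Volterra equation \eqref{eq:Volterra} for $\mu_+$ once gives
\[
\mu_+(x;k)=\Id_3+\frac{1}{k}\,(\text{terms built from }\textstyle\int_x^\infty U\text{ and }\int_x^\infty U^2)+O(k^{-2}).
\]
Its $(1,1)$ coefficient is, up to a constant, $\int_x^\infty S$, and this vanishes as $x\to+\infty$ because $\ww(t)\in\Sch$. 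Alternatively, the sign and the constant can be read off directly by matching this expansion with \eqref{eq:reconstruction}.

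\emph{Total mass.} Send $x\to-\infty$ in the same way: the Volterra expansion of $\mu_{+,L}$ gives $(M_1)_{11}(-\infty,t)=\tfrac{\ii}{2}\mathcal M_1$. Next, relate $M_{11}$ on $\mathbb C_-$ to the scattering data. As $x\to-\infty$, $\mu_+\to e^{\ii\Phi\widehat\sigma}s^{-1}$ entrywise, since $\mu_-\to\Id_3$. In particular, the $(1,1)$ entry of $\mu_{+,L}$ tends to the $(1,1)$ entry of $s^{-1}=s^\dagger$, which is $\det a(k)$, continued analytically into $\mathbb C_-$. This gives
\[
(M_1)_{11}(-\infty)=\text{coefficient of }k^{-1}\text{ in }\det a(k).
\]
Under Assumption~\ref{ass:data}, $\det a$ has no zeros in $\overline{\mathbb C}_-$ and tends to $1$ at infinity. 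On $\mathbb R$, unitarity of $s$ gives
\[
|\det a|^2=(1+|\gamma|^2)^{-1},
\]
which follows from the $(1,1)$ block of $s s^\dagger=\Id_3$ together with $b=\gamma a$. A scalar Plemelj/Szeg\H{o} representation of $\log\det a$ in $\mathbb C_-$ (formally this is $d$ from \eqref{eq:det_delta} with $k_0=+\infty$) then yields
\[
\det a=1+\frac{1}{2\pi\ii k}\int_{\mathbb R}\log(1+|\gamma|^2)\,\dd\xi+O(k^{-2}),
\]
up to sign conventions. Matching this with $\tfrac{\ii}{2}\mathcal M_1$ gives the second identity in \eqref{eq:tail_identity}. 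Time independence of both sides is consistent with mass conservation.

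\emph{Diagonal asymptotics.} Take the $(1,1)$ entry of the first identity in \eqref{eq:full_coefficient}. Since $(\Delta_1)_{11}=d^{[1]}$ and $(M_1^0)_{11}=0$ by Proposition~\ref{prop:model}, we get
\[
(M_1)_{11}=d^{[1]}+O(t^{-1}\log t).
\]
Expanding \eqref{eq:det_delta} at $k=\infty$ gives
\[
d^{[1]}=-\frac{1}{2\pi\ii}\int_{-\infty}^{k_0}\log(1+|\gamma|^2)\,\dd\xi=\frac{\ii}{2\pi}\int_{-\infty}^{k_0}\log(1+|\gamma|^2)\,\dd\xi.
\]
Uniformity in $k_0$ is inherited from Proposition~\ref{eq:main_reduction}.

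The main obstacle is the second step. Deriving the trace formula rigorously requires justifying the $x\to-\infty$ limit of the $k^{-1}$ coefficient: the expansion of $\mu_{+,L}$ in $k$ must hold uniformly in $x$, which follows from Schwartz decay and the Volterra bounds. Alternatively, one can bypass the limit by working at the level of the RH problem: let $x\to-\infty$ with $t$ fixed, so that the jump oscillation makes $M_{11}$ tend to the scalar Szeg\H{o} function. If that route proves technical, I would instead quote the standard trace formula from \cite{CNLS-DeiftZhou}. The remaining task is tracking the signs and the orientation conventions for $\det a$ versus $(\det a(k^*))^*$.
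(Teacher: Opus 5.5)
Your tail identity and diagonal asymptotics follow the paper's argument. The paper also fixes the integration constant by the Volterra normalization of $\mu_{+,L}$ at $x=+\infty$, inserting $(M_1)_{21}=-\ww^\dagger/2$ into the $(1,1)$ Volterra equation. This is your ``matching with \eqref{eq:reconstruction}'', and it is the version to use: your one-step iteration description is inaccurate, because the off-diagonal $k^{-1}$ coefficients are local in $x$, not integrals of $U$. The paper likewise reads off $(M_1)_{11}$ from \eqref{eq:full_coefficient} using $(M_1^0)_{11}=0$.

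The trace formula is where you differ. The paper obtains the $k^{-1}$ coefficient of $\det a$ directly from \eqref{abk} applied to $\ww(\cdot,t)$: it gets $a=\Id_2+\frac{\ii}{2k}\int_{\mathbb R}\ww^\dagger\ww\,\dd x+O(k^{-2})$, hence $\det a=1+\frac{\ii}{2k}\int_{\mathbb R}S\,\dd x+O(k^{-2})$. You instead use $\det a=(s^\dagger)_{11}=\lim_{x\to-\infty}(\mu_+)_{11}$, i.e.\ $\det a(k)=1-\ii\int_{\mathbb R}\ww(\mu_+)_{21}\,\dd y$. The trace formula then becomes the $x\to-\infty$ limit of the tail identity. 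This is valid. It ties both identities to the single scalar function $(\mu_+)_{11}$ and avoids expanding the determinant of the $2\times2$ matrix $a$. The paper's route avoids interchanging $x\to-\infty$ with the large-$k$ expansion. Both routes need the same uniform-in-$x$ bound on the Volterra remainder, and the subsequent scalar Plemelj representation of $\det a$ is identical.

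There are two slips to fix.

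First, $|\det a|^2=(1+|\gamma|^2)^{-1}$ does not follow from the $(1,1)$ entry of $ss^\dagger=\Id_3$ and $b=\gamma a$ alone. That entry gives $|\det a|^2+\gamma aa^\dagger\gamma^\dagger=1$, and you still need $aa^\dagger=(\Id_2+\gamma^\dagger\gamma)^{-1}$. That comes from orthonormality of the last two columns, $a^\dagger(\Id_2+\gamma^\dagger\gamma)a=\Id_2$. Taking determinants there gives the identity directly, as the paper does.

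Second, your displayed expansion of $\det a$ has the wrong sign. Since $(\xi-k)^{-1}=-k^{-1}+O(k^{-2})$, the representation gives $\det a=1+\frac{\ii}{2\pi k}\int_{\mathbb R}\log(1+|\gamma|^2)\,\dd\xi+O(k^{-2})$. This is the same computation you carried out correctly for $d^{[1]}$. With your sign, matching against $\frac{\ii}{2}\mathcal M_1$ would produce a negative mass.
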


\begin{proof}
The Volterra normalization of $\mu_{+,L}$ at $x=+\infty$ fixes the integration constant in \eqref{eq:reconstruction} for each $t$. The first column of $M_-$ in \eqref{eq:Jost_RH} is $\mu_{+,L}$. Its $k^{-1}$ coefficient has lower $2\times1$ block $-\ww^\dagger/2$ and upper scalar entry
\[
 (M_1)_{11}(x,t) =-\int_x^\infty\ii\ww(y,t) [-\ww(y,t)^\dagger/2]\,\dd y=\frac\ii2\int_x^\infty S(y,t)\,\dd y.
\]
This proves the first identity in \eqref{eq:tail_identity}.

Since $a(k)$ is independent of $t$, formula \eqref{abk} applied to $\ww(\cdot,t)$ gives the large-$k$ expansions
\[
 a(k)=\Id_2+\frac\ii{2k}\int_{\mathbb R}\ww^\dagger\ww\,\dd x+O(k^{-2}), \qquad \det a(k)=1+\frac\ii{2k}\int_{\mathbb R}S\,\dd x+O(k^{-2}).
\]
The orthonormality of the last two columns of the scattering matrix gives $a^\dagger(\Id_2+\gamma^\dagger\gamma)a=\Id_2$, and hence $|\det a|^2=(1+|\gamma|^2)^{-1}$ on $\mathbb R$. Consider the scalar function defined as $\det a(k)$ for $k\in\mathbb C_-$ and as $\bigl[(\det a(k^*))^*\bigr]^{-1}$ for $k\in\mathbb C_+$. This function is analytic in $\mathbb C\setminus\mathbb R$, has bounded inverse by Assumption~\ref{ass:data}, and its upper boundary value equals $(1+|\gamma(k)|^2)$ times its lower boundary value. This scalar jump relation and the normalization to one at infinity give the representation
\[
 \det a(k)=\exp\left[\frac1{2\pi\ii} \int_{\mathbb R}\frac{\log(1+|\gamma(\xi)|^2)}{\xi-k}\,\dd\xi\right], \qquad k\in\mathbb C_-.
\]
Comparing its $k^{-1}$ coefficient with the preceding expansion proves the total-mass trace formula. Since $\det a(k)\ne0$ for all $k\in\mathbb C_-$ by Assumption~\ref{ass:data}, the scalar factorization involves no Blaschke factors.

Expanding \eqref{eq:det_delta} yields the stated $d^{[1]}$. Since the diagonal blocks of $M_1^0$ vanish by Proposition~\ref{prop:model}, the full matrix estimate \eqref{eq:full_coefficient} gives \eqref{eq:diagonal_asymptotics}.\qed
\end{proof}

Let $I(x,t)=\int_{-\infty}^xS(y,t)\,\dd y$. By Proposition~\ref{prop:mass}, subtracting the tail in \eqref{eq:tail_identity} from the total mass and using \eqref{eq:diagonal_asymptotics} gives
\[
 I(x,t)=\frac1\pi\int_{k_0}^\infty\log(1+|\gamma(\xi)|^2)\,\dd\xi+O_C(t^{-1}\log t).
\]
This proves \eqref{eq:mass_asymptotics}. The identity $\qq\qq^\dagger=\ww\ww^\dagger$ gives the same estimate for the cumulative mass of the CKE solution. For real $I$ and $I_0$,
\[
 \begin{aligned}
 &|\Theta(x,t)-\lambda I_0(k_0)|=|\lambda|\,|I(x,t)-I_0(k_0)|=O_C(|\lambda|t^{-1}\log t),\\
 &|e^{\ii\lambda I(x,t)}-e^{\ii\lambda I_0(k_0)}|\le |\lambda|\,|I(x,t)-I_0(k_0)|=O_C(|\lambda|t^{-1}\log t).
 \end{aligned}
\]

Let $\ww_{\mathrm{lead}}$ denote the leading term in \eqref{eq:CNLS_asymptotics}. By \eqref{eq:gauge},
\[
 \qq-\ww_{\mathrm{lead}}e^{\ii\lambda I_0}=(\ww-\ww_{\mathrm{lead}})e^{\ii\lambda I}+\ww_{\mathrm{lead}}(e^{\ii\lambda I}-e^{\ii\lambda I_0}).
\]
The first term on the right-hand side is $O_C(t^{-1}\log t)$ since $|e^{\ii\lambda I}|=1$. The second term is $O_C(|\lambda|t^{-3/2}\log t)$, since $\ww_{\mathrm{lead}}=O_C(t^{-1/2})$. This establishes \eqref{eq:CKE_asymptotics} and completes the proof of Theorem~\ref{thm1}.

\begin{remark}\label{rem:joint_control}
Although the leading term $\ww_{\mathrm{lead}}$ in \eqref{eq:CNLS_asymptotics} satisfies the exact identity $\int_{-\infty}^x|\ww_{\mathrm{lead}}(y,t)|^2\,\dd y=I_0(k_0)$, the pointwise estimate \eqref{eq:CNLS_asymptotics} alone does not yield the cumulative-mass estimate \eqref{eq:mass_asymptotics} by direct spatial integration. Indeed, for fixed $C>0$, the bounds $\ww_{\mathrm{lead}}=O_C(t^{-1/2})$ and $\ww-\ww_{\mathrm{lead}}=O_C(t^{-1}\log t)$ imply that, uniformly for $|y|\le Ct$,
\[
 \bigl||\ww|^2-|\ww_{\mathrm{lead}}|^2\bigr|\le 2|\ww_{\mathrm{lead}}|\,|\ww-\ww_{\mathrm{lead}}|+|\ww-\ww_{\mathrm{lead}}|^2=O_C(t^{-3/2}\log t).
\]
Splitting the integral at $y=-Ct$ and integrating this bound over $[-Ct,x]$ for $|x|\le Ct$ gives
\[
 I(x,t)-I_0(k_0)=\int_{-\infty}^{-Ct}\bigl(|\ww(y,t)|^2-|\ww_{\mathrm{lead}}(y,t)|^2\bigr)\,\dd y+O_C(t^{-1/2}\log t).
\]
On the one hand, the resulting $O_C(t^{-1/2}\log t)$ bound for the integral over $[-Ct,x]$ is weaker than the $O_C(t^{-1}\log t)$ remainder in \eqref{eq:mass_asymptotics}. On the other hand, the pointwise estimate on $|y|\le Ct$ does not control the remaining tail integral over $(-\infty,-Ct)$, and Schwartz decay of $\ww(\cdot,t)$ at each fixed time does not by itself provide the required estimate for this tail difference as $t\to\infty$. In fact, the leading tail mass is
\[
 \int_{-\infty}^{-Ct}|\ww_{\mathrm{lead}}(y,t)|^2\,\dd y=\frac1\pi\int_{C/4}^{\infty}\log(1+|\gamma(\xi)|^2)\,\dd\xi,
\]
which is independent of $t$ and generally nonzero for fixed $C$. Thus, the tail region $y<-Ct$ cannot simply be omitted for fixed $C$, whereas allowing $C$ to grow with $t$ would require quantitative control of the dependence on $C$ in the remainder in \eqref{eq:CNLS_asymptotics}. Proposition~\ref{prop:mass} provides the required control through the exact relation $I(x,t)=\mathcal M_1+2\ii(M_1)_{11}(x,t)$. Since $(M_1^0)_{11}=0$, the full coefficient expansion \eqref{eq:full_coefficient} yields the $O_C(t^{-1}\log t)$ remainder for the cumulative mass and the gauge phase.
\end{remark}

\section{Numerical evaluation of the asymptotic solution}\label{sec:6}

We evaluate the leading term $\qq_{\mathrm{lead}}=(q_{1,\mathrm{lead}},q_{2,\mathrm{lead}})$ in \eqref{eq:CKE_asymptotics} from the reflection coefficient, the scalar integrals, and the unitary endpoint factor $E_0$.

We choose the CNLS initial data
\[
 \ww_0(x)=0.35\bigl(e^{-(x+0.6)^2+0.7\ii x},\,0.8e^{-(x-0.4)^2-0.5\ii x+\ii\pi/3}\bigr).
\]
For $\lambda=0.05$, the gauge transformation \eqref{eq:gauge} gives the CKE initial data
\[
 \qq_0(x)=\ww_0(x)\exp\!\left[\ii\lambda\int_{-\infty}^x|\ww_0(y)|^2\,\dd y\right].
\]
The phase integral is evaluated using the error function. To verify Assumption~\ref{ass:data}, we use the Volterra-series bound
\[
 |a(k)-\Id_2|\le\cosh\!\left(\int_{\mathbb R}|\ww_0(x)|\,\dd x\right)-1<1,\qquad k\in\overline{\mathbb C}_-,
\]
because $\int_{\mathbb R}|\ww_0(x)|\,\dd x\le0.35(1+0.8)\sqrt\pi<\operatorname{arcosh}2$. Thus $a(k)$ is invertible in the closed lower half-plane, and Assumption~\ref{ass:data} holds.

\subsection{Numerical direct scattering}

We compute the Jost solutions of the spatial Lax equation at $t=0$ by mapped Chebyshev collocation \cite[Sections~4.3.2 and~9.2]{TrogdonOlver2016}. The scattering matrix and reflection coefficient are recovered at $x=t=0$ from
\[
 s(k)=\mu_+(0,0;k)^{-1}\mu_-(0,0;k),\qquad a(k)=s_{22}(k),\quad b(k)=s_{12}(k),\quad\gamma(k)=b(k)a(k)^{-1}.
\]
For real $k$, the corrections $\phi_\pm(x;k)=\mu_\pm(x,0;k)-\Id_3$ satisfy
\[
 (\phi_\pm)_x-\ii k[\sigma,\phi_\pm]-U(x,0)\phi_\pm=U(x,0),\qquad \phi_\pm\to0\quad(x\to\pm\infty).
\]

Let $\xi_\ell=\cos[\ell\pi/(N-1)]$, $0\le\ell<N$, be the Chebyshev--Lobatto nodes. We map the two half-lines to $[-1,1]$ by
\[
 \mathcal H_-(x)=2\tanh(0.2x)+1\quad(x\le0),\qquad \mathcal H_+(x)=2\tanh(0.2x)-1\quad(x\ge0).
\]
Set $x_{\pm,\ell}=\mathcal H_\pm^{-1}(\xi_\ell)$, with the endpoint values understood as limits. For the Chebyshev polynomials $T_m$, define
\[
 \mathsf T_{\ell m}=T_m(\xi_\ell),\qquad\mathsf T'_{\ell m}=T'_m(\xi_\ell),\qquad 0\le\ell,m<N.
\]
The differentiation matrices with respect to $x$ are
\[
 \mathsf D_\pm=\diag\bigl(\mathcal H'_\pm(x_{\pm,\ell})\bigr)\mathsf T'\mathsf T^{-1}.
\]
Writing $\phi_{\pm,j}$ for column $j$ of $\phi_\pm$, collocation at the finite nodes gives
\[
 \sum_{m=0}^{N-1}(\mathsf D_\pm)_{\ell m}\phi_{\pm,j}(x_{\pm,m};k)-\bigl[\ii k(\sigma-\sigma_{jj}\Id_3)+U(x_{\pm,\ell},0)\bigr]\phi_{\pm,j}(x_{\pm,\ell};k)=U(x_{\pm,\ell},0)e_j,
\]
where $j=1,2,3$, $\sigma_{jj}$ is the $j$th diagonal entry of $\sigma$, and $e_j$ is the $j$th column of $\Id_3$. At the infinite endpoints, we replace the differential equation by $\phi_-=0$ at $x_{-,N-1}=-\infty$ and $\phi_+=0$ at $x_{+,0}=+\infty$. For each half-line and column, this gives a $3N\times3N$ linear system. Columns two and three share the same coefficient matrix because $\sigma_{22}=\sigma_{33}$.

The values at $x_{-,0}=x_{+,N-1}=0$ give $\mu_\pm(0,0;k)=\Id_3+\phi_\pm(0;k)$. We then solve $\mu_+s=\mu_-$ and $\gamma a=b$ for the scattering matrix and reflection coefficient.

For the spectral interpolation, we use $N=128$ spatial nodes on each half-line and evaluate $\gamma$ at the 321 Chebyshev--Lobatto nodes
\[
 k_m=8\cos(m\pi/320),\qquad 0\le m\le320.
\]
We approximate $\gamma$ by its degree-320 Chebyshev interpolant on $[-8,8]$ and set it to zero outside this interval. In the evaluation of the asymptotic formula below, $\gamma$ denotes this numerical approximation. Figure~\ref{fig:numerical_scattering} shows the initial amplitudes and the computed reflection coefficient.

\begin{figure}[htbp]
 \centering
 \begin{minipage}[t]{.485\linewidth}
  \centering\includegraphics[width=\linewidth]{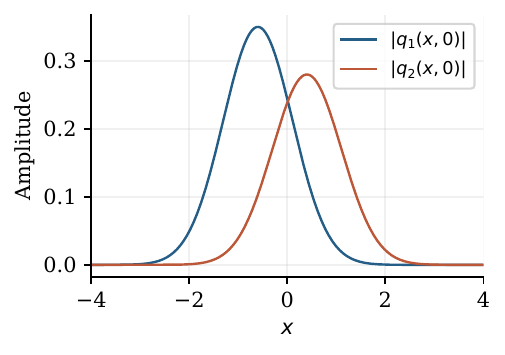}
  {\small (a) Initial amplitudes.\par}
 \end{minipage}\hfill
 \begin{minipage}[t]{.485\linewidth}
  \centering\includegraphics[width=\linewidth]{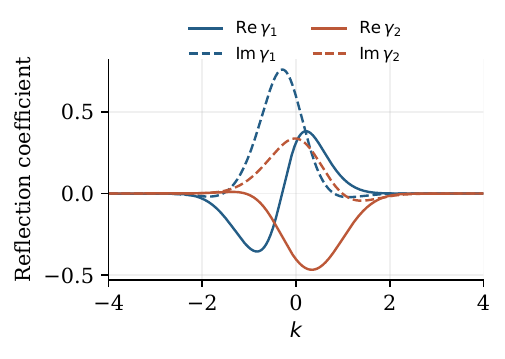}
  {\small (b) Reflection coefficient.\par}
 \end{minipage}
 \caption{(a) Initial amplitudes $|q_1(x,0)|$ and $|q_2(x,0)|$. (b) Real and imaginary parts of the computed reflection coefficient $\gamma=(\gamma_1,\gamma_2)$.}
 \label{fig:numerical_scattering}
\end{figure}

We examine convergence at $k=0$ as $N$ increases. Let $\gamma_j^{(N)}(0)$ denote the computed value for component $j$, using $N$ nodes on each half-line. With the $N=301$ values as a reference, Figure~\ref{fig:direct_convergence} plots
\[
 \bigl|\gamma_j^{(N)}(0)-\gamma_j^{(301)}(0)\bigr|, \qquad j=1,2,\qquad N=10,15,\ldots,300.
\]
The differences decrease to approximately $10^{-12}$ and then level off.

\begin{figure}[htbp]
 \centering
 \begin{minipage}[t]{.485\linewidth}
  \centering\includegraphics[width=\linewidth]{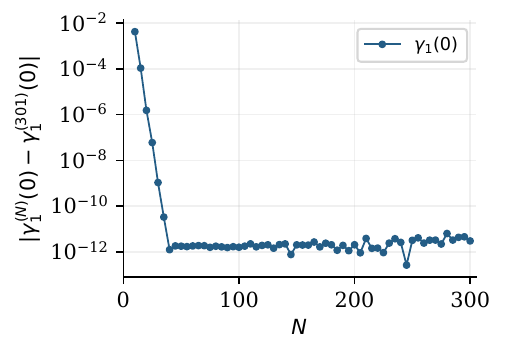}
  {\small (a) $\gamma_1(0)$.\par}
 \end{minipage}\hfill
 \begin{minipage}[t]{.485\linewidth}
  \centering\includegraphics[width=\linewidth]{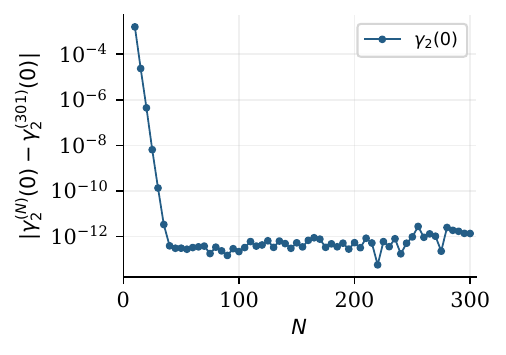}
  {\small (b) $\gamma_2(0)$.\par}
 \end{minipage}
 \caption{Convergence of the reflection coefficient at $k=0$. Each panel shows $|\gamma_j^{(N)}(0)-\gamma_j^{(301)}(0)|$, with $N$ Chebyshev--Lobatto nodes on each half-line.}
 \label{fig:direct_convergence}
\end{figure}

\subsection{Evaluation of the asymptotic formula}

\begin{figure}[htbp]
 \centering
 \begin{minipage}[t]{\linewidth}
  \centering\includegraphics[width=\linewidth]{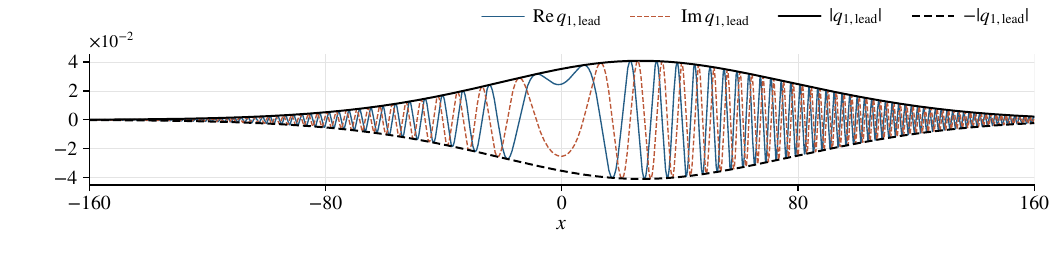}
  {\small (a) $q_{1,\mathrm{lead}}$, $t=20$.\par}
 \end{minipage}
 \par\vspace{1mm}
 \begin{minipage}[t]{\linewidth}
  \centering\includegraphics[width=\linewidth]{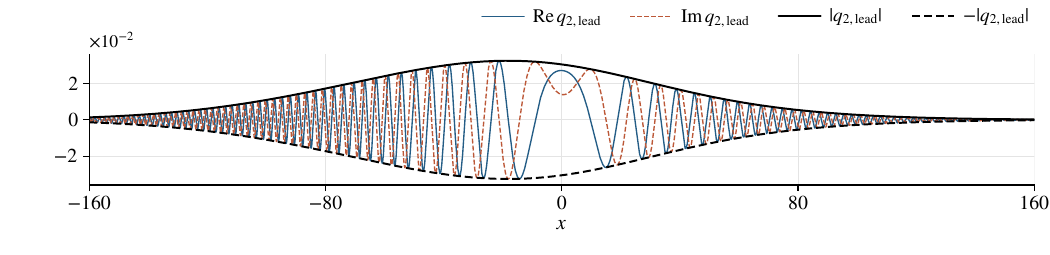}
  {\small (b) $q_{2,\mathrm{lead}}$, $t=20$.\par}
 \end{minipage}
 \par\vspace{1mm}
 \begin{minipage}[t]{\linewidth}
  \centering\includegraphics[width=\linewidth]{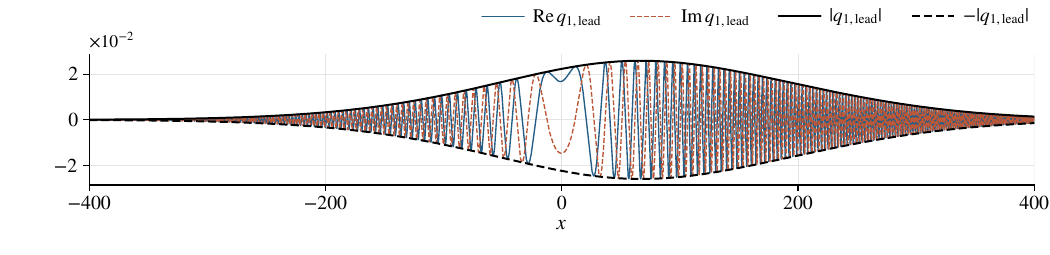}
  {\small (c) $q_{1,\mathrm{lead}}$, $t=50$.\par}
 \end{minipage}
 \par\vspace{1mm}
 \begin{minipage}[t]{\linewidth}
  \centering\includegraphics[width=\linewidth]{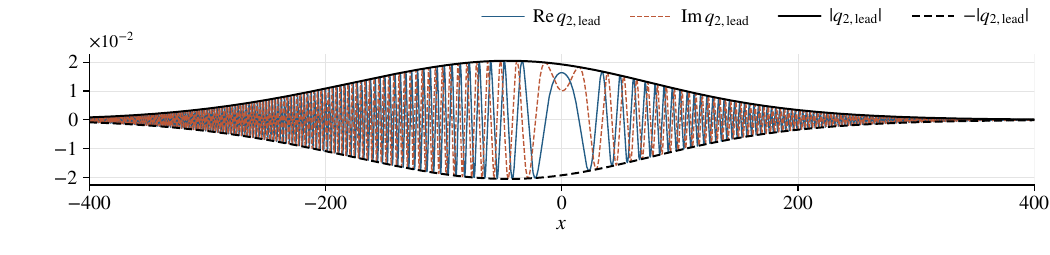}
  {\small (d) $q_{2,\mathrm{lead}}$, $t=50$.\par}
 \end{minipage}
 \par
 \caption{The asymptotic solution $\qq_{\mathrm{lead}}$ for $\lambda=0.05$ at $t=20$ and $t=50$. Blue and orange curves show the real and imaginary parts of $q_{j,\mathrm{lead}}$, respectively. Black solid curves show $|q_{j,\mathrm{lead}}|$, and black dashed curves show $-|q_{j,\mathrm{lead}}|$. The spatial intervals are $[-160,160]$ at $t=20$ and $[-400,400]$ at $t=50$.}\label{fig:complex_asymptotics}
\end{figure}

For $k_0\in[-3,3]$, we compute $\nu(k_0)$ and $A_0(k_0)$ from \eqref{eq:endpoint_data}, taking $A_0(k_0)=0$ when $\gamma(k_0)=0$. Using the truncated reflection coefficient, we compute $\chi_0(k_0)$ and $I_0(k_0)$ from
\[
 \begin{aligned}
 \chi_0(k_0)&\approx\frac1{2\pi\ii}\biggl[\int_{-8}^{k_0-1}\frac{\log(1+|\gamma(\xi)|^2)}{\xi-k_0}\,\dd\xi+\int_{k_0-1}^{k_0}\frac1{\xi-k_0}\log\frac{1+|\gamma(\xi)|^2}{1+|\gamma(k_0)|^2}\,\dd\xi\biggr],\\
 I_0(k_0)&\approx\frac1\pi\int_{k_0}^{8}\log(1+|\gamma(\xi)|^2)\,\dd\xi.
 \end{aligned}
\]
We evaluate each integral with a 192-point Gauss--Legendre rule. In the second integral defining $\chi_0$, the endpoint singularity is removable. The integrand has the limit $2\re(\gamma'(k_0)\gamma(k_0)^\dagger)/(1+|\gamma(k_0)|^2)$ as $\xi\to k_0$.

To compute $E_0(k_0)$, we factor out the endpoint oscillation of $\delta$ while preserving its normalization at infinity. For fixed $k_0$, set
\[
 \mathcal B(k;k_0)=\exp\!\left[-\ii A_0(k_0)e^{k-k_0}\mathrm E_1(k-k_0)\right],\qquad \mathcal F(k;k_0)=\mathcal B(k;k_0)^{-1}\delta(k;k_0),
\]
where $\mathrm E_1$ is the principal-branch exponential integral, obtained by analytic continuation of $\int_z^\infty e^{-\xi}\xi^{-1}\,\dd\xi$ from $z>0$ to $\mathbb C\setminus(-\infty,0]$. Near zero, $\mathrm E_1(z)=-\gamma_{\mathrm E}-\log z+O(z)$, where $\gamma_{\mathrm E}$ is the Euler--Mascheroni constant. Comparing $\mathcal B$ with $\mathcal D(k)=\exp[\ii A_0(k_0)\log(k-k_0)]$ from Lemma~\ref{lem:endpoint} gives
\[
 \lim_{k\downarrow k_0}\mathcal D(k)^{-1}\mathcal B(k;k_0)=\exp[\ii\gamma_{\mathrm E}A_0(k_0)].
\]
Thus, Lemma~\ref{lem:endpoint} implies that $\mathcal F$ has a finite limit as $k\downarrow k_0$. Since $\mathcal B(k;k_0)\to\Id_2$ as $k\to\infty$, $\mathcal F$ also retains the normalization $\mathcal F(\infty;k_0)=\Id_2$.

For $r>0$, the boundary values of $\mathcal B$ are
\[
 \mathcal B_\pm(k_0-r;k_0)=\exp\!\left[\ii A_0(k_0)e^{-r}\mathrm{Ei}(r)\mp\pi A_0(k_0)e^{-r}\right],
\]
where $\mathrm{Ei}(r)=-\re\mathrm E_1(-r+\ii0)$ is real. The transformed RH problem is
\[
 \begin{aligned}
 &\mathcal F_+=J_{\mathcal F}\mathcal F_-,\qquad\mathcal F(\infty;k_0)=\Id_2,\\
 &J_{\mathcal F}(\xi;k_0)=\mathcal B_+(\xi;k_0)^{-1}(\Id_2+\gamma(\xi)^\dagger\gamma(\xi))\mathcal B_-(\xi;k_0).
 \end{aligned}
\]
Following the framework of \cite[Section~6.2]{TrogdonOlver2016}, we use piecewise polynomial collocation to solve the singular integral equation
\[
 \mathcal F_-=\Id_2+C_-\!\left[(J_{\mathcal F}-\Id_2)\mathcal F_-\right].
\]
Here, $C_-$ is the Cauchy boundary operator on $(-\infty,k_0]$, oriented from left to right. The regularized jump satisfies $J_{\mathcal F}(k_0-r;k_0)-\Id_2=O(r)$ as $r\downarrow0$, and $\mathcal F_-$ remains bounded by Lemma~\ref{lem:endpoint}.

Set $r=k_0-\xi$. For the truncated reflection coefficient used here, $J_{\mathcal F}(k_0-r;k_0)-\Id_2$ decays exponentially as $r\to\infty$. We truncate the integral equation at $r=40$ and divide $[0,40]$ at
\[
 0,\ 10^{-7},\ 10^{-6},\ \ldots,\ 10^{-1},\ \tfrac14,\ \tfrac12,\ 1,\ 2,\ 4,\ 8,\ 16,\ 24,\ 32,\ 40.
\]
On each of the 17 panels, we interpolate the density $(J_{\mathcal F}-\Id_2)\mathcal F_-$ at 16 Gauss--Legendre nodes and collocate the integral equation at the same nodes. Let $r_i$, $1\le i\le272$, denote the collocation nodes, and let $\ell_j$ be the Lagrange basis function associated with $r_j$. It is a polynomial of degree at most 15 on the panel containing $r_j$, vanishes on the other panels, and satisfies $\ell_j(r_i)=\mathbf1_{\{i=j\}}$. Integrating these basis functions against the Cauchy kernel gives the matrix entries
\[
 (\mathsf C_-)_{ij}=-\frac1{2\pi\ii}\operatorname{pv}\!\int_0^{40}\frac{\ell_j(r)}{r-r_i}\,\dd r-\frac12\mathbf1_{\{i=j\}}.
\]
Here, $\operatorname{pv}$ denotes the Cauchy principal value, and $\mathbf1_{\{i=j\}}$ is one for $i=j$ and zero otherwise. The substitution $r=k_0-\xi$ gives the factor $-1/(2\pi\ii)$, while the term $-\tfrac12\mathbf1_{\{i=j\}}$ comes from the lower boundary value in the Plemelj formula. On a panel $[a,b]$ containing both $r_i$ and $r_j$,
\[
 \operatorname{pv}\!\int_a^b\frac{\ell_j(r)}{r-r_i}\,\dd r=\int_a^b\frac{\ell_j(r)-\ell_j(r_i)}{r-r_i}\,\dd r+\ell_j(r_i)\log\frac{b-r_i}{r_i-a}.
\]
The first integrand is a polynomial of degree at most 14, with its value at $r_i$ defined by continuity. The integrals over the other panels are nonsingular. Collocation gives the linear system
\[
 \mathcal F_-(k_0-r_i;k_0)-\sum_j(\mathsf C_-)_{ij}[J_{\mathcal F}(k_0-r_j;k_0)-\Id_2]\mathcal F_-(k_0-r_j;k_0)=\Id_2.
\]
The 272 collocation nodes yield a $544\times544$ linear system with two right-hand sides, one for each column of $\mathcal F_-$. We recover the endpoint limit $\mathcal F(k_0;k_0)=\lim_{k\downarrow k_0}\mathcal F(k;k_0)$ by applying the 16-point Gauss--Legendre rule on each panel to the integral below. The relation between $\mathcal B$ and $\mathcal D$ then gives $E_0(k_0)$:
\[
 \begin{aligned}
 &\mathcal F(k_0;k_0)\approx\Id_2-\frac1{2\pi\ii}\int_0^{40}\frac{[J_{\mathcal F}(k_0-r;k_0)-\Id_2]\mathcal F_-(k_0-r;k_0)}{r}\,\dd r,\\
 &E_0(k_0)=\exp[\ii\gamma_{\mathrm E}A_0(k_0)]\mathcal F(k_0;k_0).
 \end{aligned}
\]
The numerator is $O(r)$, so the endpoint integrand remains bounded as $r\downarrow0$. When $\gamma(k_0)=0$, we have $A_0(k_0)=0$ and $\mathcal B\equiv\Id_2$, and the same procedure applies.

We tabulate $E_0$, $\chi_0$, and $I_0$ at the 257 Chebyshev--Lobatto nodes $k_{0,n}=3\cos(n\pi/256)$, $0\le n\le256$. We use degree-256 Chebyshev interpolation between these nodes, applied entrywise to $E_0$. The parameter $\nu$ is evaluated directly from the interpolated reflection coefficient.

At the 257 tabulation nodes, the maximum values of $|E_0^\dagger E_0-\Id_2|$ and $|\det E_0-e^{\chi_0}|$ are $3.16\times10^{-9}$ and $1.32\times10^{-9}$, respectively.

Evaluating the computed quantities at $k_0=-x/(4t)$ gives $\qq_{\mathrm{lead}}$. Figure~\ref{fig:complex_asymptotics} shows both components at $t=20$ and $t=50$ on $[-8t,8t]$, corresponding to $k_0\in[-2,2]$. At fixed $x/t$, the component envelopes scale as $t^{-1/2}$. For fixed CNLS initial data $\ww_0$, $\lambda$ enters only through the phase factor $e^{\ii\lambda I_0(k_0)}$ and leaves the envelopes unchanged.

\medskip
\noindent \textbf{Acknowledgements} The authors thank Dr. Shikun Cui for valuable suggestions. This work is supported by the Kaiyuan International Mathematical Sciences Institute, the Fundamental Research Funds for the Central Universities (JKF2025077492962), and the Academic Excellence Foundation of BUAA for Ph.D. Students.

During the preparation of this manuscript, the authors used OpenAI GPT-6.1 Sol and GPT-6 Astra to assist with mathematical derivations. They reviewed, verified, and edited all AI-generated content and take full responsibility for its correctness and for the final manuscript.

\section*{Declarations}
\noindent \textbf{Conflict of interest} The authors declare that they have no conflict of interest.

\noindent\textbf{Data availability} The computational materials supporting this work are not publicly available. Requests for further information should be addressed to the authors.

\end{sloppypar}
\end{document}